\documentclass[
a4paper,twocolumn,11pt,accepted=2023-12-03]{quantumarticle}
\pdfoutput=1
\usepackage[utf8]{inputenc}
\usepackage[english]{babel}
\usepackage[T1]{fontenc}

\usepackage[pdftex]{graphicx}
\usepackage{amsmath,amsfonts,amsbsy,amssymb,amsthm,mathtools}
\usepackage{mathrsfs}
\usepackage{epstopdf}
\usepackage{ulem,color}
\usepackage{enumitem}
\usepackage{siunitx}
\usepackage{amssymb,amsmath}
\usepackage{ulem,color}



\usepackage{pgfplots}
\usepackage{pstool}
\usepackage{tikzsymbols}
\usetikzlibrary{plotmarks,spy}

\usepackage[font=small,labelfont=bf,justification=justified,format=plain]{caption}

\newtheorem{prop}{Proposition}
\newtheorem{lemma}{Lemma}

\theoremstyle{definition}
\newtheorem{definition}{Definition}
\newtheorem{conjecture}{Conjecture}

\newcommand{\bbsout}[1]{}

\newcommand{\bsout}[1]{}
\usepackage{siunitx}
\usepackage{hyperref}
\usepackage[T1]{fontenc}
\usepackage{mwe}



\newcommand{\bra}[1]{\langle #1 |}
\newcommand{\ket}[1]{| #1 \rangle}
\newcommand{\ketbra}[2]{\left|{#1}\right\>\mkern-7mu\left\<{#2}\right|}
\newcommand{\tr}[1]{\textrm{Tr}\left[ #1 \right]}

\newcommand{\mg}[1]{\textcolor{blue}{[MG: #1]}}

\renewcommand{\>}{\rangle}
\newcommand{\<}{\langle}
\newcommand{\N}{{\mathbb{N}}} 
\newcommand{\C}{{\mathbb{C}}} 
\newcommand{\R}{{\mathbb{R}}} 

\renewcommand{\th}{{\mathrm{th}}}
\newcommand{\res}{{\mathrm{res}}}
\newcommand{\env}{{\mathrm{env}}}
\newcommand{\dts}{{\mathrm{dts}}}

\newcommand{\fwidth}{0.75\columnwidth}
\newcommand{\fheight}{0.463535\columnwidth}

\renewcommand{\H}{\mathcal{H}}

\newcommand{\ANU}{Centre for Quantum Computation and Communication Technology, Department of Quantum Science,
Research School of Physics and Engineering, Australian National University, Canberra ACT, Australia 2601.\looseness=-1}
\newcommand{\NTU}{Nanyang Quantum Hub, School of Physical and Mathematical Sciences, Nanyang Technological University, Singapore 639673.\looseness=-1}
\newcommand{\CQT}{Centre for Quantum Technologies,
National University of Singapore, 3 Science Drive 2, Singapore 117543.\looseness=-1}

\newcommand{\IMRE}{A*STAR Quantum Innovation Centre (Q.InC), Institute of Materials Research and Engineering (IMRE), Agency for Science Technology and Research (A*STAR), 2 Fusionopolis Way, 08-03 Innovis, Singapore 138634.\looseness=-1}

\newcommand{\Majulab}{CNRS-UNS-NUS-NTU International Joint Research Unit, UMI 3654, Singapore 117543.\looseness=-1}
\newcommand{\QINC}{A*STAR Quantum Innovation Centre (Q.InC), Institute of High Performance Computing (IHPC), Agency for Science, Technology and Research (A*STAR), Singapore.\looseness=-1}

\pgfplotsset{compat=newest,height=8cm,width=8cm}

\tikzstyle{allfigs}=[line width=1pt,samples=100,domain=0:20]

\pgfplotsset{
  table/search path={data,data/finiteT,data/vacRes},
  myaxis/.style=
  {    width=\fwidth,
    height=\fheight,
    at={(0,0)},
scale only axis,
xmin=0,
xmax=20,
xlabel={$|\alpha_\text{max}|^2$},
ymin=0,
ymax=6,
ylabel={Mutual information},
legend pos=north west,
legend style={legend cell align=left,align=left,
  fill=white,fill opacity=0.8,
  draw=black,text opacity=1}}}

\pgfplotsset{
  table/search path={data,data/finiteT,data/vacRes},
  mywigaxis/.style=
  {
legend pos=north east,
legend style={legend cell align=left,align=left,
  fill=white,fill opacity=0.8,
  draw=black,text opacity=1}}}


\usepackage{tikz}
\usetikzlibrary{quantikz}
\usepackage{subcaption}
\captionsetup[figure]{font=small,labelfont=bf,justification=justified, format=plain}


\newcommand{\at}[1]{{\textcolor{orange}{[#1]}}}

\renewcommand{\at}[1]{}
\renewcommand{\mg}[1]{}

\begin{document}

\normalem

\newlength\figHeight 
\newlength\figWidth 

\title{Quantum-optimal information encoding using noisy passive linear optics}

\author{Andrew Tanggara}
\email{andrew.tanggara@gmail.com}
\affiliation{\CQT}
\affiliation{\NTU}

\author{Ranjith Nair}
\affiliation{\NTU}

\author{Syed Assad}
\affiliation{\ANU}
\affiliation{\IMRE}

\author{Varun Narasimhachar}
\affiliation{\QINC}
\affiliation{\NTU}

\author{Spyros Tserkis}
\affiliation{\ANU}

\author{Jayne Thompson}
\affiliation{\QINC}

\author{Ping Koy Lam}
\affiliation{\ANU}
\affiliation{\IMRE}

\author{Mile Gu}
\email{mgu@quantumcomplexity.org}
\affiliation{\NTU}
\affiliation{\CQT}
\affiliation{\Majulab}

\maketitle

\begin{abstract}
The amount of information that a noisy channel can transmit has been one of the primary subjects of interest in information theory.
In this work we consider a practically-motivated family of optical quantum channels that can be implemented without an external energy source.
We optimize the Holevo information over procedures that encode information in attenuations and phase-shifts applied by these channels on a resource state of finite energy.
It is shown that for any given input state and environment temperature, the maximum Holevo information can be achieved by an encoding procedure that uniformly distributes the channel’s phase-shift parameter.
Moreover for large families of input states, any maximizing encoding scheme has a finite number of channel attenuation values, simplifying the codewords to a finite number of rings around the origin in the output phase space.
The above results and numerical evidence suggests that this property holds for all resource states.
Our results are directly applicable to the quantum reading of an optical memory in the presence of environmental thermal noise.
\end{abstract}

\section{Introduction}

For a classical channel between a sender (Alice) and receiver (Bob), the rate at which information can be noiselessly transmitted from Alice and Bob for a given probability  distribution of the input symbols equals the Shannon mutual information between the input and output \cite{shannon_1948mathematical}.
Maximizing the mutual information over all input probability distributions gives the capacity of the channel, which can be achieved using encoding and decoding over long codes.
Quantum-mechanically, the analog of an input distribution is an ensemble of quantum states indexed by Alice's input alphabet, and the Holevo information of the ensemble \cite{Holevo1973} replaces the mutual information in the classical case.
If Alice encodes her messages in suitable sequences of states and Bob can make measurement on all received output states at once, communication at a rate equals to the Holevo information is possible with arbitrarily small error probability \cite{Holevo1998,Schumacher1997}. 
Optimizing over the input probability distribution thus gives the ultimate capacity of encoding information in the given set of quantum states.


\mg{Should reorder by motivating the big question first, e.g. how much information can we store encode onto a state without use of external energy? After convincing readers why this question is interesting, we can then continue about how in this paper we can address the problem using Thermal operations.}
\at{OK, did this somewhat, but not sure if it is good enough.}

Optical quantum channels are an important class of quantum channels for which the Holevo information has been very well studied in the context of different tasks, such as communication \cite{Giovannetti2014}, optical memory reading \cite{Hsu2006,Pirandola2011,Pirandola2011_2,Guha2011,Wilde2012,Guha2013}, algorithmic cooling \cite{serafini2020gaussian}, as well as in pure-loss quantum optical channels \cite{giovannetti2004classical}.
Given its many applications, we focus on optical quantum channels that separates out the sources of energy, thus rendering energy as a resource in performing the task at hand.
This notion of energy as resource has been studied in \cite{Narasimhachar2019} for thermodynamic tasks.

\begin{figure}[t]
    \centering
    \includegraphics[width=1.2\columnwidth]{./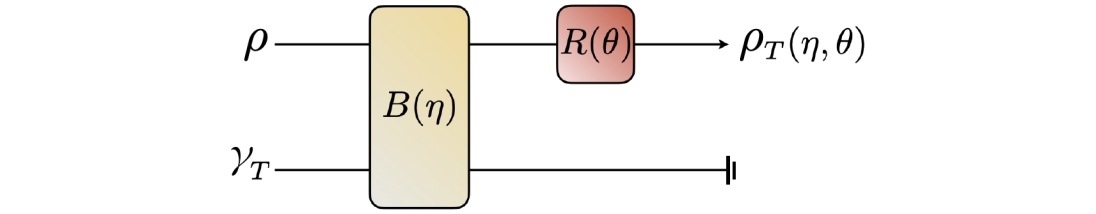}
    \caption{\textbf{Thermal channel circuit representation.} Input "resource" state $\rho$ is mixed with an "environment" in a thermal state $\gamma_T$ by a beamsplitter $B_\eta$ of transmittance $\eta$ and then undergoes a $\theta$ phase-rotation operation $R_\theta$, giving an output codeword state $\rho_T(\eta,\theta)$.}
    \label{fig:thermal_channel}
\end{figure}

In what follows, we consider encoding information by optimally applying passive linear-optical operations on a given finite-energy resource state $\rho$.
Such operations, which we call \emph{thermal channels} (illustrated in Fig~\ref{fig:thermal_channel}) are optical quantum channels that can be constructed simply by mixing the given input state with the environment at a certain temperature, followed by a phase-shift operation.
The average energy of every state of such output ensembles is bounded above.
Such peak energy constraints have physical meaning as the greatest amount of energy that can be tolerated by a given device or channel. 
In addition, technological limitations may impose such constraints -- for example, it is very difficult to generate Fock states of large occupation number.
Even for coherent-state sources, such constraints can be severe in some applications. 
For example, a satellite-based laser communication system may be highly constrained in its energy output by payload limitations and energy scarcity in space.
Note that a peak energy constraint on an ensemble is more restrictive than an overall average energy constraint, which has been the subject of earlier work.
Indeed, the capacity-achieving ensembles for communication on a large class of bosonic Gaussian channels are circular Gaussian distributions of coherent states in phase space \cite{Giovannetti2014,giovannetti2004classical,holevo_2019quantum}, which are are clearly not peak-constrained.

In addition, we also note that by adopting the thermal channel model we obtain a finite-temperature generalization of existing results that assumed a zero-temperature (i.e. vacuum) environment. 
These include past studies of tasks such as the quantum reading of optical memory \cite{Wilde2012,Guha2013} (illustrated in Fig.~\ref{fig:QCD}) and communication over the pure-loss channel \cite{giovannetti2004classical}.

\begin{figure}[t]
    \centering
    \includegraphics[width=\columnwidth]{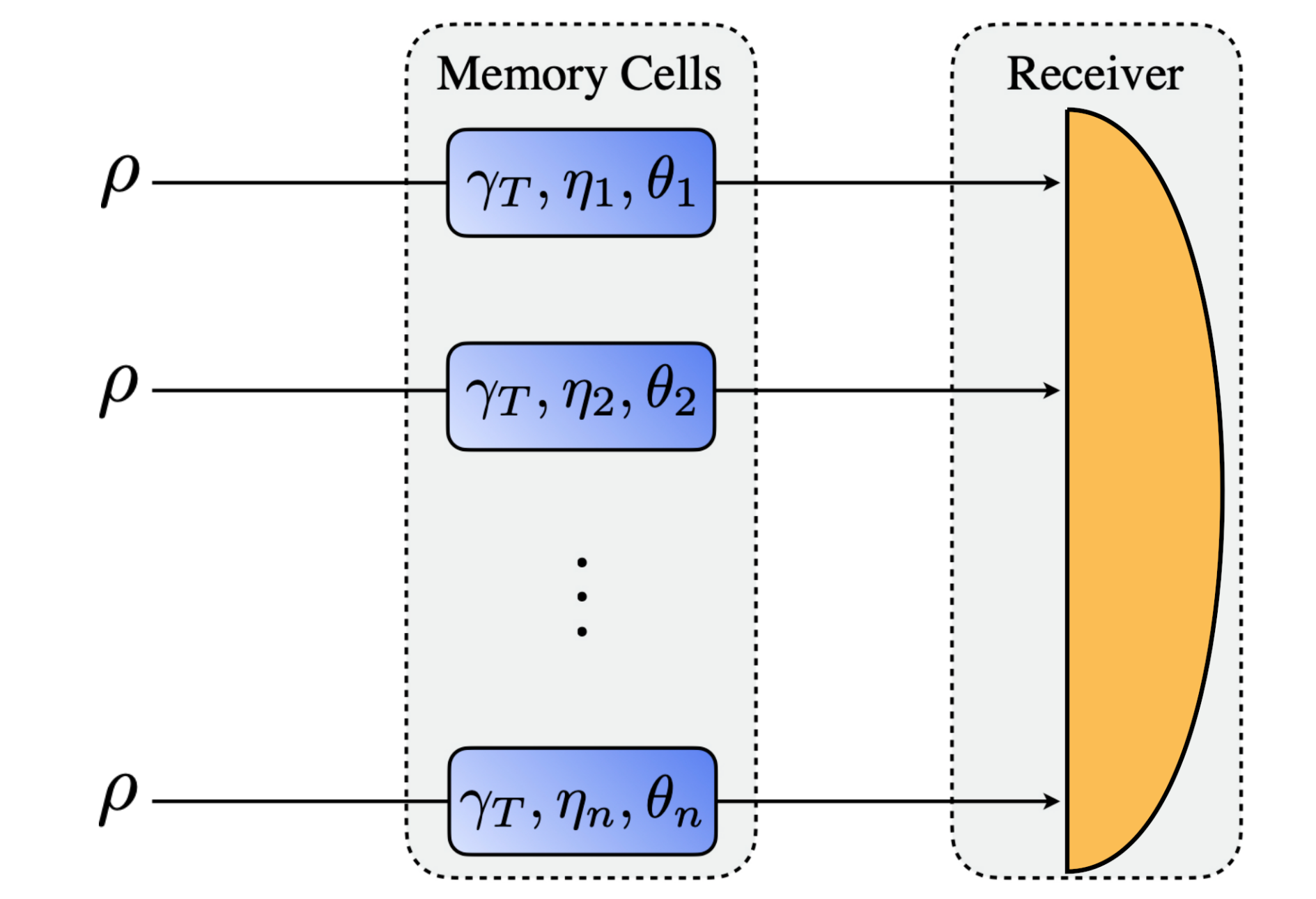}
    \caption{\textbf{Thermal channel memory cell reading.} Each memory cell is a thermal channel with temperature $T$ parameterized by attenuation $\eta_j$ and phase $\theta_j$, transforming an input resource state $\rho$ to an output state to be measured by the receiver.}
    \label{fig:QCD}
\end{figure}

In this work, we investigate the information capacity of such a thermal encoding protocol as quantified by its Holevo information.
First we show that given an arbitrary input state and an arbitrary environment temperature, the encoding that maximizes the Holevo information is characterized by an independent distribution of the attenuation coefficients and the phase shift values, with the latter being uniformly distributed -- we call such encodings circularly symmetric encodings.
We then derive necessary and sufficient information-theoretic conditions that characterize the optimal distribution of the attenuation coefficient for such circularly symmetric encodings.
For the case of a coherent-state resource at zero temperature and a thermal state at any temperature, we show analytically that the optimal encoding involves only a finite number of attenuation values.
In addition, numerical results based on the aforementioned information-theoretic optimality conditions also show that this property holds in various cases where the output states are displaced thermal states.
Based on this combination of analytical and numerical evidence, we conjecture that circular symmetric encodings with a finite number of attenuation values maximize the Holevo information for any resource state.

This paper is organized as follows:
In section~\ref{sec_framework}, we lay down the formal definition of a thermal channel and then present the general properties of an optimal encoding.
In Section~\ref{sec_0temp_environment}, we focus on the case of a zero-temperature environment and obtain optimal encodings when the resource is a coherent state or a displaced thermal state.
In Section~\ref{sec_mixed_state}, we discuss a nonzero-temperature environment with a coherent-state or thermal-state resource.
We close with Section~\ref{sec:discussion}, which briefly discusses the implications of our results.
The technical proofs are relegated to the appendices.

\mg{Extensive coverage of the classical case is likely unnecessary, and certain too diverts readers from key ideas in the introduction. Probably doesn't make much sense till after framework}
\mg{Too much exposition on possibly related work before we get to main result. Recommending moving most of this to discussion where it'd make more sense. Last paragraph of introduction should be describing what we did.}\at{OK. moved all of these to relevant sections.}





\section{Thermal encodings: definitions and optimization}\label{sec_framework}

\subsection{Thermal operations}\label{sec_thermal_operation}

\mg{Need to be careful. We are talking only about passive thermal operations, and are not all thermal operations passive operations? Thermal operations are also generally defined different e.g. energy conserving interaction with a Gibbs state at temperature T. What we mean here is Gaussian Thermal Operations? Need to briefly describe and motivate this class.}
\at{OK, I've made up the name "thermal channel".}
In this section, we first define \emph{thermal channels} whose output states constitute the potential codewords for the channel.
Consider a given input \emph{resource state} with energy $E<\infty$ represented by density operator $\rho$ in Hilbert space $\H_A$ and thermal state with temperature $T$ as the \emph{environment state} represented by density operator $\gamma_T = \frac{e^{-\beta H}}{\mathrm{Tr}(e^{-\beta H})}$ in Hilbert space $\H_E$ for a Hamiltonian $H = \hat{a}_E^\dag\hat{a}_E$ (where $\hat{a}_E^\dag$ and $\hat{a}_E$ are the creation and annihilation operators on $\H_E$, respectively) and inverse temperature $\beta=\frac{1}{kT}$.
Where it is convenient to express a thermal state in terms of its mean photon number $N=\<\hat{a}_E^\dag\hat{a}_E\>$, we write $\rho_\th(N) = \sum_{n=0}^\infty \frac{N^n}{(N+1)^{n+1}} \ketbra{n}{n}$.
We consider only phase-rotation operations and mixing operations with a thermal state.
These operations do not require an external energy source and have been studied in~\cite{Narasimhachar2019,serafini2020gaussian}.
The thermal channel consists of exactly each of these two operations on the resource state: an operation mixing $\rho$ and $\gamma_T$ followed by a noiseless phase-rotation operation.

\begin{definition}[Thermal channel]
    A \textit{thermal channel} with attenuation $\eta\in[0,1]$, phase rotation $\theta\in[-\pi,\pi)$, and environment temperature $T\geq0$ is a quantum channel $\mathcal{T}_{\eta,\theta}^{(T)}$ whose action on a given input state $\rho\in\H_A$ is defined as
    \begin{align*}
        \mathcal{T}_{\eta,\theta}^{(T)}(\rho) = \mathrm{Tr}_E\left(  R_\theta\; B_\eta (\rho\otimes\gamma_T) B_\eta^\dag\; R_\theta^\dag \right)
    \end{align*}
    where $B_\eta = e^{\varphi_\eta(\hat{a}_A^\dag\hat{a}_E - \hat{a}_E^\dag\hat{a}_A)}$ is a beamsplitter operation with transmissivity $\eta = \cos^2\varphi_\eta$ and $R_\theta = e^{-i\theta \hat{a}_A^\dag\hat{a}_A}$ is a phase-rotation of $\theta$ on $\H_A$.
    For a circuit representation, see Fig.~\ref{fig:thermal_channel}.
\end{definition}

Given thermal channel with temperature $T$, attenuation $\eta$, and phase $\theta$, its output is called a codeword and is denoted by $\rho_T(\eta,\theta) = \mathcal{T}_{\eta,\theta}^{(T)}(\rho)$ or simply $\rho(\eta,\theta)$ when temperature $T$ is clear from the context. 
A \emph{thermal encoding} is defined by a cumulative distribution function $F(\eta,\theta)$ over attenuation $\eta$ and phase $\theta$ of the thermal channel which generates the ensemble of codewords $\{\rho(\eta,\theta), dF(\eta,\theta)\}_{\eta,\theta}$.
For a thermal encoding $F$ with environment temperature $T$ acting on resource state $\rho$, the amount of classical information that can be encoded onto the output quantum state is given by the Holevo information~\cite{Holevo1973,Holevo1998,Schumacher1997}:
\begin{align}
  \chi_{te}[F] &= S(\rho_\text{ave}) - \int dF(\eta,\theta) S(\rho_T(\eta,\theta))\;,
\end{align}
where $S$ is the von Neumann entropy and $\rho_\text{ave} = \int dF(\eta,\theta) \rho_T(\eta,\theta)$ is the averaged state of the ensemble.
All information quantities throughout this paper are in bits, hence all $\log$ functions are base $2$.
Given a resource state $\rho$ and temperature $T$, our task is to find the thermal encoding $F^*$ that maximises the Holevo information $\chi_{te}[F]$.
We note that $\chi_{te}$ is an adaptation of the thermal information capacity proposed in~\cite{Narasimhachar2019_2} which quantifies the optimal Holevo information over distribution of passive thermal operations.

\subsection{Characterizations of optimal thermal encodings}

\mg{Is $\phi$ a state or a density matrix? Notation requires clarification}\at{OK, made this clear above}

In this section, we state the properties of thermal encodings $F$ that maximize information capacity of the thermal channel $\chi_{te}[F]$ given resource state $\rho$ and environment temperature $T$.
Interestingly, it turns out that these properties share many features with the optimal encodings for classical additive white Gaussian noise (AWGN) channels where the signals satisfy a \emph{peak} energy constraint $E$ (see discussions in Section~\ref{sec:discussion}).
Before we proceed to state these properties, we define circularly symmetric encodings which play an important role in the optimality conditions.

\begin{definition}\label{def:rings}
    A thermal encoding $F'$ such that 
    $F'(\eta,\theta) = F(\eta) \frac{\theta+\pi}{2\pi}$ is a \textit{circularly symmetric encoding}.
    Given a resource state $\rho$, for each $\eta$, we call $\rho(\eta) = \int \frac{d\theta}{2\pi} \rho(\eta,\theta)$ a \textit{ring state}.
\end{definition}

A ring state $\rho(\eta)$ is a mixture of codewords $\{\rho(\eta,\theta)\}_\theta$ in which the phase $\theta$ is distributed uniformly around the origin of the phase space, thus forming a ring.
Hence we can write the average output state as a mixture of ring states
\begin{align*}
    \rho_\text{ave}= \int dF(\eta) \rho(\eta) \;.
\end{align*}
Both $\rho(\eta)$ and $\rho_\mathrm{ave}$ are diagonal in the Fock basis with their $n$-th diagonal entry being $P_n[\rho(\eta)] = \left\langle n |\rho(\eta)|n\right\rangle$ and
\begin{align}
P_n[F] &= \left\langle n |\rho_\text{ave}|n\right\rangle = \int dF(\eta) \,P_n[\rho(\eta)] \;, \label{eq:19}
\end{align}
respectively.
For a proof, see Lemma \ref{lem:avg_phase_diagonal_fock_basis_state}.
This family of encodings is central to the first optimality condition.

\begin{prop}\label{prop_1}
    Given a resource state $\rho$ and channel temperature $T$, for any encoding $F$, there exists a circularly symmetric encoding $F'$ such that $\chi_{te}[F'] \geq \chi_{te}[F]$.
\end{prop}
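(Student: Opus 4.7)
The plan is to build $F'$ by averaging $F$ over the phase-rotation group $\{R_\phi\}_{\phi\in[-\pi,\pi)}$. Concretely, let $dF(\eta)=\int_{-\pi}^{\pi} dF(\eta,\theta)$ denote the attenuation marginal of $F$ and take $F'$ to be the circularly symmetric encoding with this same marginal, namely $F'(\eta,\theta)=F(\eta)(\theta+\pi)/(2\pi)$ as in Definition~\ref{def:rings}. I would then establish $\chi_{te}[F']\geq\chi_{te}[F]$ by comparing the two Holevo terms separately.

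For the conditional-entropy term, I note that the defining circuit of $\mathcal{T}_{\eta,\theta}^{(T)}$ factors as $R_\theta$ acting after $\mathcal{T}_{\eta,0}^{(T)}$, because $R_\theta$ is supported on $\H_A$ and so commutes through the partial trace over $\H_E$. Hence $\psi(\eta,\theta)=R_\theta\,\psi(\eta,0)\,R_\theta^\dagger$, and unitary invariance of $S$ makes $S(\psi(\eta,\theta))$ depend only on $\eta$. Integrating the phase out of either $F$ or $F'$ therefore yields the common value $\int dF(\eta)\,S(\psi(\eta,0))$, so the two conditional-entropy terms coincide exactly.

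For the averaged-state term, I would show that the averaged state $\psi'_\text{ave}$ of $F'$ is precisely the phase-twirl of the averaged state $\psi_\text{ave}$ of $F$. Expanding $\psi'_\text{ave}=\int dF(\eta)\,\psi(\eta)=\int dF(\eta,\theta)\,\psi(\eta)$ and applying the change of variables $\theta'=\theta+\phi$ on the circle inside the definition of the ring state $\psi(\eta)$ gives $\psi'_\text{ave}=\int \frac{d\phi}{2\pi}\,R_\phi\,\psi_\text{ave}\,R_\phi^\dagger$. Concavity of the von~Neumann entropy then delivers
\begin{align*}
S(\psi'_\text{ave}) \;\geq\; \int \frac{d\phi}{2\pi}\, S\!\left(R_\phi\,\psi_\text{ave}\,R_\phi^\dagger\right) \;=\; S(\psi_\text{ave}),
\end{align*}
and combining this inequality with the equality of conditional entropies yields the claim.

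This is essentially a standard group-averaging/twirling argument, so I do not foresee a real conceptual obstacle. The only care required is justifying the Fubini interchange and the change of variables on the torus $[-\pi,\pi)$ for a general Borel probability measure $F$ on $[0,1]\times[-\pi,\pi)$; both are routine because the relevant operator-valued integrands are bounded in trace norm and jointly measurable in $(\eta,\theta,\phi)$.
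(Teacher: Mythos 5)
Your proposal is correct and follows essentially the same route as the paper's proof in Appendix~\ref{prop_1_proof}: phase-twirl the average state, apply concavity of the von~Neumann entropy together with unitary invariance to show $S(\psi'_\text{ave})\geq S(\psi_\text{ave})$, and note that the conditional-entropy term depends only on $\eta$ so it is unchanged. The only cosmetic difference is that you identify the twirled average state with the circularly symmetric encoding's average state via a change of variables on the circle, whereas the paper invokes its Fock-diagonality lemma (Lemma~\ref{lem:avg_phase_diagonal_fock_basis_state}) for the same purpose.
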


\mg{Above proposition lacks details. Are we talking about phase space? What is a circularly symmetric ring? While we can leave proofs do the appendix, we should state Propositions formally. Also if this result is already proven by Guha then I would recommend not putting it as its own proposition, that makes it look like we are claiming this as an original result. If it not, the the phrasing below makes it sound like it is not original} \at{OK, I think this should be clearer now.}
This implies that for any optimal encoding $F^*= \arg\max_F \chi_{te}[F]$ there exists a circularly symmetric encoding $F'$ that attains that same optimal capacity (i.e. $\chi_{te}[F']=\chi_{te}[F^*]$).
This property is mentioned and shown in a plot by Guha et al.~\cite{Guha2011} comparing capacity of different quantum reading settings, but was missing a formal proof.
We complete this missing piece with a proof in Appendix~\ref{prop_1_proof}.
Thanks to Proposition 1, we can focus our search for optimal encodings to circularly symmetric encodings, which are completely characterized by their cumulative distribution function (cdf) over the attenuation coefficent $\eta$ alone.
To avoid notational complexity, we will hereafter use $F(\eta)$ to also denote the cdf of $\eta$ alone when dealing with circulary symmetric encodings.

\mg{Really need more details here. Should have an independent definition that describes how our encodings are parameterized.} \at{OK, I got rid of this confusing sentence.}

\mg{Wording again makes it sound like we are restating a known result. [3] is purely classical, which should be clarified. Better presentation is state our definition first, and then mention that for the case where it is `classical', the definition agrees with that of Smith?}\at{might need some help on how to think about this.}
We now discuss the second property of the optimal thermal encoding, a necessary and sufficient condition for an optimal encoding $F^*$. 
We first note that the Holevo information of a circularly symmetric encoding $F$ can be written as an average of a function that depends on $\eta$ that indicates information content of the ring at $\eta$ by expressing $\chi_{te}$ in terms of relative entropy $S(\cdot||\cdot)$, i.e.
\begin{align}
\begin{split}
    &\chi_{te}[F] = \int dF(\eta,\theta) S(\rho(\eta,\theta) || \rho_\mathrm{ave}) \\
    &= \int dF(\eta,\theta) \Big( - \tr{\rho(\eta,\theta) \log \rho_\mathrm{ave}} + \\&\quad \tr{\rho(\eta,\theta) \log\rho(\eta,\theta)} \Big) \\
    &= \int dF(\eta) \Big( - \tr{\rho(\eta) \log \rho_\mathrm{ave}} -  S(\rho(\eta,0)) \Big) \;,
\end{split}
\end{align}
where we have used Fubini's theorem (see, e.g. Chapter 2.3 of \cite{stein_shakarchi_2009real}) to swap the sum and integral in $S(\rho_\text{ave})$ and used the fact that entropy $S(\rho(\eta,\theta))$ is independent of the phase $\theta$, allowing us to put $\rho(\eta,0)$ in place of $\rho(\eta,\theta)$.
Now, define the \textit{marginal information density} of circularly symmetric encoding $F$ at $\eta$ as 
\begin{gather}
    i[\eta,F] = -\sum_n P_n[\rho(\eta)]\log P_n[F] - S(\rho(\eta,0)) \label{eq:17} \\
    \textup{so,}\quad \chi_{te}[F] = \int dF(\eta) i[\eta,F] \;. \label{eq:18_main}
\end{gather}
Since the relative entropy $S(\rho(\eta,\theta) || \rho_\mathrm{ave})$ is always non-negative, the marginal information density $i[\eta,F]$ is also non-negative, indicating that it is an information measure of ring $\rho(\eta)$ given circularly symmetric encoding $F$.
We note that the marginal information density also plays an integral part in the results for the capacity of the classical amplitude-constrained additive white-Gaussian-noise (AWGN) channel by Smith~\cite{Smith1971} and the direct-detection photon channel by Shamai~\cite{Shamai1990}, particularly on the properties of the optimal input alphabet distribution.
The set of attenuation-coefficients $\eta$ where $F$ is increasing is of particular importance for this second property.
This set has been called the \emph{points of increase (POI)} (e.g.~\cite{smith_1969information,Smith1971,Shamai1995}) and \emph{support} (e.g.~\cite{Dytso2019,Yagli2019}) of the encoding in the literature, depending on whether the encoding is defined as a cumulative distribution function or a probability distribution over the attenuation-coefficients.
\begin{definition}\label{def:POI}
    Given a probability distribution on $[0,1]$ and its cdf $G(\eta):= \mathrm{Pr}[x \leq \eta]$. 
    The \emph{support} of the distribution is defined as the smallest closed set $\mathcal{S} \subseteq [0,1]$ such that $\mathrm{Pr}[\mathcal{S}]=1$. 
    A point $\eta \in \mathcal{S}$ is also called a \emph{point of increase} of the cdf $G$. 
    We will call $\mathcal{I}_G$ the set of points of increase (equivalently, the support set of the probability distribution) the \emph{POI} of $G$.

\end{definition}
Now we are ready to state our second property.

\begin{prop}\label{prop_2}
    For any given resource state $\rho$, a circularly symmetric encoding $F^*(\eta)$ is uniquely optimal if and only if
    \begin{subequations}
        \begin{equation}\label{eq:9_main}
            i[\eta,F^*] \leq \chi_{te}[F^*] \quad\textup{for all $\eta\in[0,1]$}
        \end{equation}
        \begin{equation}\label{eq:9.2_main}
            i[\eta,F^*] = \chi_{te}[F^*] \quad\textup{if and only if $\eta\in \mathcal{I}_{F^*}$}
        \end{equation}
    \end{subequations}
    for POI $\mathcal{I}_{F^*}$.
\end{prop}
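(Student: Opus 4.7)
The plan is to cast the maximization of $\chi_{te}[F]$ as a convex program on the space of probability measures on $[0,1]$ and derive its Kuhn–Tucker conditions in the style of Smith~\cite{Smith1971} and Shamai~\cite{Shamai1990}. First I would establish that $\chi_{te}[F]$ is concave in $F$: the map $F \mapsto \psi_\mathrm{ave}$ is affine, $S(\cdot)$ is concave, and $F \mapsto \int dF(\eta)\, S(\psi(\eta,0))$ is linear, so their difference is concave. This reduces the problem to verifying a first-order condition.

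Next, for any probability measure $G$ on $[0,1]$, I would compute the right-hand directional derivative of $\chi_{te}$ at $F^*$ along the path $F_\epsilon = (1-\epsilon)F^* + \epsilon G$. Using the Fock-diagonal representation $\chi_{te}[F] = \int dF(\eta)\, i[\eta, F]$ from Eq.~\eqref{eq:18_main}, together with $P_n[F_\epsilon] = (1-\epsilon)P_n[F^*] + \epsilon P_n[G]$ and the normalization $\sum_n P_n[\psi(\eta)] = 1$, the cross-terms telescope and one obtains
\begin{equation*}
    \left.\frac{d}{d\epsilon}\chi_{te}[F_\epsilon]\right|_{\epsilon=0^+} = \int dG(\eta)\, i[\eta, F^*] - \chi_{te}[F^*].
\end{equation*}
By concavity, $F^*$ is optimal if and only if this derivative is non-positive for every admissible $G$. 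Specializing to the point mass $G = \delta_{\eta_0}$ for each $\eta_0 \in [0,1]$ yields the pointwise inequality~\eqref{eq:9_main}, and conversely integrating~\eqref{eq:9_main} against any $G$ recovers the global bound.

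For~\eqref{eq:9.2_main}, the identity $\chi_{te}[F^*] = \int dF^*(\eta)\, i[\eta, F^*]$ combined with the upper bound forces $i[\eta, F^*] = \chi_{te}[F^*]$ for $F^*$-almost every $\eta$; continuity of $\eta \mapsto i[\eta, F^*]$, inherited from continuity of the Fock-basis probabilities $P_n[\psi(\eta)]$ in $\eta$, then extends equality to the entire closed support of $F^*$, which by Definition~\ref{def:POI} coincides with $\mathcal{I}_{F^*}$. Uniqueness would follow from strict concavity: if two circularly symmetric encodings $F_1^*$ and $F_2^*$ both satisfied the conditions, strict concavity of the von Neumann entropy along $\psi_\mathrm{ave}^{F_1^*} \to \psi_\mathrm{ave}^{F_2^*}$ would force the two average states (and hence their Fock diagonals $\{P_n[F_i^*]\}$) to coincide, and the KKT equality condition on $\mathcal{I}_{F^*}$ then pins $F^*$ down. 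The chief technical obstacle I anticipate is justifying the interchange of derivative and infinite Fock sum in the Gateaux calculation, specifically controlling $\log P_n[F^*]$ as $n \to \infty$ and handling possible atoms at $\eta = 0$ or $\eta = 1$, together with the strict-inequality half of~\eqref{eq:9.2_main} outside $\mathcal{I}_{F^*}$, which may require an analyticity argument to rule out spurious equality sets disjoint from the support of $F^*$.
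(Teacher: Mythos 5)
Your proposal follows essentially the same route as the paper's appendix proof: strict concavity of $\chi_{te}$ over the space of circularly symmetric encodings, the weak (Gateaux) derivative formula $\chi'_{F^*}[G]=\int dG(\eta)\,i[\eta,F^*]-\chi_{te}[F^*]$, and the conversion of the non-positivity of this derivative in every direction into \eqref{eq:9_main}--\eqref{eq:9.2_main}, with uniqueness coming from strict concavity, exactly in the spirit of Smith. The only substantive differences are that the paper additionally establishes continuity of $\chi_{te}$ in the Levy metric (via weak continuity of the ring states through energy-constrained diamond-norm bounds) and compactness and convexity of the encoding space so that the optimization theorem applies, and it handles the equality condition on the points of increase by a measure-theoretic contradiction rather than your continuity-of-$i[\cdot,F^*]$ argument---precisely the technical points you flagged as remaining work.
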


\mg{The phrasing of the second statement is very confusing, can you rephrase?}\at{OK, should be more precise now.}

The complete proof is provided in Appendix \ref{prop_2_proof}.
This property establishes a necessary and sufficient condition for a circularly symmetric encoding $F^*$ that achieves the optimal Holevo information $\sup_F \chi_{te}[F]$.
In particular, the first statement can be understood as the Holevo information $\chi_{te}[F^*]$ being an upper bound to the marginal information $i[\eta,F^*]$ regardless the value of $\eta$.
Whereas, the second statement says that for any point $\eta$ for which $F^*(\eta)$ is increasing (either in a continuous or discontinuous manner), its marginal information $i[\eta,F^*]$ is equal to the Holevo information.
Because an encoding is a cumulative distribution function it may be more intuitive for one to interpret this as a characterization of the support of probability distribution $dF^*$ induced by optimal encoding $F^*$.

\begin{figure*}[!ht]
\centering
    \begin{subfigure}[b]{0.33\textwidth}
    \subcaption[short for lof]{$|\alpha_\mathrm{max}|^2 \sim 1.1$}
    \includegraphics[width=0.95\linewidth]{./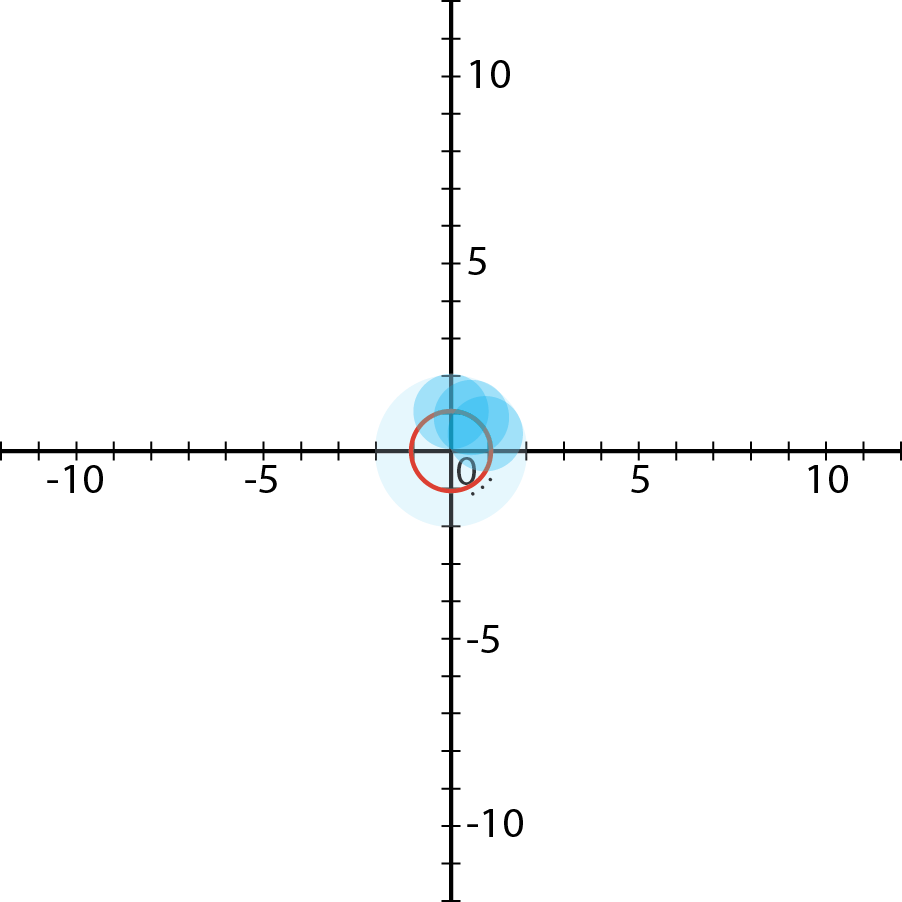}
    \end{subfigure}%
    \begin{subfigure}[b]{0.33\textwidth}
    \subcaption[short for lof]{$|\alpha_\mathrm{max}|^2 \sim 3.5$}
    \includegraphics[width=0.95\linewidth]{./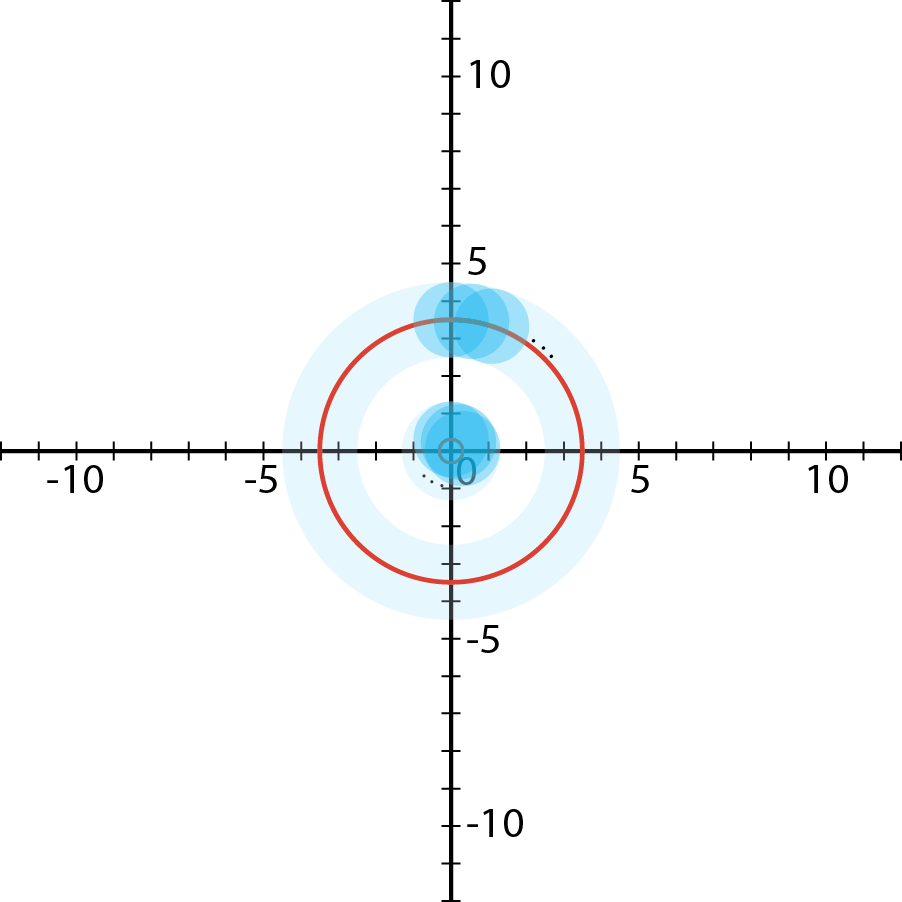}
    \end{subfigure}%
    \begin{subfigure}[b]{0.33\textwidth}
    \subcaption[short for lof]{$|\alpha_\mathrm{max}|^2 \sim 9.2$}
    \includegraphics[width=0.95\linewidth]{./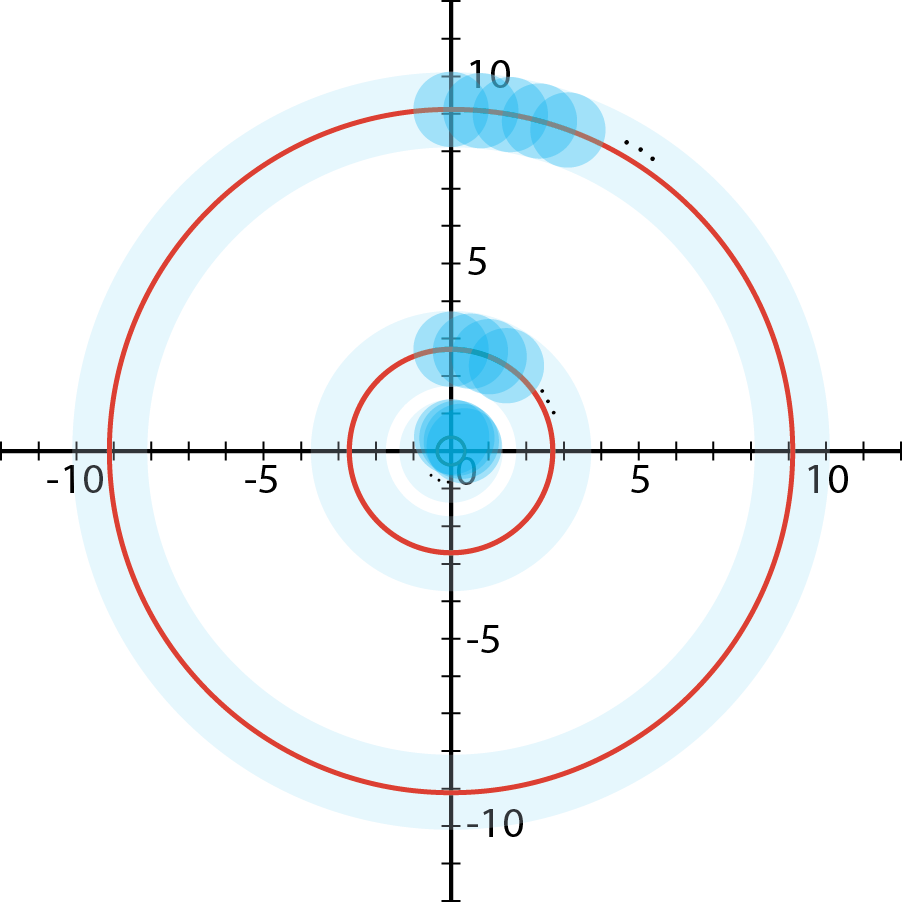}
    \end{subfigure}%
\caption{
    \textbf{Phase-space visualization of optimal circularly symmetric encoding with coherent state resource at different energies in zero-temperature environment.}
    These are phase-space representations of some instances of optimal encodings plotted in Fig.~\ref{fig:1}(a),(b).
    Radius of each red circle indicate energy of its corresponding ring state determined by points-of-increase while, the blue circles indicate attenuated and phase-shifted coherent state codewords that are in each ring state mixture.
    The coherent state codewords are centered on the circumference of each red circle, with the light blue rings providing a depiction of the ring state corresponding to each point of increase.
    \textbf{(a)} Optimal encoding for input energy $|\alpha_\mathrm{max}|^2 \sim 1.1$ with capacity $\chi_{te} \sim 2$ and one ring with energy $\sim 1.1$.
    \textbf{(b)} Optimal encoding for input energy $|\alpha_\mathrm{max}|^2 \sim 3.5$  with capacity $\chi_{te} \sim 3$ and two rings with energies $\sim0.24$ and $\sim3.5$.
    \textbf{(c)} Optimal encoding for input energy $|\alpha_\mathrm{max}|^2 \sim 9.2$ with capacity $\chi_{te} \sim 4$ and three rings with energies $\sim0.3$, $\sim2.7$, and $\sim9.2$.
}
\label{fig:phasespace_encoding_coherent_zerotemp}
\end{figure*}

Supported by analytical and numerical results we present in later sections for some large classes of Gaussian resource states, we conjecture that the Holevo-optimal distribution $F^*$ over attenuation $\eta$ for \emph{any} resource state $\rho$ has a finite support by showing that its corresponding POI $\mathcal{I}_{F^*}$ must have a finite cardinality.
\begin{conjecture}\label{conj:finite_POI_general}
    For any resource state $\rho$ and temperature $T \geq 0$, the set $\mathcal{I}_{F^*}$ of points of increase of the attenuation-coefficient distribution $F^*(\eta)$ associated with the optimal circularly-symmetric encoding has finite cardinality. 
\end{conjecture}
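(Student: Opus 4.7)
The plan is to adapt the classical Smith--Shamai finite-support argument~\cite{Smith1971,Shamai1990} to the present quantum setting, exploiting the Karush--Kuhn--Tucker-type characterization supplied by Proposition~\ref{prop_2}. Assume for contradiction that $\mathcal{I}_{F^*}$ is infinite. Since it is an infinite subset of the compact interval $[0,1]$, Bolzano--Weierstrass yields an accumulation point $\eta_0 \in [0,1]$. By Proposition~\ref{prop_2}, the gap function
\begin{equation}
    g(\eta) \,:=\, \chi_{te}[F^*] - i[\eta,F^*]
\end{equation}
is nonnegative on $[0,1]$ and vanishes precisely on $\mathcal{I}_{F^*}$; in particular $g$ vanishes along a sequence $\eta_k \to \eta_0$.

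The next step is to show that $g$ extends to a real-analytic function of $\eta$ on $(0,1)$, and indeed to a holomorphic function on a complex neighborhood thereof. Writing $\eta = \cos^2\varphi_\eta$, the beamsplitter matrix elements are polynomials in $\sqrt{\eta}$ and $\sqrt{1-\eta}$, so every diagonal Fock-basis probability $P_n[\psi_T(\eta,\theta)]$ reduces, after squaring and partial-tracing over the environment, to a polynomial in $\eta$. The $\theta$-average preserves this, so each $P_n[\psi(\eta)]$ is analytic. Analyticity of the cross term $-\sum_n P_n[\psi(\eta)]\,\log P_n[F^*]$ then follows from uniform convergence of the series in a complex neighborhood, controlled using the finite-energy constraint on $\psi$ and the fact that a thermal attenuator cannot fatten photon-number tails beyond a fixed exponentially-decaying envelope. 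Analyticity of $S(\psi(\eta,0))$ is established via the holomorphic functional calculus applied to $-\rho\log\rho$ on the cone of strictly positive trace-class operators; strict positivity of $\psi(\eta,0)$ for $\eta \in (0,1)$ follows from mixing with the full-rank thermal state $\gamma_T$ when $T>0$, or by an approximation argument when $T=0$.

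Once analyticity is in hand, the identity theorem forces $g \equiv 0$ on $(0,1)$, so the marginal information density would equal the Holevo information at \emph{every} attenuation. To derive a contradiction I would compare the two endpoint values. At $\eta = 0$ the codeword collapses onto the environment, $\psi(0,\theta) = \gamma_T$, giving $i[0,F^*] = S(\gamma_T \,\|\, \psi_{\mathrm{ave}})$, while at $\eta = 1$ the codeword is the untouched resource, giving $i[1,F^*] = S(\psi(1)\,\|\,\psi_{\mathrm{ave}}) + S(\psi(1)) - S(\psi)$. A direct comparison of these two expressions, together with $\sum_n P_n[F^*] = 1$ and the representation of $\psi_{\mathrm{ave}}$ as a mixture of ring states supported on a nontrivial interval, should preclude both endpoint values from simultaneously equalling $\chi_{te}[F^*]$ for any nontrivial finite-energy resource.

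The hardest step will be the analytic-continuation argument, specifically controlling $S(\psi(\eta,0))$ uniformly across $\eta$ despite possible eigenvalue crossings, and establishing uniform convergence of the Fock-basis cross-term series in a complex neighborhood of $[0,1]$. For Gaussian resource states this is tractable because the output remains Gaussian with an explicit symplectic parametrization---this is essentially the route taken in the earlier analytic sections of the paper. For an arbitrary finite-energy $\psi$, one would likely need either an extra regularity assumption or a replacement argument that extracts finiteness of $\mathcal{I}_{F^*}$ from compactness of the output ensemble in trace norm together with the POI conditions, bypassing a full real-analyticity proof.
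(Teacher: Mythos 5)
The statement you are trying to prove is stated in the paper as a \emph{conjecture}: the paper itself only establishes finiteness of $\mathcal{I}_{F^*}$ for a coherent-state resource at $T=0$ (Proposition~\ref{prop:finite_POI_coht0}) and for a thermal-state resource at any $T$ (Proposition~\ref{prop:finite_POI_thermal}), and leaves the general case open, supported only by numerics. Your outline reproduces the strategy of those special-case proofs (Bolzano--Weierstrass, analyticity of the marginal information density, identity theorem, then a contradiction), but it does not close the general case, and you concede as much in your final paragraph. Two gaps are concrete. First, the analyticity step is asserted rather than proved: $P_n[\psi(\eta)]$ is \emph{not} a polynomial in $\eta$ for a general resource (already for a coherent state it is $e^{-\eta|\alpha|^2}(\eta|\alpha|^2)^n/n!$, and for $T>0$ or unbounded Fock support it is an infinite series over environment photon numbers), and the paper's proofs spend most of their effort on exactly this point --- establishing uniform convergence of the cross-term series on complex disks via explicit Weierstrass $M$-test bounds that are tailored to the Poisson or geometric form of $P_n[\psi(\eta)]$, and sidestepping the entropy term $S(\psi(\eta,0))$ by using closed-form expressions (zero for pure codewords, $g(n_\eta)$ for thermal codewords) rather than a holomorphic functional calculus on $-\rho\log\rho$, which is delicate for non-Gaussian $\psi$.

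Second, and more decisively, your proposed contradiction mechanism does not work. If the identity theorem forces $i[\eta,F^*]=\chi_{te}[F^*]$ on $(0,1)$, then by continuity both endpoint values also equal $\chi_{te}[F^*]$; there is nothing contradictory about that --- indeed the paper exhibits an optimal encoding (vacuum resource at $T>0$, Appendix~\ref{app:vacThres}) whose points of increase are exactly $\eta=0$ and $\eta=1$, with the marginal information density equal to $\chi_{te}[F^*]$ at both endpoints. The contradiction in the paper's special-case proofs comes from a different place: the marginal information density is extended analytically to the \emph{unphysical} region $\eta\in(0,\infty)$ (with $P_n[F^*]$ still an integral over $[0,1]$), where constancy is refuted by showing $i[\eta,F^*]$ grows without bound as $\eta\to\infty$ in the coherent case, or becomes negative (or non-real) in the thermal case. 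Without an analogue of that asymptotic estimate for an arbitrary finite-energy $\psi$ --- which is precisely what is missing --- your argument stalls at ``$i$ is constant on $[0,1]$,'' which by itself is not absurd. So the proposal is an outline of the known route with the two hard steps (general analyticity and a general large-$\eta$ contradiction) left open; it does not resolve Conjecture~\ref{conj:finite_POI_general}, consistent with the paper leaving it as a conjecture.
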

Conjecture~\ref{conj:finite_POI_general} states that given an optimal circularly symmetric encoding, the average output state of the thermal channel is a finite mixture of ring states, which can be nicely visualized as discrete rings in its phase-space representation as presented in Fig.~\ref{fig:phasespace_encoding_coherent_zerotemp}.
We show this property rigorously for coherent state resource in zero-temperature and thermal state resource in any temperature as stated in Proposition~\ref{prop:finite_POI_coht0} and Proposition~\ref{prop:finite_POI_thermal}, respectively, and discussed in the corresponding sections they are in.
Numerical evidence for more general cases where the channel output is a displaced thermal state are also given in the following sections.


\mg{The phrasing of this result again can make referees question our original contribution. Simply state the result, and forego methodology till appendix. Separately it is clear why this point would be interesting as the manuscript is about thermal capacity, not `points of increase' Perhaps remove altogether given we focus on this next section?} \at{OK, moved this discussion to the section 3.1, 2nd para.}


\section{Zero temperature environment}\label{sec_0temp_environment}

In this section we discuss finding the optimal encoding when the environment has temperature $T=0$, which corresponds to the environment state being the vacuum state $\gamma_0 = \ketbra{0}{0}$.
Given a resource state $\rho$ with energy $E_\mathrm{max}$, the codeword $\rho(\eta,\theta)$ has average energy $\eta E_\mathrm{max}$ as a consequence of $\gamma_0$ environment having a $0$ energy.
Therefore, the average state $\rho_\mathrm{ave}$ has energy at most $E_\mathrm{max}$.
As such, the Holevo information of any encoding and any resource state is bounded above by the quantity \cite{Giovannetti2014,holevo_2019quantum}
\begin{align}\label{eqn:upper_bound_holevo_t0}
\begin{split}
    g(E_\mathrm{max}) &:= (1+E_\mathrm{max})\log(1+E_\mathrm{max}) \\ &\quad- E_\mathrm{max} \log E_\mathrm{max} \;.
\end{split}
\end{align}
We use this universal capacity upper-bound for $T=0$ environment to benchmark the encoding capacity of our optimal schemes -- see Appendix~\ref{app:vacBounds} for more details.

\subsection{Coherent state resource}\label{sec_0temp_environment_coherent_resource}

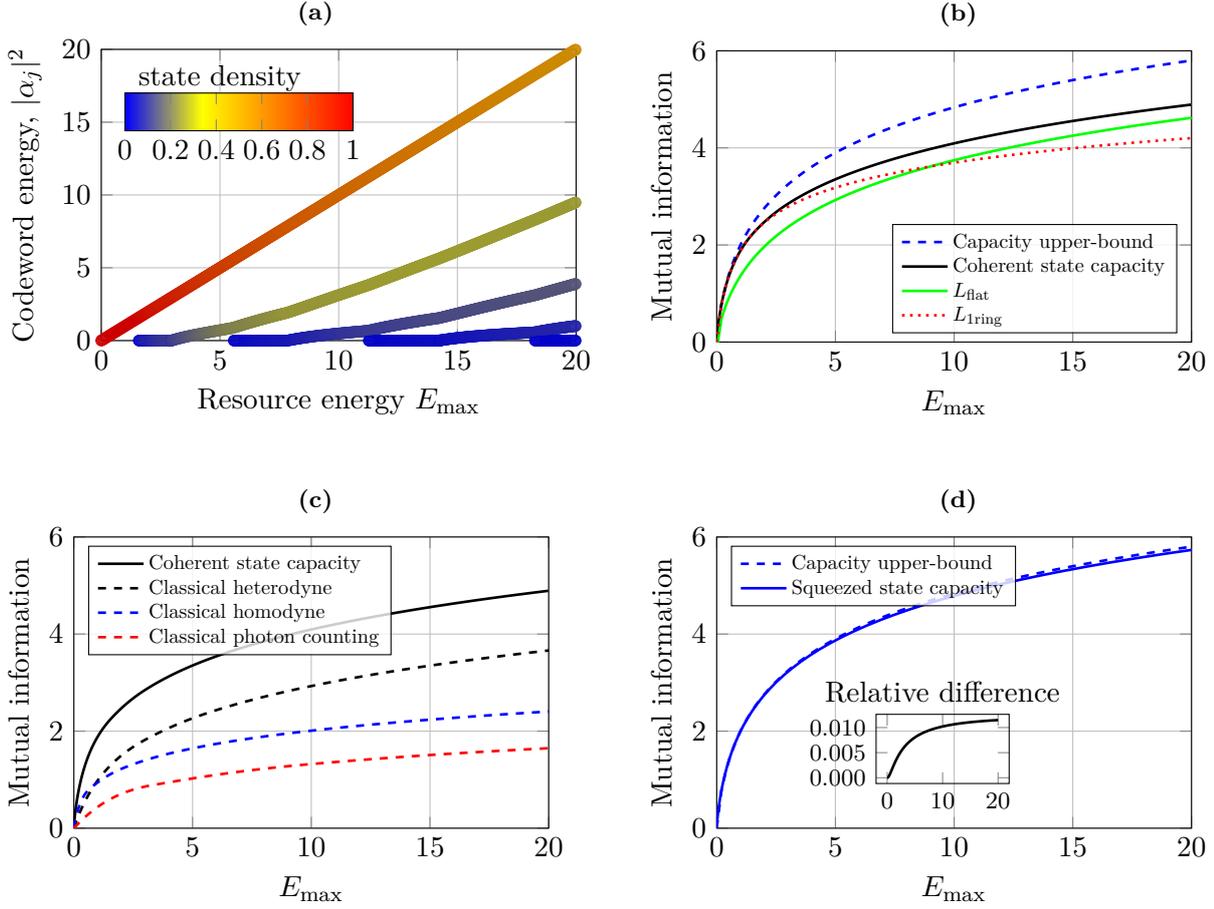
\begin{figure*}[!ht]
\centering
    \begin{subfigure}{0.49\textwidth}
    \subcaption[short for lof]{}
    \begin{tikzpicture}

\begin{axis}[myaxis,
    xmin=0, xmax=20,
    ymin=0, ymax=20,
    grid=both,
    colorbar horizontal,
    colorbar style={
      at={(.05,0.85)},
      anchor=north west,
      width=3cm
    },
    xlabel={Resource energy $E_\text{max}$} , 
    ylabel={Codeword energy, $|\alpha_j|^2$}
    ]

    \node[] at (axis cs:5,18) {state density};

    \addplot[scatter,only marks,scatter src=explicit, mark=o ]
    table  [col sep=comma,x index=0,y index=1,meta index=2] {scatterV0.csv};







    

\end{axis}
\end{tikzpicture}
    \end{subfigure}
    \begin{subfigure}{0.49\textwidth}
    \subcaption[short for lof]{}
    \begin{tikzpicture}

  \begin{axis}[myaxis, legend style={nodes={scale=0.7, transform shape}}, legend pos=south east,xlabel={$E_\mathrm{max}$}, grid=both]

\def\xlinewidth{1.0 pt}


\addplot[color=blue,dashed,line width=\xlinewidth,samples=100][domain=0:20]
{-x*log2(x/(1+x))+log2(1+x)};
\addlegendentry{Capacity upper-bound};


\addplot [allfigs,color=black,mark=none] table [col sep=comma]{T0_nRes0.csv};
\addlegendentry{Coherent state capacity};

\addplot [allfigs,color=green,mark=none] table [col sep=comma]{LFlatTop.csv};
\addlegendentry{$L_\text{flat}$};

\addplot [allfigs,color=red,dotted,mark=none] table [col sep=comma]{L1Ring.csv};
\addlegendentry{$L_\text{1ring}$};









\end{axis}
\end{tikzpicture}
    \end{subfigure}\\
    \vspace{1.5em}
    \begin{subfigure}{0.49\textwidth}
    \subcaption[short for lof]{}
    \begin{tikzpicture}

  \begin{axis}[myaxis, legend style={nodes={scale=0.7, transform shape}},xlabel={$E_\mathrm{max}$}, grid=both]

\def\xlinewidth{1.0 pt}




\addplot [allfigs,color=black,mark=none] table [col sep=comma]{T0_nRes0.csv};
\addlegendentry{Coherent state capacity};

\addplot[allfigs,color=black,dashed,mark=none]table[col sep=comma]{hetCap.csv};
\addlegendentry{Classical heterodyne};

\addplot[allfigs,color=blue,dashed,mark=none]table[col sep=comma]{homCap.csv};
\addlegendentry{Classical homodyne};

\addplot[allfigs,color=red,dashed,mark=none]table[col sep=comma]{poissonCap.csv};
\addlegendentry{Classical photon counting};











\end{axis}
\end{tikzpicture}
    \end{subfigure}
    \begin{subfigure}{0.49\textwidth}
    \subcaption[short for lof]{}
    \begin{tikzpicture}

  \begin{axis}[myaxis,xlabel={$E_\mathrm{max}$}, legend style={nodes={scale=0.7, transform shape}}, grid=both]

\def\xlinewidth{1.0 pt}


\addplot[color=blue,dashed,line width=\xlinewidth,samples=100][domain=0:20]
{-x*log2(x/(1+x))+log2(1+x)};
\addlegendentry{Capacity upper-bound};

\addplot [allfigs,color=blue,mark=none] table [col sep=comma, x
index=0, y index=1]{dispSqzProbe_unoptimised_dim300.csv};
\addlegendentry{Squeezed state capacity};

\coordinate (insetPosition) at (axis cs:3.5,.2);
\end{axis}

\pgfplotsset{scaled y ticks=false}

\begin{axis}[width=0.4\columnwidth,height=0.3\columnwidth,
  title={Relative difference},
  title style={yshift={-5pt}},
  at={(insetPosition)},
  anchor={outer south west},
  tick label style={font=\footnotesize},
  y tick label style={/pgf/number format/.cd,
    fixed,fixed zerofill,precision=3,/tikz/.cd}]
  
\addplot [allfigs,color=black,mark=none] table [col sep=comma, x
index=0, y index=6]{dispSqzProbe_unoptimised_dim300.csv};

\end{axis}
\end{tikzpicture}
    \end{subfigure}
\caption{
    \textbf{(a)} For an optimal circularly symmetric encoding $F^*$ given a coherent-state resource $\rho = \ketbra{\sqrt{E_\mathrm{max}}}{\sqrt{E_\mathrm{max}}}$ in a $T=0$ environment, there are a finite number of points-of-increase (POI) $\mathcal{I}_{F^*} = \{\eta_j\}_j$ (see Definition~\ref{def:POI}).
    The optimal encoding $F^*$ at a given $E_\mathrm{max}$ is specified in the plot by the intersection(s) of the vertical line at $E_\mathrm{max}$ with the colored lines.
    The y-axis value of each colored point corresponds to a codeword with energy $|\alpha_j|^2 = \eta_jE_\mathrm{max}$ which correspond to attenuation $\eta_j$ in the POI of encoding $F^*$ with its density $dF^*(\eta_j)$ color encoded.
    For example, given energy $E_\mathrm{max}\sim 3.5$, there are two points of increase $\eta_0>\eta_1$ in an optimal encoding $F^*$.
    They correspond to codewords with energy $|\alpha_0|^2 = \eta_0 E_\mathrm{max} \sim 3.5$ and $|\alpha_1|^2 = \eta_1 E_\mathrm{max} \sim 0.24$ with density $dF^*(\eta_0)\sim 0.87$ and $dF^*(\eta_1)\sim 0.13$, respectively.
    \textbf{(b)} Thermal encoding capacity with resource state energy $E_\mathrm{max}$ for the capacity upper-bound \eqref{eqn:upper_bound_holevo_t0} and coherent resource state with: optimal encoding $F^*$, one-ring encoding ($L_\mathrm{1ring}$), and flat encoding ($L_\mathrm{flat}$).
    One-ring encoding is optimal when $E_\mathrm{max}< 1.25034$ and flat encoding is almost optimal when $E_\mathrm{max}$ is large (see Appendix~\ref{sec:lower-bounds_one_ring} and Appendix~\ref{app:flat_dist}).
    \textbf{(c)} The information capacities over resource energy $E_\mathrm{max}$ using homodyne~\cite{Smith1971}, heterodyne~\cite{Shamai1995} and photon counting~\cite{Shamai1990} in comparison with the optimal capacity $\chi_{tc}$.
    \textbf{(d)} The channel capacity for displaced squeezed state resource approximates the capacity upper bound \eqref{eqn:upper_bound_holevo_t0}, which is achieved by the non-Gaussian state~\eqref{eq:optngs} (see Section~\ref{appendix:optimal-state}). 
    The inset plot shows that their relative difference is at most $\sim 1$ percent for resource energy $E_\mathrm{max}\leq 20$.
    \mg{I like the visual representation of (a), however as it is proprietary, we need to devote significant time explaining it - or readers will be confused. The definition of 'optimality' has not shown up here, so use of the term and reference results here will confuse readers.}\at{OK, put more explanations and references to the main text}
}
\label{fig:1}
\end{figure*}


\mg{We never formally defined what we meant by 'ring', may cause confusion. At minimal a figure is needed} \at{OK, this is defined now in 2.2}
Given a coherent state $\rho = \ketbra{\alpha_\text{max}}{\alpha_\text{max}}$ as resource (hence $E_\mathrm{max}=|\alpha_\mathrm{max}|^2$), the possible codewords $\rho(\eta,\theta)$ are coherent states $\ketbra{\alpha}{\alpha}$ with amplitude $\lvert\alpha\rvert = \sqrt{\eta}|\alpha_\mathrm{max}| \leq \lvert\alpha_\text{max}\rvert$ for $\eta$ distributed by the choice of encoding $F$. 
Since the entropy of every codeword is zero (the codewords are pure states), the Holevo information equals the entropy of the averaged state, namely, $\chi_{tc}[F] = S(\rho_\mathrm{ave})$. 
The problem of finding the optimal encoding is then equivalent to finding the encoding $F$ that maximises the entropy of the mixture distribution over $n\in\N$
\begin{align}\label{eq:13}
    \int dF(\eta) \, \mathcal{P}_n\left(\eta |\alpha_\mathrm{max}|^2\right) \;,
\end{align}
where $\mathcal{P}_n(\lambda) = e^{-\lambda}\frac{\lambda^n}{n!} $ is a Poisson probability distribution over outcome $n$ with mean $\lambda$.


The optimal circularly-symmetric encoding finiteness property stated in Conjecture~\ref{conj:finite_POI_general} holds for coherent states at $T=0$, as shown in Appendix~\ref{appendix:coherent_t0_finite_POI}.
\begin{prop}\label{prop:finite_POI_coht0}
    For any coherent-state resource $\rho = \ketbra{\alpha}{\alpha}$ and $T=0$, the attenuation-coefficients $\eta$ of the optimal circularly-symmetric encoding takes a finite number of values in $[0,1]$. 
\end{prop}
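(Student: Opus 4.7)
The plan is to combine the optimality characterization of Proposition~\ref{prop_2} with an analyticity argument on the marginal information density. Assuming for contradiction that $\mathcal{I}_{F^*}$ is infinite, since $\mathcal{I}_{F^*} \subset [0,1]$ is bounded, Bolzano--Weierstrass yields an accumulation point. If $i[\cdot, F^*]$ extends to an analytic function on a neighborhood of $[0,1]$, the identity theorem then forces $i[\eta, F^*] = \chi_{te}[F^*]$ throughout that neighborhood, and a power-series comparison will reveal that all Fock-diagonal probabilities $P_n[F^*]$ must be equal, contradicting normalization.

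First I would specialize~\eqref{eq:17}. For a coherent resource $\psi = \ketbra{\alpha}{\alpha}$ at $T=0$, every codeword $\psi(\eta,\theta)$ is a pure coherent state, so $S(\psi(\eta,0)) = 0$, and the ring state $\psi(\eta)$ is Fock-diagonal with Poisson weights $P_n[\psi(\eta)] = \mathcal{P}_n(\eta E)$, where $E = |\alpha|^2$. Writing $Q_n := P_n[F^*]$, the density reads
\begin{align*}
i[\eta, F^*] = -\sum_{n=0}^{\infty} \mathcal{P}_n(\eta E)\, \log Q_n .
\end{align*}
Since any optimal $F^*$ achieves strictly positive Holevo information (witnessed already by the one-ring encoding $L_{\mathrm{1ring}}$ in Fig.~\ref{fig:1}(b)), it cannot concentrate at $\eta=0$. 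Picking an interval $[\eta_1,\eta_2]\subset(0,1]$ with positive $F^*$-mass $p$ yields $Q_n \geq p\, e^{-\eta_2 E}(\eta_1 E)^n/n!$, hence $-\log Q_n = O(n\log n)$. This growth is dominated by the factorial in $\mathcal{P}_n(zE) = e^{-zE}(zE)^n/n!$, so the defining series converges uniformly on compact subsets of $\mathbb{C}$ and $i[\cdot,F^*]$ extends to an entire function of $\eta$.

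By Proposition~\ref{prop_2}, $i[\eta, F^*] - \chi_{te}[F^*]$ vanishes on the assumed-infinite set $\mathcal{I}_{F^*}$, so the identity theorem forces it to vanish identically on $\mathbb{C}$. Multiplying through by $e^{\eta E}$ and matching Taylor coefficients of $(\eta E)^n/n!$ in
\begin{align*}
\sum_{n=0}^{\infty} \frac{(\eta E)^n}{n!} \bigl(\log Q_n + \chi_{te}[F^*]\bigr) = 0
\end{align*}
yields $\log Q_n = -\chi_{te}[F^*]$ for every $n$; all $Q_n$ then share a common positive value, contradicting $\sum_n Q_n = 1$. The main obstacle is justifying the uniform lower bound $Q_n \geq Cr^n/n!$ used to obtain entirety; this requires ruling out $F^*$ being supported only at $\eta=0$, which is handled cleanly by the strict positivity of $\chi_{te}[F^*]$.
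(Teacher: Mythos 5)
Your proposal is correct, and it shares the paper's skeleton (Proposition~\ref{prop_2} $+$ Bolzano--Weierstrass $+$ identity theorem applied to $i[\cdot,F^*]$) but diverges in the two technical steps that do the real work. For analyticity, the paper follows Shamai: it invokes the known analyticity of the direct-detection density $i_{dd}$ and separately proves analyticity of the Poisson entropy term $\mathbb{E}_{\mathrm{Poi}(\eta|\alpha|^2)}[\log N!]$ via a Weierstrass M-test on strips $\C_\delta$, obtaining analyticity only on $(0,\infty)$; you instead bound $-\log P_n[F^*] = O(n\log n)$ directly from the mass lower bound $P_n[F^*]\geq p\,e^{-\eta_2 E}(\eta_1 E)^n/n!$ (legitimate once you rule out concentration at $\eta=0$ via $\chi_{te}[F^*]\geq L_{\mathrm{1ring}}>0$, which is sound), which makes $i[\cdot,F^*]$ \emph{entire} — a nice bonus, since it painlessly covers the case where the infinitely many points of increase accumulate at the endpoint $\eta=0$, a point the strip-based argument treats less explicitly. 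For the contradiction, the paper lets $\eta\to\infty$ and shows $i[\eta,F^*]=\Omega(\eta|\alpha|^2\log\eta)$ grows without bound, contradicting constancy; you instead multiply by $e^{\eta E}$ and match Taylor coefficients to force $\log P_n[F^*]=-\chi_{te}[F^*]$ for all $n$, i.e.\ a uniform Fock distribution, contradicting normalization. Your coefficient-matching step is exact rather than asymptotic and arguably cleaner for this pure-codeword case, whereas the paper's growth argument is the one that transfers to the thermal-resource setting of Proposition~\ref{prop:finite_POI_thermal}, where the codewords are mixed and the density is not a plain Poisson cross-entropy. The only caveats are trivial edge cases you should state: $E=|\alpha|^2>0$ (needed both for the positivity witness and to divide by $E^n$), and the existence/uniqueness of $F^*$ supplied by the optimization theorem underlying Proposition~\ref{prop_2}.
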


We plot this finite optimal distribution for resource state energy $|\alpha_\mathrm{max}|^2\leq20$ in Fig.~\ref{fig:1}(a) using numerical optimization\footnote{We used a generic Matlab constrained optimisation routine which worked well enough for our purposes. Algorithms such as in~\cite{Huang2005} may yield more efficient runtime.} to find the optimal encoding scheme and provide ring-shaped phase space representations of these optimal encodings at some fixed values of $|\alpha_\mathrm{max}|^2$ in Fig.~\ref{fig:phasespace_encoding_coherent_zerotemp}.
As the reader may notice in the figure, for any given amplitude $|\alpha_\mathrm{max}|$ of the input coherent state, the support of optimal encoding over attenuation $\eta$ is finite.
Fig.~\ref{fig:1}(a) also shows that the optimal encoding always includes an outermost ring at $|\alpha|=|\alpha_\mathrm{max}|$ and that the probability mass of this outermost ring is non-increasing as the $|\alpha_\mathrm{max}|$ increases. 
For small $|\alpha_\mathrm{max}|$, it is optimal to encode purely in phase, i.e. using states that forms just one ring with Holevo information given by 
\begin{align}\label{eqn:holevo_1ring}
    -\sum_{n=0}^\infty  \frac{e^{-E_\mathrm{max}} E_\mathrm{max}^n}{n!} \log\frac{e^{-E_\mathrm{max}}E_\mathrm{max}^n}{n!} \;,
\end{align}
which is the Shannon entropy of Poisson distributed random variable with values $n\in\N$ with mean $E_\mathrm{max}=|\alpha_\mathrm{max}|^2$ (see Appendix \ref{sec:lower-bounds_one_ring}).  
This is line with the findings of~\cite{Guha2011}, where this ring encoding was shown to be close to the upper bound $g(E_\mathrm{max})$ for $E_\mathrm{max} \ll 1$ (See Fig.~\ref{fig:1}(b)).
\mg{The exact forms of the encodings are a key part of our results. Should be expanded upon and not just relegated to appendix. Things like $\rho_\alpha$ should be mathematically defined in body}\at{OK, theyre here now.}

The threshold at which this single-ring encoding becomes suboptimal can be obtained by solving for $|\alpha_\mathrm{max}|$ when adding a vacuum state to the mixture of coherent states begins to increase the information content of the total state, as we know that the optimal encoding is finite over the channel attenuation.
That is, we want to solve for $\alpha_\mathrm{max}$ that makes
\begin{align}
    \label{eq:7}
    \frac{\partial}{ \partial \epsilon} S\left( \epsilon \rho_0 +
    (1-\epsilon)\rho_\mathrm{max} \right) \Big|_{\epsilon=0} = 0 \;,
\end{align}
where $\rho_\mathrm{max}$ is the ring state at amplitude $\alpha_\mathrm{max}$,
\begin{align}
    \rho_\mathrm{max} = \int_{-\pi}^\pi \frac{d\theta}{2\pi} R_\theta \ketbra{\alpha_\mathrm{max}}{\alpha_\mathrm{max}} R_\theta^\dag \;,
\end{align}
and vacuum state $\rho_0=\ketbra{0}{0}$.
Interestingly, the solution to this is when the entropy of the encoded states is proportional to its energy
\begin{align}
    \label{eq:8}
    S(\rho_\mathrm{max})=\frac{|\alpha_\mathrm{max}|^2}{\ln 2} \;,
\end{align}
where we get $|\alpha_\mathrm{max}|^2=1.25034$ (see Appendix~\ref{sec:alpStar}).
It can be observed further in Fig.~\ref{fig:1}(a) that as $|\alpha_\mathrm{max}|^2$ slightly exceeds $1.25034$, the optimal encoding now includes an additional codeword $\ketbra{0}{0}$.
As $|\alpha_\text{max}|$ increases further, this vacuum codeword gains probability mass until it transforms into a second ring as the first ring codeword diameter further increase.
More and more rings are introduced in this way as $E_\text{max}$ increases (see Fig.~\ref{fig:phasespace_encoding_coherent_zerotemp}).
We also plot the encoding capacity of the single-ring encoding and the uniform "flat" encoding (see Appendix~\ref{app:vacBounds}) in Fig.~\ref{fig:1}(b).
For large $|\alpha_\mathrm{max}|^2$, the flat encoding is not far from the optimal capacity derived here.

The Wigner function of the average state $\rho_\mathrm{ave} = \int dF^*(\eta) \rho(\eta)$ where $F^*$ is optimal encoding and $\rho(\eta)$ is a ring state for resource $\ket{\alpha_\text{max}}$ is plotted in Fig.~\ref{fig:wigPlot} in Appendix~\ref{app:Wigner_fun} for multiple values of $|\alpha_\text{max}|$. 
\mg{Is this just for the first ring? Exposition unclear.}\at{OK, should be clearer now}
As $|\alpha_\mathrm{max}|$ gets larger, one can observe from Fig.~\ref{fig:wigPlot} and Fig.~\ref{fig:bigAlpHat} that the Wigner function of the average state of the optimal encoding tends to a flat distribution.  
\mg{Again, these are our main results! Should not be relegated to appendix. If we care about length, Proposition 2 can be dropped as it is really just methodology}\at{OK, expanded these below!}


For a given $|\alpha_\mathrm{max}|$, the one-ring encoding capacity $L_\mathrm{1ring}$ and the flat distribution capacity $L_\mathrm{flat}$ (corresponding to a uniform probability distribution on the disk with $\eta\in[0,1]$ and $\theta\in[0,2\pi]$ -- see also~\cite{Guha2011}) are clearly lower bounds on the thermal encoding capacity.
These bounds are visualized in Fig.~\ref{fig:1}(b), showing that the former is tighter for small $|\alpha_\mathrm{max}|$ and the latter tighter for larger $|\alpha_\mathrm{max}|$.
As our codewords are pure states, the one-ring capacity is the von Neumann entropy of the ring state at $|\alpha_\mathrm{max}|$, given by \eqref{eqn:holevo_1ring}.
See Appendix~\ref{sec:lower-bounds_one_ring} for a simple derivation.
For the flat distribution, its capacity is the von Neumann entropy of the average state with encoding $F(\eta)=\eta$, which is given by
\begin{align}
    -\sum_{n=1}^\infty P_n[F] \log P_n[F]
\end{align}
for $P_n[F] = \frac{n! - \Gamma(1+n,E_\mathrm{max})}{n!E_\mathrm{max}}$. 
For a derivation and further discussion, see Appendix~\ref{app:flat_dist}.

\mg{Context unclear}\at{OK.removed.}

\subsection{Mixed state resource}\label{sec:zero_temp_mixed_state}

\begin{figure}[t]
    \centering
    \begin{tikzpicture}

  \begin{axis}[myaxis,
    xlabel={ $n_\mathrm{res}$},
ymin=0,
ymax=5,
ylabel={Capacity, $\chi$},
legend pos=north east, 
legend style={nodes={scale=0.7, transform shape}}, grid=both
]

\def\xlinewidth{1.0 pt}



\addplot [allfigs,color=pink,mark=none] table [col sep=comma,search path={data/thermalRes_T0}]{holE20.csv};
\addlegendentry{$E_0=20$};

\addplot [allfigs,color=green,mark=none] table [col sep=comma,search path={data/thermalRes_T0}]{holE16.csv};
\addlegendentry{$E_0=16$};

\addplot [allfigs,color=blue,mark=none] table [col sep=comma,search path={data/thermalRes_T0}]{holE8.csv};
\addlegendentry{$E_0=8$};

\addplot [allfigs,color=red,mark=none] table [col sep=comma,search path={data/thermalRes_T0}]{holE4.csv};
\addlegendentry{$E_0=4$};

\addplot [allfigs,color=black,mark=none] table [col sep=comma,search path={data/thermalRes_T0}]{holE2.csv};
\addlegendentry{$E_0=2$};











\end{axis}
\end{tikzpicture}

    \caption{{\bf Capacity for a mixed state resource with fixed energy $E_0$ and mean photon number $n_\res$ at $T=0$.} For each line, the total energy of the resource is fixed at $E_0$. 
    The left end of each line (with $n_\res=0$) corresponds to a pure coherent state resource, while the right end is a thermal state resource with mean photon number $n_\res=E_0$. 
    For each $E_0$, the capacity $\chi$ decreases as the resource mean photon number $n_\res$ increases.}
    \label{fig:cap_therRes_T0}
\end{figure}
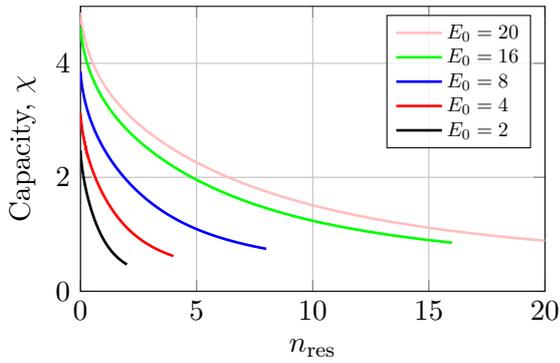

As so far we only discuss pure resource states in this section, we will now consider a mixed Gaussian state as a resource state, namely the displaced thermal state $\rho_\dts(n_\res,\alpha) = D(\alpha)\rho_\th(n_\res)D^\dag(\alpha)$ where $D$ is the displacement operator and $\rho_\th(N)$ is a thermal state with mean photon number $N$.
We numerically optimize the capacities over encodings for displaced thermal state resource with different mean photon number $n_\res$ and fixed total energy $E_0$ and obtained Fig.~\ref{fig:cap_therRes_T0}.
All of the optimal encodings obtained from this numerical optimization -- which satisfy the optimality cosnditions in Proposition~\ref{prop_2} -- has a finite POI, supporting Conjecture~\ref{conj:finite_POI_general}.

Note that when $n_\res=0$, we have a pure coherent state resource $\rho_\dts(0,\alpha)=\ketbra{\alpha}{\alpha}$, and when $n_\res=E_0$, a thermal state with zero amplitude $\rho_\dts(n_\res,0)=\rho_\th(n_\res)$. 
A thermal state with zero amplitude can still be used to transmit information as long as its energy $E_0=n_\res$ is greater than zero.
However given a fixed energy $E_0$, having a coherent state as a resource is more optimal compared to a resource thermal state with the same energy.
Hence, we can transmit more information as the resource state gets closer to a pure coherent state, as shown in Fig.~\ref{fig:cap_therRes_T0}.
\mg{Seems like an informal statement made from ad-hoc case-study...} \at{OK. Changed that last bit}

\subsection{Nearly-optimal squeezed-state resource} \label{appendix:optimal-state}

Now we go back to the capacity upper-bound $g(E_\mathrm{max})$ in~\eqref{eqn:upper_bound_holevo_t0} for resource state energy $E_\mathrm{max}$.
It is shown in~\cite{Guha2011,Wilde2012} that the optimal resource state that achieves this is the pure non-Gaussian state $\rho=\ketbra{\phi}{\phi}$ where
\begin{align}
  \label{eq:optngs}
  \ket{\phi} = \frac{1}{\sqrt{E_\mathrm{max}+1}}\sum_{n=0}^\infty \ket{n} \sqrt{\frac{E_\mathrm{max}^n}{(E_\mathrm{max}+1)^n}} \;.
\end{align}
This state is optimal because resource state $\ket{\phi}$ and encoding with a uniform phase distribution gives us
\begin{align}\label{eq:2}
  \chi_{te} = (E_\mathrm{max}+1)\log(E_\mathrm{max}+1) - E_\mathrm{max}\log E_\mathrm{max} \;,
\end{align}
which is the maximum capacity for an average-energy constrained channel without any additional constraint on the encoding. 
This capacity (in the average-energy constrained case) can be achieved using coherent state codewords with a Gaussian distribution encoding (see Appendix~\ref{app:upper-bound}). 

\mg{Given our emphasize on optimal states, it is weird that this concept only comes up here. We should have probably motivated this as one of the main questions our manscript solved in the intro}

The state~(\ref{eq:optngs}) has very high fidelity with respect to a displaced phase-squeezed state even for large $E_\mathrm{max}\sim 20$. 
This motivates us to look at maximum $\chi_{te}$ achievable using a ring of displaced phase-squeezed state. 
We plot the capacity for a displaced squeezed state in Fig.~\ref{fig:1}(d) which shows that it performs almost as good as (\ref{eq:optngs}). 
The displacement is chosen to match the quadrature expectation value of~(\ref{eq:optngs}) while the squeezing is set such that the energy of the state is fixed at $E_\mathrm{max}$ \footnote{Optimising the displacement gives only a slight improvement which is indistinguishable in the plot.}. 
The Wigner function and photon number distribution of the average state using the displaced phase-squeezed state and the optimal state for $E_\mathrm{max}=3$ is plotted in Fig.~\ref{fig:wigOpt} of Appendix~\ref{app:Wigner_fun}.

\section{Non-zero temperature environment}\label{sec_mixed_state}

In this section we consider the case of $T>0$, i.e. the environment is in a thermal state $\gamma_T = \rho_\th(n_\mathrm{env})$ with mean photon number $n_\mathrm{env}>0$.
We will first discuss the encoding capacity when the resource state is a thermal state and show that the optimal encoding satisfies Conjecture~\ref{conj:finite_POI_general}.
We then proceed to discuss numerical results for coherent state resource $\ket{\alpha}$.

\subsection{Thermal state resource}

When the resource state $\rho$ is a thermal state $\rho_\th(n_\res)$ with mean photon number $n_\res$, the output codewords are also thermal states of the form $\rho_\th(n_\eta)$ where $n_\eta = \eta n_\res + (1-\eta)n_\mathrm{env}$ is a mixture of the resource mean photon number and environment mean photon number. 
Therefore, for a given encoding $F$ the output average state is a mixture of thermal states $\rho_\mathrm{ave} = \int dF(\eta) \rho_\th(n_\eta)$.
Since thermal states are invariant under phase shifts, it is useless to encode any information in the phase $\theta$ \footnote{There is no contradiction with Proposition 1, however, since for any encoding $F$, the circularly symmetrized encoding $F'$ has the same Holevo information.}. 
While this is not ideal for maximizing information transfer, it also means that the optimum decoding measurement is just a measurement of the photon number, which is readily made in practice.
In this scenario, the cardinality of the distribution support corresponding to an optimal encoding is finite, as proven in Appendix~\ref{appendix:finite_POI_thermal_res}.

\begin{prop}\label{prop:finite_POI_thermal}
    For a thermal-state resource $\rho = \rho_\th(n_\res)$ and any $T \geq 0$, the attenuation-coefficients $\eta$ of the optimal circularly-symmetric encoding takes a finite number of values in $[0,1]$.
\end{prop}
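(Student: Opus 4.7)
The plan is to imitate Smith's argument for the amplitude-constrained AWGN channel~\cite{Smith1971}, combining the necessary optimality conditions of Proposition~\ref{prop_2} with the rigidity of real-analytic functions. The crucial simplification in this setting is that thermal states are phase-invariant, so every ring state collapses to $\psi(\eta)=\rho_\th(n_\eta)$ with $n_\eta=\eta n_\res+(1-\eta)n_\env$, and both the photon-number distribution $P_k[\psi(\eta)]=n_\eta^k/(n_\eta+1)^{k+1}$ and the codeword entropy $g(n_\eta)$ depend smoothly on this single parameter. The marginal information density \eqref{eq:17} then reduces to
\begin{align*}
i[\eta,F]=-\sum_{k=0}^{\infty}\frac{n_\eta^k}{(n_\eta+1)^{k+1}}\log P_k[F]-g(n_\eta).
\end{align*}

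The first step would be to show that, for the optimal $F^*$, the map $\eta\mapsto i[\eta,F^*]$ is real-analytic wherever $n_\eta>0$. Writing $n_{\max}=\max(n_\res,n_\env)$, each thermal pmf entering the mixture is bounded above by $n_{\max}^k/(n_{\max}+1)^{k+1}$ for $k>n_{\max}$; combined with the constraint $i[\eta,F^*]\leq\chi_{te}[F^*]$ from Proposition~\ref{prop_2}, one can show that $-\log P_k[F^*]=O(k)$. A Weierstrass $M$-test argument then establishes uniform convergence of the defining series on compact subsets of a complex strip around $\{\eta:n_\eta>0\}$. For $T>0$ this strip contains all of $[0,1]$; for $T=0$ it excludes only the endpoint $\eta=0$, but a local expansion there gives $i[\eta,F^*]-i[0,F^*]\sim n_\eta\log n_\eta<0$ as $\eta\to 0^+$, so Proposition~\ref{prop_2} forbids the points-of-increase from accumulating at that endpoint.

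Next I would argue by contradiction: suppose $\mathcal{I}_{F^*}$ is infinite. Compactness of $[0,1]$ together with the previous remark produces an accumulation point inside the domain of analyticity of $i[\cdot,F^*]-\chi_{te}[F^*]$, and Proposition~\ref{prop_2} forces this analytic function to vanish on an accumulating set. The real-analytic identity theorem then upgrades the equality $i[\eta,F^*]=\chi_{te}[F^*]$ to an entire open interval in $\eta$. Reparameterising by $u=n_\eta/(n_\eta+1)\in(0,1)$ and introducing the generating function $R(u):=\sum_{k\geq 0}\bigl(-\log P_k[F^*]\bigr)u^k$, a direct computation using the closed form $g(u/(1-u))=-\log(1-u)-u\log u/(1-u)$ recasts the identity as
\begin{align*}
(1-u)R(u)=\chi_{te}[F^*]-\log(1-u)-\frac{u\log u}{1-u}.
\end{align*}

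The hard part is extracting a contradiction from this identity. The bound $-\log P_k[F^*]=O(k)$ forces the power series $R(u)$ to have radius of convergence at least $1$, so the left-hand side is real-analytic at $u=0$ and in particular has a finite right-derivative $r_1-r_0$. The right-hand side, however, carries the non-smooth term $-u\log u/(1-u)$, whose derivative blows up as $u\to 0^+$; this is incompatible with the finite derivative on the left and closes the argument. The main obstacle is therefore not the final incompatibility itself but the analytic preliminaries: rigorously establishing the $O(k)$ tail estimate on $-\log P_k[F^*]$ from the optimality conditions, verifying the uniform-convergence bound that underlies the analyticity claim, and carefully treating the $T=0$ case in which the natural domain of analyticity shrinks to exclude one endpoint.
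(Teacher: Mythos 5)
Your overall architecture coincides with the paper's proof in Appendix~\ref{appendix:finite_POI_thermal_res}: combine the optimality conditions of Proposition~\ref{prop_2} with analyticity of $\eta\mapsto i[\eta,F^*]$, then use Bolzano--Weierstrass and the identity theorem to upgrade an infinite set of points of increase to the identity $i[\eta,F^*]=\chi_{te}[F^*]$ on a whole interval, and finally exhibit an incompatibility. Your endgame, however, is genuinely different and is valid: the paper pushes the identity out to $\eta\to\infty$ and gets a contradiction from the entropy term becoming non-real (when $n_\res<n_\env$) or the density becoming negative (when $n_\res>n_\env$), whereas you reparameterise by $u=n_\eta/(n_\eta+1)$ and derive a contradiction at $u\to0^+$ from the non-analytic $u\log u$ term against the power series $(1-u)R(u)$, which is smooth at $u=0$ once $R$ has radius of convergence exceeding $u_{\max}=n_{\max}/(n_{\max}+1)$. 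That is arguably a tidier contradiction, and your explicit treatment of points of increase possibly accumulating at $\eta=0$ when $T=0$ addresses a point the paper glosses over.

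The genuine gap is in how you obtain the tail estimate $-\log P_k[F^*]=O(k)$, on which everything hinges. The bound $P_k[\psi(\eta)]\le n_{\max}^k/(n_{\max}+1)^{k+1}$ for $k>n_{\max}$ upper-bounds the mixture probability $P_k[F^*]$, i.e.\ it lower-bounds $-\log P_k[F^*]$ --- the wrong direction --- and the optimality condition $i[\eta,F^*]\le\chi_{te}[F^*]$ only yields, term by term, $-\log P_k[F^*]\le C\,\bigl((n_{\max}+1)/n_{\max}\bigr)^k$, an exponential bound. That guarantees radius of convergence only $\ge u_{\max}$ for $R$, which is insufficient both for analyticity of $i[\cdot,F^*]$ in a neighbourhood of the largest ring and for your final step if the points of increase accumulate at the ring of maximal mean photon number. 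What is actually needed --- and what the paper proves as Lemma~\ref{lem:thermal_input:log_avg_proba_upper_bound} --- is a \emph{lower} bound $P_k[F^*]\ge Q_k(n_{\min})$ for $k>n_{\max}$, using that $Q_k(z)=z^k/(z+1)^{k+1}$ is increasing in $z$ on $[0,k]$; this gives $-\log P_k[F^*]\le\log(n_{\min}+1)+k\log\frac{n_{\min}+1}{n_{\min}}$, hence radius $\ge1$. This bound degenerates exactly in your delicate cases $n_{\min}=0$ ($T=0$, and likewise a vacuum resource), where one must additionally argue that $F^*$ cannot concentrate all mass on the vacuum codeword (otherwise $\chi_{te}=0$), so some fixed $z>0$ with $Q_k(z)\le P_k[F^*]$ still exists; the paper supplies this argument, your sketch does not. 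You should also exclude the degenerate case $n_\res=n_\env$, where the output is independent of $\eta$, every encoding is trivially optimal, and the finiteness claim as stated fails; the paper rules it out explicitly.
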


The special case when the resource state has mean photon number $n_\res=0$ (i.e. a vacuum state) reduces to the scenario in section~\ref{sec:zero_temp_mixed_state}.
This is due to the codeword corresponding to the transmittance value $\tilde{\eta}$ being equal to the codeword corresponding to $\eta = 1 -\tilde{\eta}$ in section~\ref{sec:zero_temp_mixed_state} with $\rho = \rho_\th(n_\mathrm{env})$ (the displacement is set to zero). 
The optimal encoding in this case is plotted in Fig.~\ref{fig:vrCode}(a). 
We find that for $n_\mathrm{env}< 8.67754$, the optimal encoding has exactly two codewords: the resource vacuum state ($\eta=0$) and the thermal state ($\eta=1$).
In Appendix~\ref{app:vacThres} we obtain an analytical form of the Holevo information
\begin{align}\label{eq:27}
    \chi_{te} = \log\left( 1+\frac{n_\mathrm{env}}{(1+n_\mathrm{env})^\frac{1+n_\mathrm{env}}{n_\mathrm{env}}} \right) \;.
\end{align}
This quantity tends to $1$ as $n_\mathrm{env}\rightarrow\infty$.
As $n_\mathrm{env}$ increases beyond $8.67754$, it can be seen in Fig.~\ref{fig:vrCode}(a) that adding more codewords is necessary to achieve the thermal encoding capacity (black line in Fig.~\ref{fig:vrCode}(b)).
Thus interestingly, it is possible to encode information using a vacuum state resource whenever the environmental temperature is non-zero.
\mg{Not sure what this means}\at{fixed (the "swapping" stuff)}

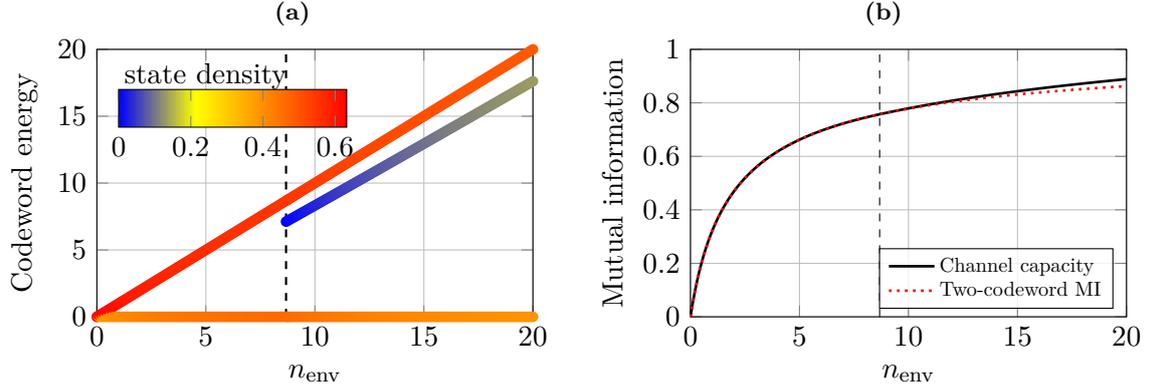
\begin{figure*}
    \centering
    \begin{subfigure}{0.45\textwidth}
    \subcaption[short for lof]{}
    \begin{tikzpicture}


\begin{axis}[myaxis,
    ymax=20,
    grid=both,
    colorbar horizontal,
    colorbar style={
      at={(.05,0.85)},
      anchor=north west,
      width=3cm
    },
    xlabel=$n_\mathrm{env}$ , 
    ylabel=Codeword energy
    ]

    \node[] at (axis cs:5,18) {state density};



    \addplot[scatter,only marks,scatter src=explicit,draw opacity=0]
    table [col sep=comma,x index=0,y index=1,meta index=2] {VRscatter_E0_Ecode.csv};

    \def\discontA{8.67754}
    \draw [dashed,thick] (\discontA,0) -- (\discontA,20);

\end{axis}
\end{tikzpicture}
    \end{subfigure}
    \begin{subfigure}{0.45\textwidth}
    \subcaption[short for lof]{}
    \begin{tikzpicture}

  \begin{axis}[myaxis,xlabel={$n_\mathrm{env}$},ymax=1,
    legend pos=south east, 
    legend style={nodes={scale=0.7, transform shape}}, grid=both]

\def\xlinewidth{1.0 pt}


\addplot [allfigs,color=black,mark=none] table [col sep=comma,x index=0,y index=1]{vacRes.csv};
\addlegendentry{Channel capacity};

\addplot[color=red,dotted,line width=\xlinewidth,samples=100][domain=0:20]
{log2(1+x/(1+x)^((1+x)/x))};
\addlegendentry{Two-codeword MI};



\def\ymax{6}

    \def\discontA{8.67754}
    \draw [dashed] (\discontA,0) -- (\discontA,\ymax);

\end{axis}
\end{tikzpicture}
    \end{subfigure}
    \caption{ 
    \textbf{Encoding using vacuum state resource at temperature $T>0$ (i.e. environment mean photon number $n_\mathrm{env}>0$).}
    \textbf{(a)} Optimal encoding using vacuum state resource with $n_\mathrm{env}>0$. 
    The shading indicates the probability $p_j$ of the $j$-th codeword.  
    For $n_\mathrm{env} < 8.67754$, the two-codeword encoding consisting of the vacuum resource and the thermal environment is optimal (Appendix~\ref{app:vacThres}). 
    When $n_\mathrm{env}$ increases beyond $8.67754$ (vertical line), the optimal encoding includes a third codeword.
    \textbf{(b)} Capacity and bounds for vacuum state resource as a function of $n_\mathrm{env}$.
    The mutual information from a two-codeword encoding (dotted red line) is optimal when $n_\mathrm{env}< 8.67754$, coinciding with the numerically computed capacity (black line).
    }
    \label{fig:vrCode}
\end{figure*}

\subsection{Coherent state resource}\label{sec:coherent_resource_nonzero_T}

Finally, we consider the capacity of a coherent state resource $\ket{\alpha_\mathrm{max}}$ (resource energy $E_\mathrm{max}=|\alpha_\mathrm{max}|^2$) when the environment is in a thermal state with mean photon number $n_\text{env}>0$.
\mg{Do we just mean an environment of non-zero temperature? Note sure what thermal photon numbers mean.} \at{OK. changed it to "mean photon number of environment state"}
The numerically optimized thermal encoding capacity is shown in Fig.~\ref{fig:finiteTCap}. 
For a fixed $n_\mathrm{env}$, a larger $E_\mathrm{max}$ gives a higher capacity.
However when we vary $E_\mathrm{max}$, a lower $n_\mathrm{env}$ has higher capacity for larger $E_\mathrm{max}$, whereas larger $n_\mathrm{env}$ has higher capacity for smaller $E_\mathrm{max}$ (see Fig.~\ref{fig:finiteTCap} inset), which is akin to the phenomenon seen in the previous subsection for a vacuum-state resource.
In fact, when $n_\mathrm{env}\gg 1$ and $\alpha_\mathrm{max}=0$, we can get $\chi_{te}=1$ using a two-codeword distribution (Appendix~\ref{app:vacThres}).

\begin{figure}
    \centering
    \begin{tikzpicture}[spy using outlines={rectangle, magnification=3,connect spies}]

\begin{axis}[myaxis,legend pos=south east,enlargelimits=false,ymax=5, legend style={nodes={scale=0.7, transform shape}}, grid=both,xlabel=$E_\mathrm{max}$, ylabel=Mutual information]



\tikzstyle{allfigs}=[line width=1pt,samples=100,domain=0:20]

\addplot [allfigs,color=black,mark=none] table [col sep=comma]{T0_nRes0.csv};
\addlegendentry{ $n_\text{env}=0$};

\addplot [allfigs,color=green,mark=none] table [col sep=comma]{hol_n1_dim150.csv};
\addlegendentry{$n_\text{env}=1$};

\addplot [allfigs,color=green,mark=none,forget plot] table [col sep=comma]{hol_n1_dim150_run2.csv};

\addplot [allfigs,color=brown,mark=none] table [col sep=comma]{hol_n2_dim150.csv};
\addlegendentry{$n_\text{env}=2$};

\addplot [allfigs, color=blue,mark=none] table [col sep=comma]{hol_n4_dim150.csv};
\addlegendentry{$n_\text{env}=4$};

\addplot [allfigs, color=purple,mark=none] table [col sep=comma]{hol_n20_dim150.csv};
\addlegendentry{$n_\text{env}=20$};


\coordinate (insetPosition) at (axis cs:2.3,0.1);

\end{axis}

\begin{axis}[width=0.45\columnwidth,height=0.35\columnwidth,
  ymin=-0.1,ymax=1.5,
  xmin=0.001,xmax=1,
  at={(insetPosition)},
  anchor={outer south west},
  tick label style={font=\footnotesize},
  xtick={0.001,0.01,0.1,1},
  xticklabels={$10^{-3}$,$0.01$,$0.1$,$1$},
  xmode=log]
  
  \tikzstyle{allfigs}=[line width=1pt,mark=none]
\input{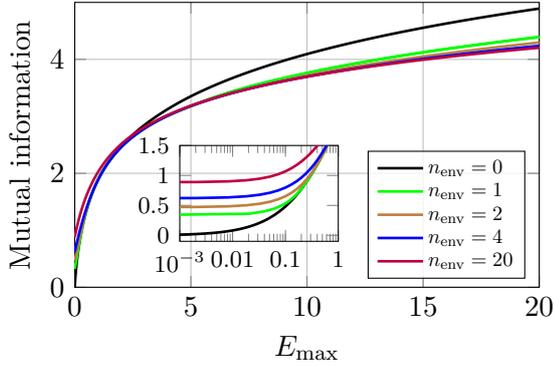}
\end{axis}

\end{tikzpicture}
    \caption{
    {\bf Thermal-encoding capacity for coherent state resource $\ket{\alpha_\mathrm{max}}$ at $T>0$ environment.} 
    Capacity varies with environment mean photon number $n_\mathrm{env}$ (see Section~\ref{sec:coherent_resource_nonzero_T}).
    }
\label{fig:finiteTCap}
\end{figure}

We now look into a communication scenario over a lossy channel which mixes a low-energy $E_\mathrm{max}\ll1$ coherent state resource at a fixed attenuation-coefficient $\eta_\mathrm{ch}$.
In this scenario we have a single-ring circularly symmetric encoding and displaced thermal state codewords
\begin{align}\label{eqn:codeword_thermal_photon_number}
    \rho(\eta_\mathrm{ch},\theta) = R_\theta D(\sqrt{\eta_\mathrm{ch}}\alpha) \rho_\th(n_\mathrm{ch}) D^\dagger(\sqrt{\eta_\mathrm{ch}}\alpha)R_\theta^\dagger \;,
\end{align}
where $\rho_\th(n_\mathrm{ch})$ is a thermal state with mean photon number $n_\mathrm{ch} = (1-\eta_\mathrm{ch})n_\mathrm{env}$, we call $n_\mathrm{ch}$ the codeword \emph{thermal photon number}.
The thermal encoding Holevo information is therefore given by 
\begin{align}\label{eqn:one_ring_capacity_coherent_nonzero_T}
    \chi[E_\mathrm{max}] = S(\rho_\mathrm{ave}) - S(\rho_\th(n_\mathrm{ch})) \;,
\end{align}
as the codeword entropy is equal for all phase $\theta$.

For $E_\mathrm{max}\ll1$, it is shown in Fig.~\ref{fig:coh_nonzero_t_optimal} that by using a coherent state resource we can almost reach the upper bound of the lossy-channel capacity~\cite[supplementary information, eqn.(12)]{Giovannetti2014}, given by
\begin{align}\label{eqn:cap_upper_bound_lossy}
    g(\eta_\mathrm{ch} E_\mathrm{max} + (1-\eta_\mathrm{ch}) n_\mathrm{env}) - g((1-\eta_\mathrm{ch}) n_\mathrm{env}) \;.
\end{align}
In fact for $E_\mathrm{max}\ll1$, we can approximate $\chi[E_\mathrm{max}]$ by
\mg{Lossing $\neq$ Noisy. May want to change subtitle}\at{OK done}\mg{Paragraphs appear disjointed. Also seems to be more focused on results of others than out own}\at{OK, moved this part to after our result below as a comparison}\mg{Still not sure what this means? Do we mean environmental temperature $T$?}\at{OK, changed it}
\begin{align}\label{eq:1}
  \Tilde{\chi}_{te}[E_\mathrm{max}] = \eta_\mathrm{ch} E_\mathrm{max} \log\frac{1+n_\mathrm{ch}}{n_\mathrm{ch}} \;,
\end{align}
which is derived in Appendix~\ref{app:low_E_limit} and shown as the dotted line in Fig.~\ref{fig:coh_nonzero_t_optimal}.
However as the dotted lines goes above the solid lines for large $E_\mathrm{max}$, indicating that \eqref{eq:1} is not a good approximation of \eqref{eqn:one_ring_capacity_coherent_nonzero_T} as $E_\mathrm{max}$ increases.
Interestingly as $n_\mathrm{ch}$ increases, indicating more thermal noise (see blue and black lines in Fig.~\ref{fig:coh_nonzero_t_optimal}), \eqref{eq:1} is a better approximation of \eqref{eqn:cap_upper_bound_lossy} as the two lines are close to each other even up to the values $E_\mathrm{max}>1$.
We also found that this phenomena occurs as $\eta_\mathrm{ch}$ decreases.

\begin{figure}[!ht]
    \centering
    \begin{tikzpicture}[spy using outlines={rectangle, magnification=3,connect spies}]

\begin{axis}[myaxis,legend pos=south east,enlargelimits=false,ymax=1.9,xmax=10, legend style={at={(0.35,0.45)},nodes={scale=0.7, transform shape}}, grid=both,xlabel=$E_\mathrm{max}$, ylabel=Mutual information]



\tikzstyle{allfigs}=[line width=1pt,samples=100,domain=0:20]

\addplot [allfigs, color=red,mark=none,line width=0.75pt] table [col sep=comma]{lossy_channel/cap_ub_2.csv};
\addlegendentry{$n_\text{env}=2$};
\addplot [allfigs, color=purple,mark=none,line width=0.75pt] table [col sep=comma]{lossy_channel/cap_ub_5.csv};
\addlegendentry{$n_\text{env}=5$};
\addplot [allfigs, color=brown,mark=none,line width=0.75pt] table [col sep=comma]{lossy_channel/cap_ub_7.csv};
\addlegendentry{$n_\text{env}=7$};
\addplot [allfigs, color=green,mark=none,line width=0.75pt] table [col sep=comma]{lossy_channel/cap_ub_10.csv};
\addlegendentry{$n_\text{env}=10$};
\addplot [allfigs, color=blue,mark=none,line width=0.75pt] table [col sep=comma]{lossy_channel/cap_ub_20.csv};
\addlegendentry{$n_\text{env}=20$};
\addplot [allfigs, color=black,mark=none,line width=0.75pt] table [col sep=comma]{lossy_channel/cap_ub_30.csv};
\addlegendentry{$n_\text{env}=30$};

\addplot [allfigs, color=red,dotted,mark=none,line width=0.5pt] table [col sep=comma]{lossy_channel/cap_onering_2.csv};
\addplot [allfigs, color=purple,dotted,mark=none,line width=0.5pt] table [col sep=comma]{lossy_channel/cap_onering_5.csv};
\addplot [allfigs, color=brown,dotted,mark=none,line width=0.5pt] table [col sep=comma]{lossy_channel/cap_onering_7.csv};
\addplot [allfigs, color=green,dotted,mark=none,line width=0.5pt] table [col sep=comma]{lossy_channel/cap_onering_10.csv};
\addplot [allfigs, color=blue,dotted,mark=none,line width=0.5pt] table [col sep=comma]{lossy_channel/cap_onering_20.csv};
\addplot [allfigs, color=black,dotted,mark=none,line width=0.5pt] table [col sep=comma]{lossy_channel/cap_onering_30.csv};

\addplot [allfigs, color=red,dashed,mark=none] table [col sep=comma]{lossy_channel/cap_exact_2.csv};
\addplot [allfigs, color=purple,dashed,mark=none] table [col sep=comma]{lossy_channel/cap_exact_5.csv};
\addplot [allfigs, color=brown,dashed,mark=none] table [col sep=comma]{lossy_channel/cap_exact_7.csv};
\addplot [allfigs, color=green,dashed,mark=none] table [col sep=comma]{lossy_channel/cap_exact_10.csv};
\addplot [allfigs, color=blue,dashed,mark=none] table [col sep=comma]{lossy_channel/cap_exact_20.csv};
\addplot [allfigs, color=black,dashed,mark=none] table [col sep=comma]{lossy_channel/cap_exact_30.csv};

\coordinate (insetPosition) at (axis cs:2.3,0.1);

\end{axis}


\end{tikzpicture}
    \caption{
    {\bf
    Lossy-channel with attenuation $\eta_\mathrm{ch}=0.4$ at $T>0$ environment.}
    For $E_\mathrm{max}\ll1$ and various environment mean photon-number $n_\mathrm{env}$, 1-ring encoding capacity with coherent state resource~\eqref{eqn:one_ring_capacity_coherent_nonzero_T} (dashed lines) almost reaches the lossy-channel capacity upper-bound~\eqref{eqn:cap_upper_bound_lossy} (solid lines) and can be approximated by \eqref{eq:1} (dotted lines), see Section~\ref{sec:coherent_resource_nonzero_T}.
    }
\label{fig:coh_nonzero_t_optimal}
\end{figure}
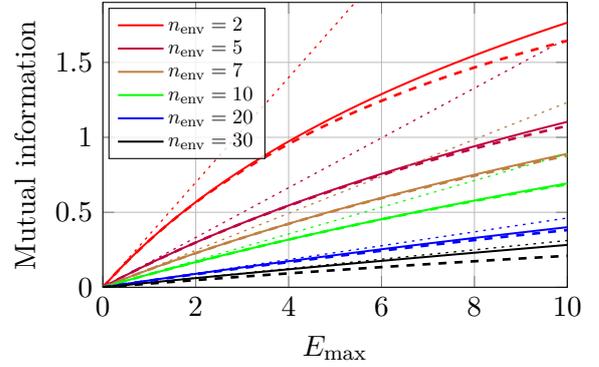

Approximation \eqref{eq:1} implies that the information capacity per unit photon tends to the value $\lim_{E_\mathrm{max}\rightarrow0} \Tilde{\chi}_{te}[E_\mathrm{max}]/E_\mathrm{max} = \eta_\mathrm{ch} \log\frac{1+n_\mathrm{ch}}{n_\mathrm{ch}}$, which is finite because $n_\mathrm{ch}>0$.
Interestingly, a result by Guha and Shapiro~\cite{Guha2013} shows that a coherent state resource and binary phase-shift-keying encoding in a vacuum environment ($n_\mathrm{ch}\rightarrow0$) has $\chi_{te}[E]/E \rightarrow\infty$ as $E\rightarrow0$, allowing transmission of unlimited bits per photon when the average photon in the coherent state is small.

\section{Discussion}\label{sec:discussion}

\mg{Discussion should always begin with a paragraph that summarizes what we have discovered. Ideally, it should also be written in a way that can be understood by readers who did not necessarily read the results (e.g. an editor). So try to avoid using mathematical symbols defined in the body.}\at{OK Done.}
\mg{The discussion should also be a place to relate our work to other related work e.g., quantum reading}\at{Still need to discuss related work!}


In this work, we studied the encoding of information by modulating the phase and transmittance of a peak-energy constrained resource state using passive linear optical channels.
The thermal channel family -- which has been used to model various tasks such as optical memory reading, communication, and algorithmic cooling -- is simple to implement and requires no external energy source.
We showed that the maximum amount of information that can be encoded by such channels applied to a given resource state can always be achieved by an encoding scheme that uniformly distributes the phase introduced by the channel (Proposition~\ref{prop_1}).
Among such encodings, the unique optimal encoding is characterized by information-theoretic conditions of how the thermal channel transmittance is distributed (Proposition~\ref{prop_2}).

We conjectured that all optimal circularly symmetric encoding schemes involve only a finite number of values of the channel transmittance.
This conjecture is supported by showing that this condition holds in the case of coherent state resource in zero-temperature environments (Proposition~\ref{prop:finite_POI_coht0}) and in the case of thermal state resource interacting with an environment at any temperature (Proposition~\ref{prop:finite_POI_thermal}).
We also supported this conjecture by numerical evidence based on the aforementioned encoding optimality conditions for cases where the channel output is a displaced thermal state.
An intriguing question that one may ask is whether there is a fundamental reason why the optimal circularly-symmetric encoding has a discrete attenuation distribution.

Interestingly, Proposition~\ref{prop_1} finds an analog in the optimality of such encodings for the classical peak-constrained 2-dimensional AWGN channel~\cite{Shamai1995}, while the properties of optimal distribution of the attenuation coefficient in Proposition~\ref{prop_2} is similar to the seminal result of Smith~\cite{smith_1969information,Smith1971} on the capacity of the real-valued AWGN channel under a peak energy constraint.
These similarities extend to more recent work on $n$-dimensional AWGN channel capacity under a peak energy constraint (see~\cite{Dytso2019,Yagli2019} and references therein).
In the quantum domain, our work opens the question of the ultimate capacity of bosonic Gaussian channels under a peak energy constraint on the input ensemble, i.e., to the problem of optimizing the Holevo information over such ensembles generated by thermal encodings and beyond.

\mg{This paragraph is very difficult to decipher, and I don't think any general reader will understand why we care... may need rephrasing, reordering (to a more technical section), or removal}\at{OK. Simplified.}

\mg{Note sure if I follow the arguments here? Appears awfully speculative? Also not sure what last sentence means? We already know that the three codeword solution on qubits is optimal.}\at{OK. I agree. Removed.}

\section*{Acknowledgement}

This work is supported by the Singapore Ministry of Education Tier 2 Grant MOE-T2EP50221-0005, The Singapore Ministry of Education Tier 1 Grants RG146/20 and RG77/22, grant no.~FQXi-RFP-1809 (The Role of Quantum Effects in Simplifying Quantum Agents) from the Foundational Questions Institute and Fetzer Franklin Fund (a donor-advised fund of Silicon Valley Community Foundation) and the National Research Foundation, Singapore, and Agency for Science, Technology and Research (A*STAR) under its QEP2.0 programme (NRF2021-QEP2-02-P06). 
R.N. thanks Saikat Guha for useful discussions on quantum reading capacity. 
V.N. acknowledges support from the Lee Kuan Yew Endowment Fund (Postdoctoral Fellowship).
A.T. acknowledges support from CQT PhD scholarship.


\bibliographystyle{quantum} 
\bibliography{short} 



\newpage
\onecolumngrid
\appendix

\section{Proofs of general properties of optimal thermal encoding}

\subsection{Proof of proposition~\ref{prop_1}}\label{prop_1_proof}

Consider a thermal encoding $F$ and its corresponding codewords $\{\rho(\eta,\theta)\}_{\eta,\theta}$.
The Holevo information of this encoding $F$ is given by
\begin{align}
  \label{eq:20}
  \chi_{tc}[F] = S(\rho_\text{ave}) - \int dF(\eta,\theta) S(\rho(\eta,\theta)) \;,
\end{align}
where $S$ is the von Neumann entropy and $\rho_\text{ave} =\int dF(\eta,\theta) \rho(\eta,\theta)$ is the average density operator with respect to $F$.
First we will show that any state that is an average of codewords over uniformly distributed phase is diagonal in Fock basis.

\begin{lemma}\label{lem:avg_phase_diagonal_fock_basis_state}
    For any state $\rho = \int dF(x) \rho_x$ with pure $\rho_x$'s, its average state with uniform phase distribution $\Tilde{\rho} = \frac{1}{2\pi} \int d\phi\; R_\phi\rho R_\phi^\dag$ for $\phi\in[-\pi,\pi]$ phase rotation operator $R_\phi$ is diagonal in the Fock basis.
\end{lemma}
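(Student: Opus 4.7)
The plan is to directly compute the matrix elements $\bra{m}\tilde{\psi}\ket{n}$ in the Fock basis and show that they vanish whenever $m \neq n$. The crucial ingredient is that the phase-rotation operator $R_\phi = e^{-i\phi\hat{a}^\dagger \hat{a}}$ is diagonal in the Fock basis with eigenvalues $e^{-i\phi n}$, so that $R_\phi^\dagger \ket{n} = e^{i\phi n}\ket{n}$.

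First I would write
\begin{align*}
\bra{m}\tilde{\psi}\ket{n} = \frac{1}{2\pi}\int_{-\pi}^{\pi} d\phi \,\bra{m}R_\phi\, \psi\, R_\phi^\dagger\ket{n}.
\end{align*}
Acting with $R_\phi$ on $\bra{m}$ and $R_\phi^\dagger$ on $\ket{n}$ pulls out scalar phases, yielding
\begin{align*}
\bra{m}\tilde{\psi}\ket{n} = \bra{m}\psi\ket{n}\cdot \frac{1}{2\pi}\int_{-\pi}^{\pi}d\phi\, e^{i\phi(n-m)} = \bra{m}\psi\ket{n}\,\delta_{mn}
\end{align*}
by the standard orthogonality of the characters $e^{i\phi k}$ on $[-\pi,\pi]$. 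Hence all off-diagonal elements vanish and $\tilde{\psi}$ is diagonal in $\{\ket{n}\}_{n\in\mathbb{N}}$.

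The only mild subtlety is the interchange of the angular integral with the integral defining $\psi = \int dF(x)\psi_x$; this is justified by Fubini's theorem applied to the bounded integrand $\bra{m}R_\phi \psi_x R_\phi^\dagger\ket{n}$ (whose absolute value is at most $1$) on the finite-measure product space $[-\pi,\pi]\times \mathrm{supp}\,F$. Note that the purity of each $\psi_x$ is not actually used anywhere in the argument, which goes through verbatim for any density operator $\psi$; the hypothesis is thus stated only to match the usage in Proposition~\ref{prop_1}. I do not anticipate any genuine obstacle — the entire content is the diagonality of $R_\phi$ and the orthogonality relation for Fourier characters.
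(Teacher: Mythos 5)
Your proof is correct and is essentially the same argument as the paper's: both reduce to the orthogonality relation $\frac{1}{2\pi}\int_{-\pi}^{\pi}d\phi\,e^{i\phi(n-m)}=\delta_{nm}$ applied after using that $R_\phi$ is diagonal in the Fock basis, the only difference being that you take matrix elements of $\psi$ directly while the paper expands each pure $\psi_x$ in Fock amplitudes. Your observation that purity of the $\psi_x$ is never needed is accurate and makes the statement marginally more general, but it does not change the substance of the proof.
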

\begin{proof}
    This can be shown by writing each pure state in terms of Fock basis $\rho_x = \sum_{n,m} z_n(x) z_m(x)^* \ketbra{n}{m}$ for $z_n(x),z_m(x)\in\mathbb{C}$, then we get
    \begin{align*}
        \Tilde{\rho} &= \int dF(x) \int \frac{d\phi}{2\pi} R_\phi \rho_x R_\phi^\dag \\
        &= \int dF(x) \sum_{n,m} z_n(x) z_m(x)^* \int \frac{d\phi}{2\pi} R_\phi \ketbra{n}{m} R_\phi^\dag \\
        &= \int dF(x) \sum_{n,m} z_n(x) z_m(x)^* \int \frac{d\phi}{2\pi} e^{i(n-m)\phi} \ketbra{n}{m} \\
        &= \sum_{n,m} \int dF(x) z_n(x) z_m(x)^* \delta_{n,m} \ketbra{n}{m} = \sum_n \ketbra{n}{m} \int dF(x) |z_n(x)|^2 \;,
    \end{align*}
    which shows that $\Tilde{\rho}$ is diagonal in Fock basis $\{\ket{n}\}_{n=0}^\infty$ with real diagonal elements $\{\int dF(x) |z_n(x)|^2\}_{n=0}^\infty$.
\end{proof}

Now we are ready for the proof of Proposition~\ref{prop_1}.
Namely, we show that for an arbitrary thermal encoding $F$, there is a circularly symmetric encoding $\Tilde{F}$ giving an average state $\Tilde{\rho}_\mathrm{ave} =  \int dF(\eta) \int \frac{d\phi}{2\pi}\; \rho(\eta,\phi)$ with at least the same entropy.
The first term of \eqref{eq:20} is upper bounded by the entropy of $\Tilde{\rho}_\mathrm{ave}$
\begin{align}
    S(\rho_\mathrm{ave})
    &\stackrel{(1)}{=} \int \frac{d\phi}{2\pi}\; S\left( \int dF(\eta,\theta) R_\phi \rho(\eta,\theta) R_\phi^\dag \right) \\
    &\stackrel{(2)}{\leq} S\left( \int dF(\eta,\theta) \int \frac{d\phi}{2\pi}\; R_\phi \rho(\eta,\theta) R_\phi^\dag \right) \\
    &\stackrel{(3)}{=} S\left( \int dF(\eta,\theta) \int \frac{d\phi}{2\pi}\; \rho(\eta,\phi) \right) \\
    &\stackrel{(4)}{=} S\left( \int dF(\eta) \int \frac{d\phi}{2\pi}\; \rho(\eta,\phi) \right) = S(\Tilde{\rho}_\mathrm{ave})
\end{align}
where (1) is because phase rotation $R$ is a unitary operation, thus does not change the entropy of a state, and (2) by the concavity of the von Neumann entropy, and (3) by defining $\rho(\eta,\phi) = R(\phi) \rho(\eta,\theta) R^\dag(\phi)$ because state $\int \frac{d\phi}{2\pi}\; \rho(\eta,\phi)$ is diagonal in Fock basis by Lemma \ref{lem:avg_phase_diagonal_fock_basis_state}, and hence independent of phase $\theta$, and lastly (4) by marginalizing $\int dF(\eta,\theta)$ over phase $\theta$.

Now we look at the second term of \eqref{eq:20}.
In the case that all codeword $\rho(\eta,\theta)$ is a pure state, the second term is zero and therefore $\chi_{tc}[\Tilde{F}] \geq \chi_{tc}[F]$ holds because $S(\Tilde{\rho}_\mathrm{ave}) \geq S(\rho_\mathrm{ave})$. 
In the case of general codewords $\rho(\eta,\theta)$, the second term of \eqref{eq:20} is equal to $\int d\Tilde{F}(\eta,\theta) S(\rho(\eta,\theta))$ because $S(\rho(\eta,\theta))$ is independent of phase $\theta$.
Hence we constructed a circularly symmetric encoding $\Tilde{F}$ such that $\chi_{tc}[\Tilde{F}] \geq \chi_{tc}[F]$ for an arbitrary encoding $F$.

\subsection{Levy metric}\label{app:levy_metric}

For completeness, in this section we define the Levy metric which we will use as a metric on the space of encodings (i.e. cumulative distribution functions) in order to prove some of the properties of an optimal circularly symmetric encoding.
Here we specifically define the metric on $\mathscr{F}$, the space of encodings (i.e. cumulative distribution over the interval $[0,1]$).
The Levy metric $d_{Le}:\mathscr{F}\times\mathscr{F}\rightarrow\R$ is defined as 
\begin{align*}
    d_{Le}(F,G) = \frac{1}{\sqrt{2}} \sup_{x,x' : F(x)+x=G(x')+x'} d\Big( (x,F(x)) , (x',G(x')) \Big) \;.
\end{align*}
Here $d(\cdot,\cdot):\R^2\times\R^2\rightarrow\R$ is the euclidean distance in $\R^2$ and we maximize the distance between two points on the graphs $F$ and $G$ intersecting a diagonal line $a=x+y$ (on an $(x,y)$-coordinate plane), over all $a\in\R$.
(for discussions on its geometric interpretations, see Chapter 2 of \cite{smith_1969information}).
We show that $d_{Le}$ is indeed a metric as $d(r,s)=d(s,r)$ for all $r,s\in\R^2$, and $d((x,F(x)) , (x',G(x')))=0$ for a maximizing $x,x'$ iff $F(x)=G(x)$ for all $x$, and satisfies the triangle inequality
\begin{align*}
    d_{Le}(F,G)+d_{Le}(G,H) &= \frac{1}{\sqrt{2}} \sup_{x,x'} d\Big( (x,F(x)),(x'G(x')) \Big) + \frac{1}{\sqrt{2}} \sup_{z,z'} d\Big( (z,G(z)),(z',H(z')) \Big) \\
    &\geq \frac{1}{\sqrt{2}} \sup_{x,x',z,z'} \Big( d\Big( (x,F(x)),(x'G(x')) \Big) + d\Big( (z,G(z)),(z',H(z')) \Big) \Big) \\
    &\geq \frac{1}{\sqrt{2}} \sup_{x,z'} d\Big( (x,F(x)),(z',H(z')) \Big) = d_{Le}(F,H) 
\end{align*}
where the maximization in the second line is over $x,x',z,z'$ such that $F(x)+x=G(x')+x'=G(z)+z=H(z')+z'$.
The last line is obtained by simply noting that $x'=z$ must be true since $G(x')+x'=G(z)+z$  and for any given three points $c_F=(x,F(x)), c_G=(x',G(x')), c_H=(z',H(z'))$ in $\R^2$ on a line, distance between $c_F$ and $c_H$ simply cannot exceed the sum of distances between $c_F$ and $c_G$ and between $c_G$ and $c_H$.

\subsection{Proof of proposition \ref{prop_2}} \label{prop_2_proof}

We give a proof of Proposition \ref{prop_2} analogous to the optimization argument given by Smith \cite{Smith1971} about the optimality of AWGN channel encoding.
First, we show that the thermal channel Holevo information $\chi_{tc}$ is concave (Lemma~\ref{lem:holevo_info_concave}) and weakly differentiable (Lemma~\ref{lem_holevo_derivative}) and continuous (Lemma~\ref{lem:holevo_info_continuous}) over the space of circularly symmetric encoding $F$ equipped with the Levy metric (for definition, see Appendix~\ref{app:levy_metric}), then argue that $\mathscr{F}$ is convex and compact.
If $\chi_{tc}$ and $\mathscr{F}$ satisfy these properties, we may use the optimization theorem to obtain conditions for an optimal circularly symmetric encoding.

\begin{lemma}[{Optimization theorem}]\label{lem:optimization_theorem}
    Consider mapping $f$ from compact and convex topological space $\Omega$ to real numbers $\mathbb{R}$.
    If $f$ is continuous, concave, and weakly differentiable, then there exists a $x^*\in\Omega$ such that $f(x^*)=\sup_{x\in\Omega} f(x)$, and it is necessary and sufficient for such $x^*$ that the weak derivative of $f$ at $x^*$ in the direction of $x$, defined by
    \begin{align*}
        f_{x^*}'(x) = \lim_{\epsilon\rightarrow0} \frac{f(x^*+\epsilon(x-x^*)) - f(x^*)}{\epsilon}
    \end{align*}
    for $\epsilon>0$, is non-positive for all $x\in\Omega$.
    Moreover if $f$ is strictly concave, then such $x^*$ is unique.
\end{lemma}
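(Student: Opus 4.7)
The plan is to treat this as a standard first-order optimality theorem for a concave functional on a convex, compact topological space, splitting the argument into existence, necessity, sufficiency, and uniqueness. For \textbf{existence}, I would simply invoke the extreme value theorem: continuity of $f$ together with compactness of $\Omega$ guarantees that the supremum is attained, i.e.\ that some $x^*\in\Omega$ achieves $f(x^*)=\sup_{x\in\Omega}f(x)$.

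For \textbf{sufficiency} I would use concavity directly. Given any $x\in\Omega$ and $t\in(0,1]$, convexity of $\Omega$ keeps $(1-t)x^* + tx$ in $\Omega$, and concavity gives
\begin{align*}
\frac{f\bigl((1-t)x^* + tx\bigr) - f(x^*)}{t} \;\geq\; f(x) - f(x^*).
\end{align*}
A further standard consequence of concavity is that this difference quotient is monotone non-increasing in $t$, so the limit as $t\to 0^+$ exists and equals the weak derivative of $f$ at $x^*$ in the direction $x$. If that derivative is non-positive for every $x\in\Omega$, then $f(x)-f(x^*)\le 0$ everywhere on $\Omega$, confirming that $x^*$ is optimal. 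For \textbf{necessity}, if $x^*$ is a maximizer then the scalar function $g(t)=f((1-t)x^* + tx)$ on $[0,1]$ attains its maximum at $t=0$, hence $g'(0^+)\le 0$; by the same chain-rule identification, this is exactly the weak derivative in direction $x$. For \textbf{uniqueness} under strict concavity, two distinct maximizers $x_1^*,x_2^*$ would give
\begin{align*}
f\bigl(\tfrac{1}{2}(x_1^*+x_2^*)\bigr) \;>\; \tfrac{1}{2}f(x_1^*) + \tfrac{1}{2}f(x_2^*) \;=\; \sup_{x\in\Omega} f(x),
\end{align*}
which contradicts optimality (and the midpoint lies in $\Omega$ by convexity).

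The main obstacle will be making the weak directional derivative rigorous enough in this topological-vector-space setting that the monotone-limit step of the sufficiency argument is actually justified, and ensuring the definition used here matches the one invoked when showing that $\chi_{tc}$ is weakly differentiable in Lemma~\ref{lem_holevo_derivative}. Once that alignment is in place, the proof reduces to a routine adaptation of the convex-analysis framework used by Smith~\cite{Smith1971} for the peak-constrained AWGN channel, which is exactly the template the paper intends to apply when it turns this lemma on $\chi_{tc}$ and the space of circularly symmetric encodings in order to establish Proposition~\ref{prop_2}.
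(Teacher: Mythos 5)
Your proposal is correct, and it is worth noting that the paper does not actually prove Lemma~\ref{lem:optimization_theorem} at all: it simply cites Smith's thesis \cite[p.~15]{smith_1969information} and then moves on to verifying its hypotheses for $\chi_{tc}$ and $\mathscr{F}$. What you have written is a self-contained version of essentially the same standard convex-analysis argument that the cited reference contains: existence from continuity plus compactness, necessity from the one-sided derivative of $g(t)=f\bigl((1-t)x^*+tx\bigr)$ at a maximum, sufficiency from the concavity inequality bounding the difference quotient below by $f(x)-f(x^*)$, and uniqueness from strict concavity via the midpoint. Two small remarks. First, you silently (and correctly) read the paper's ``strictly convex'' as strictly \emph{concave}; that is a typo in the statement, since one is maximizing a concave functional. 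Second, the monotonicity of the difference quotient in $t$ is not actually needed for sufficiency: weak differentiability is a hypothesis of the lemma, so the limit defining the directional derivative exists by assumption, and since every quotient is already bounded below by $f(x)-f(x^*)$, so is the limit; this also sidesteps the ``main obstacle'' you flag, because the derivative used here is exactly the segment-directional (Gateaux-type) derivative of Eq.~\eqref{eq:25} in Lemma~\ref{lem_holevo_derivative}, so the alignment you worry about is automatic. With those observations your argument stands as a complete proof of the lemma, which is more than the paper itself provides, and it plugs directly into the proof of Proposition~\ref{prop_2}.
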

A proof of this optimization theorem can be found in \cite[p.15]{smith_1969information}.
After establishing the conditions required to use Lemma~\ref{lem:optimization_theorem}, we will later conclude the proof by showing that
\begin{subequations}
    \begin{equation}\label{eq:9}
        i[\eta,F^*] \leq \chi_{tc}[F^*] \quad\textup{for all $\eta\in[0,1]$}
    \end{equation}
    \begin{equation}\label{eq:9.2}
        i[\eta,F^*] = \chi_{tc}[F^*] \quad\textup{if and only if $\eta\in \mathcal{I}$}
    \end{equation}
\end{subequations}
hold for encoding $F^*$ if and only if $F^*$ has non-positive weak derivative in all directions, thus \eqref{eq:9} and \eqref{eq:9.2} hold if and only if $\chi_{tc}[F^*]=\sup_F\chi_{tc}[F]$.

Note that all encodings $F$ in $\mathscr{F}$ distribute the channel phase uniformly and independently of the distribution over the channel attenuation, therefore optimization of $\chi_{tc}[F]$ is only over encodings over the attenuation, i.e. space of circularly symmetric encodings $\mathscr{F}$ is the space of cumulative distribution functions over the interval $[0,1]$.
Space $\mathscr{F}$ is convex because an encoding $F_\lambda = \lambda F_1 + (1-\lambda)F_2$ is clearly in $\mathscr{F}$ for any $\lambda\in[0,1]$ and $F_1,F_2,\in\mathscr{F}$.
The compactness of $\mathscr{F}$ follows from the compactness proof of the space of cumulative distribution functions over interval $[-a,a]$ for some real $a$ in the Levy metric by Smith~\cite[pp.21-25]{smith_1969information} where he shows that this space is totally bounded, from which compactness follows.
Smith's proof for total boundedness of encodings over $[-a,a]$ applies to space of encodings over any bounded and connected interval of $\mathbb{R}$, thus works for $\mathscr{F}$.

\begin{lemma}\label{lem:holevo_info_concave}
    Thermal channel Holevo information $\chi_{tc}$ is a strictly concave function of encodings in $\mathscr{F}$.
\end{lemma}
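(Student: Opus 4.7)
The plan is to decompose $\chi_{tc}[F]$ into a strictly concave piece and a linear piece, then upgrade ordinary concavity to strictness by invoking strict concavity of the von Neumann entropy. Since every encoding $F\in\mathscr{F}$ distributes the phase $\theta$ independently and uniformly and since $S(\psi(\eta,\theta))$ is phase-invariant, I would first write
\begin{equation*}
\chi_{tc}[F] \;=\; S\!\left(\psi_{\mathrm{ave}}^{(F)}\right) \;-\; \int dF(\eta)\, S(\psi(\eta,0)),
\end{equation*}
where the second term is manifestly linear in $F$. Hence strict concavity of $\chi_{tc}$ reduces to strict concavity of $F\mapsto S(\psi_{\mathrm{ave}}^{(F)})$.

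For $\lambda\in(0,1)$ and $F_1,F_2\in\mathscr{F}$, setting $F_\lambda=\lambda F_1+(1-\lambda)F_2$, linearity of the average gives $\psi_{\mathrm{ave}}^{(F_\lambda)}=\lambda\psi_{\mathrm{ave}}^{(F_1)}+(1-\lambda)\psi_{\mathrm{ave}}^{(F_2)}$, so ordinary concavity of $S$ yields concavity of $\chi_{tc}$ at once. To upgrade to strict concavity, I would invoke the standard fact that $S(\lambda\rho+(1-\lambda)\sigma)>\lambda S(\rho)+(1-\lambda)S(\sigma)$ whenever $\rho\neq\sigma$. The remaining task is therefore to show that the map $F\mapsto\psi_{\mathrm{ave}}^{(F)}$ is injective on $\mathscr{F}$: whenever $F_1\neq F_2$, the resulting average states must differ.

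The main obstacle lies in establishing this injectivity for a general resource state $\psi$. By Lemma~\ref{lem:avg_phase_diagonal_fock_basis_state} the average state is Fock-diagonal with entries $P_n[F]=\int dF(\eta)\,P_n[\psi(\eta)]$, so the claim is equivalent to the family $\{\eta\mapsto P_n[\psi(\eta)]\}_{n\in\N}$ being a determining class for Borel measures on $[0,1]$. The approach I would take is to study the probability generating function $G_\eta(z)=\sum_{n\geq 0} z^n P_n[\psi(\eta)]$: since the ring state depends analytically on $\eta$ via the Kraus representation of the beamsplitter, $G_\eta(z)$ is analytic in $\eta$ for each fixed $z$ in the open unit disk, and equality $\int G_\eta(z)\,dF_1(\eta)=\int G_\eta(z)\,dF_2(\eta)$ on an interval of $z$ would then force $F_1=F_2$ by uniqueness of the Laplace transform of finite measures on $[0,1]$. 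This step is the delicate one, because it requires either a universal analyticity argument valid for every finite-energy $\psi$ or else a case-by-case verification (coherent, thermal, displaced thermal) using the explicit photon-number distributions already derived in the body of the paper.
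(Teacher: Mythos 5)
Your decomposition is exactly the one the paper uses: split $\chi_{tc}[F]$ into the entropy of the average state plus a term that is linear (affine) in $F$, get concavity from linearity of $F\mapsto\psi_{\mathrm{ave}}^{(F)}$ together with concavity of $S$, and get strictness from strict concavity of the von Neumann entropy. The difference is in how the two arguments treat the one genuinely nontrivial point. The paper's proof disposes of it in a single clause — ``equality holds iff $\psi_{\mathrm{ave}1}=\psi_{\mathrm{ave}2}$, which happens iff $F_1=F_2$'' — with no justification, whereas you correctly isolate this injectivity of $F\mapsto\psi_{\mathrm{ave}}^{(F)}$ as the delicate step and sketch a route (probability generating functions, analyticity in $\eta$, uniqueness of transforms of measures on $[0,1]$) without carrying it out. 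So your proposal is not less complete than the paper's proof; it is more honest about where the gap sits.

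Two remarks on closing that gap. First, it cannot be closed unconditionally: for a thermal resource with $n_\res=n_\env$ the ring state $\psi(\eta)$ is independent of $\eta$, every encoding yields the same average state, $\chi_{tc}\equiv 0$, and strict concavity fails — a degeneracy the paper itself acknowledges in the appendix on the thermal-resource case. So the lemma as stated implicitly needs a non-degeneracy assumption on $(\psi, T)$. Second, for the concrete resources actually analyzed, your moment-type argument does work and is simpler than a general analyticity claim: for a coherent resource at $T=0$ one has $P_n[\psi(\eta)]=e^{-\eta E_\mathrm{max}}(\eta E_\mathrm{max})^n/n!$, so if $\int dF_1\,P_n=\int dF_2\,P_n$ for all $n$ then all moments of the finite signed measure $e^{-\eta E_\mathrm{max}}\,d(F_1-F_2)$ on $[0,1]$ vanish, forcing $F_1=F_2$ since $e^{-\eta E_\mathrm{max}}$ is bounded away from zero; an analogous substitution $t=n_\eta/(n_\eta+1)$ handles the thermal resource with $n_\res\neq n_\env$. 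If you supply such an argument (or a clearly stated hypothesis under which $\eta\mapsto\psi(\eta)$ separates measures), your proof is complete and in fact repairs the step the paper leaves unproved.
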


\begin{proof}
    For $\lambda\in[0,1]$ and an arbitrary pair of encodings $F_1$ and $F_2$, let $F_\lambda = \lambda F_1 + (1-\lambda) F_2$ for $\lambda\in(0,1)$.
    We will show that $\chi_{tc}[F_\lambda] \geq \lambda\chi_{tc}[F_1] + (1-\lambda)\chi_{tc}[F_2]$ with equality iff $F_1=F_2$.
    By the definition of $\chi_{tc}$ of $F_\lambda$,
    \begin{align}\label{prop2proof_eq1}
        \chi_{tc}[F_\lambda] = S\left(\int dF_\lambda(\eta,\theta) \rho(\eta,\theta)\right) - \int dF_\lambda(\eta,\theta) S(\rho(\eta,\theta)).
    \end{align}
    Now let $\rho_\text{ave$j$} = \int dF_j(\eta,\theta) \rho(\eta,\theta)$ for $j\in\{1,2\}$. 
    Because for any vector $|k\rangle$ from basis $\{|k\rangle\}_k$ the integral $\int dF \langle k|\rho|k\rangle$ is finite for any $F$, so we have
    \begin{align}\label{prop2proof_eq2}
        \int dF_\lambda(\eta,\theta) \rho(\eta,\theta) = \lambda\rho_\text{ave$1$} + (1-\lambda) \rho_\text{ave$2$} \;.
    \end{align}
    Hence by the strict concavity of the von Neumann entropy, we obtain
    \begin{align}\label{prop2proof_eq3}
        S\left(\int dF_\lambda(\eta,\theta) \rho(\eta,\theta)\right) &\geq \lambda S(\rho_\text{ave$1$}) + (1-\lambda) S(\rho_\text{ave$2$}) \;,
    \end{align}
    where equality holds iff $\rho_\textup{ave$1$} = \rho_\textup{ave$2$}$, which happens iff $F_1=F_2$.
    
    As for the second term of $\chi_{tc}[F_\lambda]$, since the entropy $S(\rho(\eta,\theta))$ is integrable with respect to any encoding, we have
    \begin{align}\label{prop2proof_eq4}
        \int dF_\lambda(\eta,\theta) S(\rho(\eta,\theta)) = \lambda \int dF_1(\eta,\theta) S(\rho(\eta,\theta)) + (1-\lambda) \int dF_2(\eta,\theta) S(\rho(\eta,\theta)).
    \end{align}
    By applying the equations \eqref{prop2proof_eq3} and \eqref{prop2proof_eq4} to \eqref{prop2proof_eq1}, the required concavity condition $\chi_{tc}[F_\lambda] \geq \lambda\chi_{tc}[F_1] + (1-\lambda)\chi_{tc}[F_2]$ is therefore shown.
\end{proof}

    

Now we show the weak-differentiability of $\chi_{tc}$.

\begin{lemma}[Weak differentiability of $\chi_{tc}$]\label{lem_holevo_derivative}
    Thermal channel Holevo information $\chi_{tc}$ is weakly differentiable.
    Particularly, given any pair of encodings $F_0$ and $F$, for weak derivative of $\chi_{tc}$ at $F_0$ in the direction of $F$
    \begin{align}
      \label{eq:25}
      \chi'_{F_0}[F] \coloneqq \lim_{\epsilon \to0}
      \frac{\chi_{tc}[F_0+\epsilon(F-F_0)] -  \chi_{tc}[F_0]}{\epsilon}
    \end{align}
    for $\epsilon>0$, we have
    \begin{align}
      \label{eq:26}
      \chi'_{F_0}[F] = \left(\int dF(\eta)\, i[\eta,F_0]\right) - \chi_{tc}[F_0]\;.
    \end{align}
\end{lemma}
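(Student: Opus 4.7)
\medskip

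The plan is to compute the right-hand side of \eqref{eq:25} directly by splitting $\chi_{tc}$ into its two additive pieces, differentiating each in $\epsilon$ at $\epsilon=0$, and showing that what remains matches \eqref{eq:26}. Write $F_\epsilon = F_0 + \epsilon(F-F_0) = (1-\epsilon) F_0 + \epsilon F$, which is a valid element of $\mathscr{F}$ for $\epsilon\in[0,1]$. The key linearity to exploit is that $P_n[F_\epsilon] = (1-\epsilon) P_n[F_0] + \epsilon P_n[F]$ by \eqref{eq:19}, so the average state $\psi_{\mathrm{ave}}[F_\epsilon]$ (which is Fock-diagonal by Lemma~\ref{lem:avg_phase_diagonal_fock_basis_state}) has its spectrum depending affinely on $\epsilon$. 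Similarly, the ``entropy-of-codewords'' integral $\int dF_\epsilon(\eta)\, S(\psi(\eta,0))$ is affine in $\epsilon$, so its derivative at $\epsilon=0$ is immediately $\int dF(\eta)\, S(\psi(\eta,0)) - \int dF_0(\eta)\, S(\psi(\eta,0))$.

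The substantive part is to differentiate $S(\psi_{\mathrm{ave}}[F_\epsilon]) = -\sum_n P_n[F_\epsilon]\log P_n[F_\epsilon]$. Differentiating term-by-term gives
\begin{align*}
\frac{d}{d\epsilon}\bigl(-P_n[F_\epsilon]\log P_n[F_\epsilon]\bigr)\Big|_{\epsilon=0} = -(P_n[F] - P_n[F_0])\Bigl(\log P_n[F_0] + \tfrac{1}{\ln 2}\Bigr).
\end{align*}
Summing over $n$, the constants $1/\ln 2$ cancel because $\sum_n P_n[F] = \sum_n P_n[F_0] = 1$, leaving $-\sum_n P_n[F]\log P_n[F_0] + \sum_n P_n[F_0]\log P_n[F_0] = -\sum_n P_n[F]\log P_n[F_0] - S(\psi_{\mathrm{ave}}[F_0])$. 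Using $P_n[F] = \int dF(\eta)\,P_n[\psi(\eta)]$ from \eqref{eq:19} and recognising from the definition~\eqref{eq:17} that $-\sum_n P_n[\psi(\eta)]\log P_n[F_0] = i[\eta,F_0] + S(\psi(\eta,0))$, I obtain
\begin{align*}
\frac{d}{d\epsilon} S(\psi_{\mathrm{ave}}[F_\epsilon])\Big|_{\epsilon=0} = \int dF(\eta)\, i[\eta,F_0] + \int dF(\eta)\, S(\psi(\eta,0)) - S(\psi_{\mathrm{ave}}[F_0]).
\end{align*}
Combining with the derivative of the codeword-entropy term, the $\int dF(\eta)\,S(\psi(\eta,0))$ contributions cancel and the remaining $-S(\psi_{\mathrm{ave}}[F_0]) + \int dF_0(\eta)\,S(\psi(\eta,0)) = -\chi_{tc}[F_0]$, yielding exactly \eqref{eq:26}.

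The routine steps above assume we can exchange $\tfrac{d}{d\epsilon}$ with the infinite Fock sum, which is the main technical obstacle and what most of the written proof should be spent on. The bound $|P_n[F_\epsilon] - P_n[F_0]| \le P_n[F] + P_n[F_0]$ together with a finite-energy/convex-combination majorisation of $|\log P_n[F_\epsilon]|$ should let us dominate the difference quotients uniformly in $\epsilon\in[0,\tfrac{1}{2}]$ and appeal to dominated convergence. A subtle point is the possibility that $P_n[F_0]=0$ for some $n$: by the $0\log 0 = 0$ convention and the fact that $P_n[F_0]=0$ forces $P_n[\psi(\eta)]=0$ for $F_0$-almost every $\eta$, these indices contribute nothing and the argument of each $\log$ that actually appears is strictly positive whenever $\epsilon>0$. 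One must verify that the convergence of the $\epsilon\searrow 0$ limit is preserved when some $P_n[F_0]$ vanish; a clean way is to split the sum into the $P_n[F_0]>0$ part (handled by dominated convergence) and the $P_n[F_0]=0$ part (handled separately using the linearity of $P_n[F_\epsilon]$ and the $x\log x \to 0$ behaviour as $x\to 0^+$), then recombine.
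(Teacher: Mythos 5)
Your calculation is correct and arrives at \eqref{eq:26}, but it is organized differently from the paper's proof. You differentiate the two additive pieces of $\chi_{tc}[F_\epsilon]$ directly: the codeword-entropy integral, which is affine in $\epsilon$, and the entropy of the Fock-diagonal average state, which you expand term by term using the affine dependence $P_n[F_\epsilon]=(1-\epsilon)P_n[F_0]+\epsilon P_n[F]$, then reassemble via the definition \eqref{eq:17} of $i[\eta,F_0]$. The paper instead starts from the representation $\chi_{tc}[F_\epsilon]=\int dF_\epsilon(\eta)\,i[\eta,F_\epsilon]$, splits off the cross term $\int dF(\eta)\,i[\eta,F_0]-\chi_{tc}[F_0]$, and reduces the proof to showing that the remainder $\lim_{\epsilon\to0}\frac{1}{\epsilon}\sum_n P_n[F_0]\log\frac{P_n[F_0]}{P_n[F_\epsilon]}=0$, which it evaluates term by term with L'H\^opital's rule and the normalization $\sum_n P_n[F]=\sum_n P_n[F_0]=1$. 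The two arguments exploit exactly the same structural facts (circular symmetry, Fock-diagonality of $\psi_{\mathrm{ave}}$, linearity of $P_n$ in the encoding), so they buy comparable generality; yours is somewhat more elementary and localizes all the analytic delicacy in a single derivative-sum interchange, whereas the paper's route additionally swaps $\lim_{\epsilon\to0}$ with the $dF_0$-integral and the Fock sum (using $\lim_{\epsilon\to0}i[\eta,F_\epsilon]=i[\eta,F_0]$ and the term-by-term L'H\^opital step) without spelling out a domination argument either. Your closing discussion of the interchange and of the indices with $P_n[F_0]=0$ is only a sketch — note in particular that the stated bound $|P_n[F_\epsilon]-P_n[F_0]|\le P_n[F]+P_n[F_0]$ does not by itself control $\log P_n[F_\epsilon]$ in the difference quotient; you would want something like $P_n[F_\epsilon]\ge(1-\epsilon)P_n[F_0]\ge\tfrac12 P_n[F_0]$ for $\epsilon\le\tfrac12$, giving the dominating sequence $(P_n[F]+P_n[F_0])\bigl(1+\tfrac{1}{\ln2}-\log P_n[F_0]\bigr)$, whose summability is exactly the finiteness of $\int dF(\eta)\,i[\eta,F_0]$ implicit in the statement — but this is a level of rigor the paper's own proof does not reach either, so it is not a gap relative to the paper.
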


\begin{proof}
    First, let $F_\epsilon = (1-\epsilon)F_0 + \epsilon F$ so $dF_\epsilon = (1-\epsilon)dF_0 + \epsilon dF$. 
    Then we have
    \begin{align}
        \chi'_{F_0}[F] &= \lim_{\epsilon\rightarrow0} \frac{1}{\epsilon} \left( \chi_{tc}[F_\epsilon] - \chi_{tc}[F_0] \right) \\
        &= \lim_{\epsilon\rightarrow0} \frac{1}{\epsilon} \left( (1-\epsilon) \int dF_0(\eta) \;i[\eta,F_\epsilon] + \epsilon\int dF(\eta) \;i[\eta,F_\epsilon] - \chi_{tc}[F_0]  \right) \\
        &= \int dF(\eta) i[\eta,F_0] - \int dF_0(\eta) i[\eta,F_0] + \lim_{\epsilon\rightarrow0} \frac{1}{\epsilon} \left( \int dF_0(\eta) i[\eta,F_\epsilon] - \chi_{tc}[F_0] \right) \;,
    \end{align}
    where we used the expression of the Holevo information in terms of the marginal information density
    \begin{align}
        \chi_{tc}[F] = \int dF(\eta) i[\eta,F] \label{eq:18}
    \end{align}
    as in \eqref{eq:18_main} to obtain the second equality and note that $\lim_{\epsilon\rightarrow 0} i[\eta,F_\epsilon]=i[\eta,F_0]$ in the last line.
    Noting that $\int dF_0(\eta) i[\eta,F_0] = \chi_{tc}[F_0]$ by \eqref{eq:18}, it remains for us to show that the limit term equals to $0$.
    We do this by again using \eqref{eq:18} as follows
    \begin{align}
        \lim_{\epsilon\rightarrow0} \frac{1}{\epsilon} \left( \int dF_0(\eta) i[\eta,F_\epsilon] - \chi_{tc}[F_0] \right) &= \lim_{\epsilon\rightarrow0} \frac{1}{\epsilon} \int dF_0(\eta) \left( i[\eta,F_\epsilon] - i[\eta,F_0] \right) \\
        &= \lim_{\epsilon\rightarrow0} \frac{1}{\epsilon} \int dF_0(\eta) \left( \sum_n P_n[\rho(\eta)] \log\frac{P_n[F_0]}{P_n[F_\epsilon]} \right) \\
        &= \lim_{\epsilon\rightarrow0} \frac{1}{\epsilon} \left( \sum_n P_n[F_0] \log\frac{P_n[F_0]}{P_n[F_\epsilon]} \right) \;.
    \end{align}
    By the virtue of the L'H\^opital's rule, we evaluate the limit of each terms in the sum to obtain
    \begin{align}
        \lim_{\epsilon\rightarrow0} \frac{P_n[F_0]}{\epsilon} \log\frac{P_n[F_0]}{P_n[F_\epsilon]} &= \lim_{\epsilon\rightarrow0} \frac{1}{\frac{d}{d\epsilon}\epsilon} \left( \frac{d}{d\epsilon} P_n[F_0]\log\frac{P_n[F_0]}{P_n[F_\epsilon]} \right) \\
        &= P_n[F_0] \lim_{\epsilon\rightarrow0} \frac{P_n[F]-P_n[F_0]}{((1-\epsilon)P_n[F_0] + \epsilon P_n[F])\ln2} \\
        &= \frac{P_n[F]-P_n[F_0]}{\ln2} \;.
    \end{align}
    Substituting this back into the sum gives us
    \begin{align}
        \lim_{\epsilon\rightarrow0} \frac{1}{\epsilon} \left( \int dF_0(\eta) i[\eta,F_\epsilon] - \chi_{tc}[F_0] \right) = \frac{1}{\ln2} \left(\sum_n P_n[F] - \sum_m P_m[F_0]\right) = 0 \;,
    \end{align}
    concluding the proof for \eqref{eq:26}.
\end{proof}

Here we will give a lemma to show weak continuity of $\rho(\eta)$, i.e. that $\<\varphi'|\rho(\eta)|\varphi\>$ is a continuous function over $\eta$ for any $|\varphi'\>$ and $|\varphi\>$, which we will need later in showing that Holevo information $\chi_\mathrm{tc}$ is continuous over the space of circularly symmetric encodings equipped with the Levy metric (for definition, see Appendix~\ref{app:levy_metric}).

\begin{lemma}\label{lem:ring_state_continuous}
    Given a resource state $\rho$ with energy $E<\infty$, its ring state $\rho(\eta)$ is weakly continuous over $\eta\in[0,1]$; namely for any pair of states $|\varphi'\>$ and $|\varphi\>$, $\<\varphi'|\rho(\eta)|\varphi\>$ is continuous.
\end{lemma}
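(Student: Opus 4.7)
The plan is to reduce the weak continuity of $\psi(\eta)$ to the continuity of its Fock-diagonal entries together with uniform control of the resulting Fock-basis series. Assuming without loss of generality that $\ket{\varphi'}$ and $\ket{\varphi}$ are unit vectors, I would first invoke Lemma~\ref{lem:avg_phase_diagonal_fock_basis_state} to write
\begin{align*}
\<\varphi'|\psi(\eta)|\varphi\> = \sum_{n=0}^\infty P_n[\psi(\eta)]\,\braket{\varphi'}{n}\braket{n}{\varphi},
\end{align*}
so that it suffices to establish (i) continuity of $\eta\mapsto P_n[\psi(\eta)]$ for every $n$, and (ii) uniform convergence of this series in $\eta\in[0,1]$.

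For (i), I would expand $P_n[\psi(\eta)] = \<n|\mathrm{Tr}_E[B_\eta(\psi\otimes\gamma_T)B_\eta^\dag]|n\>$ in the joint Fock basis. Because $B_\eta$ conserves total photon number, the only nonvanishing matrix elements $\<m',k'|B_\eta|m,k\>$ are those with $m+k=m'+k'$, and each such element is a polynomial in $\sqrt{\eta}$ and $\sqrt{1-\eta}$, hence continuous on $[0,1]$. Using further that $\gamma_T$ is diagonal in the environment Fock basis, $P_n[\psi(\eta)]$ reduces to a double sum over environment Fock indices weighted by exponentially decaying thermal probabilities; since $|\<m',k'|B_\eta|m,k\>|^2\leq 1$, a Weierstrass $M$-test then lifts the pointwise continuity of each term to continuity of the full sum. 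For (ii), I would observe that the beamsplitter linearly combines input mean photon numbers, $\sum_n n P_n[\psi(\eta)] = \eta E + (1-\eta)N\leq \max(E,N)$, with $N$ the mean photon number of $\gamma_T$. Markov's inequality then yields $\sum_{n\geq M}P_n[\psi(\eta)] \leq \max(E,N)/M$, and since $|\braket{\varphi'}{n}\braket{n}{\varphi}|\leq 1$, the tails of the series are uniformly $O(1/M)$ in $\eta$. As a uniform limit of continuous functions is continuous, this proves the claim.

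The main obstacle I anticipate is the bookkeeping in step (i): the infinite-dimensional environment precludes a purely finite polynomial argument, so one must carefully control the tail of the inner sum in the expansion of $P_n[\psi(\eta)]$, leveraging the exponential decay of the thermal probabilities together with the uniform bound on the beamsplitter matrix elements. Once this is handled, the outer series in step (ii) is controlled by a clean energy argument, and the remainder of the proof is routine real-analytic manipulation.
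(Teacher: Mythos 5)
Your proposal is correct, but it takes a genuinely different route from the paper. The paper proves the stronger statement $\lim_{\eta'\to\eta}\|\psi(\eta)-\psi(\eta')\|_1=0$ by decomposing the channel as an amplifier composed with an attenuator ($\mathcal{N}^{(n_\env)}_\eta=\mathcal{A}_G\circ\mathcal{L}_{\tilde\eta}$), bounding the difference by energy-constrained diamond norms, and invoking closed-form energy-constrained Bures-distance formulas from the literature; weak continuity then follows as a corollary. You instead prove exactly the weak continuity asserted, using that the ring state is Fock-diagonal (Lemma~\ref{lem:avg_phase_diagonal_fock_basis_state}), establishing continuity of each $P_n[\psi(\eta)]$ from the polynomial dependence of the photon-number-conserving beamsplitter matrix elements on $\sqrt{\eta},\sqrt{1-\eta}$, and controlling the outer series uniformly in $\eta$. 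This is more elementary and self-contained (no diamond-norm or Bures machinery), and it suffices both for the lemma as stated and for its only use in the continuity proof of $\chi_{tc}$; what it gives up is the quantitative trace-distance estimate the paper obtains. Two refinements tighten your sketch: in step (i), number conservation makes the inner sum for $\<n,k|B_\eta(\psi\otimes\gamma_T)B_\eta^\dag|n,k\>$ \emph{finite} (only $j\le n+k$, with $m=n+k-j$ forced), so each such term is automatically continuous, and the remaining sum over the output environment index $k$ must be dominated by weights of the form $\sum_{j\le n+k}p_{n+k-j}\,q_j$ involving the resource state's Fock probabilities $p_m$ as well as the thermal $q_j$ — thermal decay alone does not give a summable majorant over $k$; in step (ii), the energy/Markov argument is fine but unnecessary, since $|P_n[\psi(\eta)]|\le 1$ and $\sum_n|\braket{\varphi'}{n}\braket{n}{\varphi}|\le 1$ by Cauchy--Schwarz already yield a Weierstrass majorant independent of $\eta$.
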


\begin{proof}
    We will show that $\lim_{\eta'\rightarrow\eta} ||\rho(\eta) - \rho(\eta')||_1 = 0$ where $||\cdot||_1$ is the trace norm, which implies that $\<\varphi'|\rho(\eta)|\varphi\>$ is a continuous function over $\eta$ for any pair of states $|\varphi'\>$ and $|\varphi\>$.
    First, we 
    consider amplifier channel $\mathcal{A}_G$ with gain $G=(1-\eta)n_\mathrm{env}+1$ and attenuator channel $\mathcal{L}_{\tilde\eta}$ with attenuation $\tilde\eta = \frac{\eta}{G}$, defined by unitaries $U$ ad $V$ over the input system and environment system as 
    \begin{align*}
        \mathcal{A}_G(\rho) = \mathrm{tr}_E(U \rho\otimes\rho_\env U^\dag)
        \quad\textup{and}\quad
        \mathcal{L}_{\tilde{\eta}}(\rho) = \mathrm{tr}_E(V \rho\otimes\rho_\env V^\dag) \;,
    \end{align*}
    respectively.
    Unitaries $U$ and $V$ are defined by their actions (in the Heisenberg picture) on the annihilation operator $\hat{a}$ of the input system \cite{holevo_werner_2001evaluating,caruso_etal_2006one}
    \begin{gather*}
        U^\dag (\hat{a}\otimes I_\env) U = \sqrt{G}(\hat{a}\otimes I_\env) + \sqrt{G-1}(I\otimes\hat{a}_\mathrm{env}^\dag) \\
        \quad\textup{and}\quad \\
        V^\dag (\hat{a}\otimes I_\env) V = \sqrt{\tilde\eta} (\hat{a}\otimes I_\env) + \sqrt{1-\tilde\eta}(I\otimes\hat{a}_\mathrm{env})
    \end{gather*}
    where $\hat{a}_\env$ and $I_\env$ are the annihilation operator and identity operator of the environment, respectively.
    Similarly for attenuation $\eta'$ we define $G'=(1-\eta')n_\mathrm{env}+1$ and $\tilde\eta' = \frac{\eta'}{G'}$.
    Using the one-mode channel decomposition result in \cite{caruso_etal_2006one}, we may write a channel $\mathcal{N}_\eta^{(n_\env)}$ that mixes an input state with attenuation $\eta$ with a thermal environment with mean photon number $n_\env$ as a composition of an attenuation and an amplifier channel
    \begin{gather*}
        \mathcal{N}_\eta^{(n_\env)} = \mathcal{A}_G \circ \mathcal{L}_{\tilde{\eta}} \;.
    \end{gather*}
    Therefore we may write the thermal channel output state $\rho(\eta) = \mathcal{A}_G \circ \mathcal{L}_{\tilde\eta} (\rho)$ (similarly with attenuation $\eta'$), so we get
    \begin{align*}
        ||(\mathcal{A}_G\circ\mathcal{L}_{\tilde\eta} - \mathcal{A}_{G'}\circ\mathcal{L}_{\tilde\eta'})(\rho)||_1 &= ||(\mathcal{A}_G\circ\mathcal{L}_{\tilde\eta} - \mathcal{A}_{G'}\circ\mathcal{L}_{\tilde\eta} + \mathcal{A}_{G'}\circ\mathcal{L}_{\tilde\eta} - \mathcal{A}_{G'}\circ\mathcal{L}_{\tilde\eta'})(\rho)||_1 \\
        &\stackrel{(1)}{\leq} ||(\mathcal{A}_G\circ\mathcal{L}_{\tilde\eta} - \mathcal{A}_{G'}\circ\mathcal{L}_{\tilde\eta})(\rho)||_1 + ||(\mathcal{A}_{G'}\circ\mathcal{L}_{\tilde\eta} - \mathcal{A}_{G'}\circ\mathcal{L}_{\tilde\eta'})(\rho)||_1 \\
        &= ||(\mathcal{A}_G - \mathcal{A}_{G'})\circ\mathcal{L}_{\tilde\eta}(\rho)||_1 + ||\mathcal{A}_{G'}\circ(\mathcal{L}_{\tilde\eta} - \mathcal{L}_{\tilde\eta'})(\rho)||_1 \\
        &\stackrel{(2)}{\leq} ||(\mathcal{A}_G - \mathcal{A}_{G'})\circ\mathcal{L}_{\tilde\eta}(\rho)||_1 + ||(\mathcal{L}_{\tilde\eta} - \mathcal{L}_{\tilde\eta'})(\rho)||_1
    \end{align*}
    where we again used the triangle inequality and data processing inequality to obtain (1) and (2), respectively.
    Because $\rho$ has energy $E$, we can use the energy-constrained diamond norm \cite{shirokov_2018energy,becker_datta_2020convergence} on the amplifier channels and attenuator channels in the last line, defined as $||\mathcal{N}-\mathcal{M}||_{\diamond E} = \sup_\rho ||(\mathcal{N}-\mathcal{M}) \otimes\mathcal{I}_C(\rho)||_1$ where $\mathcal{I}_C$ is the identity channel for some ancilla system $C$ and the supremum is taken over all states $\rho$ on the product space of the input system and the ancilla system $C$ with energy constraint $E$ on the input system.
    Thus we obtain the following upper bound
    \begin{align*}
        ||(\mathcal{A}_G\circ\mathcal{L}_{\tilde\eta} - \mathcal{A}_{G'}\circ\mathcal{L}_{\tilde\eta'})(\rho)||_1 &\leq ||\mathcal{A}_G - \mathcal{A}_{G'}||_{\diamond E\tilde\eta} + ||\mathcal{L}_{\tilde\eta} - \mathcal{L}_{\tilde\eta'}||_{\diamond E} \\
        &\leq 2\beta_{E\eta}(\mathcal{A}_G, \mathcal{A}_{G'}) + 2\beta_E(\mathcal{L}_{\tilde\eta}, \mathcal{L}_{\tilde\eta'})
    \end{align*}
    as attenuator channel $\mathcal{L}_{\tilde\eta}$ scales the energy constraint by $\tilde\eta$ for the first term, and for the second inequality we use \cite[Lemma 10.17]{holevo_2019quantum} to bound the diamond norms by energy-constrained Bures distance $\beta_E$ which can be expressed as \cite{nair_2018quantum,nair_etal_2022optimal}
    \begin{gather*}
        \beta_{E\eta}(\mathcal{A}_G, \mathcal{A}_{G'}) = \sqrt{2} \sqrt{ 1-(1-\{E\eta\})\nu^{\lfloor E\eta\rfloor} - \{E\eta\}\nu^{\lfloor E\eta\rfloor+1} } \\
        \beta_E(\mathcal{L}_{\tilde\eta}, \mathcal{L}_{\tilde\eta'}) = \sqrt{2} \sqrt{ 1-(1-\{E\})\mu^{\lfloor E\rfloor} - \{E\}\mu^{\lfloor E\rfloor+1} }
    \end{gather*}
    where $\lfloor x\rfloor$ and $\{x\}$ are the integer part and fractional part of $x$, respectively, and $\mu = \sqrt{\tilde\eta\tilde\eta'} + \sqrt{(1-\tilde\eta)(1-\tilde\eta')}$ and $\nu = (\sqrt{GG'} + \sqrt{(G-1)(1-G'-1)})^{-1}$.
    
    As $\eta'\rightarrow\eta$ we have $\tilde\eta'\rightarrow\tilde\eta$ and $G'\rightarrow G$ which implies $\mu\rightarrow1$ and $\nu\rightarrow1$.
    Hence both $\beta_{E\eta}(\mathcal{A}_G, \mathcal{A}_{G'})$ and $\beta_E(\mathcal{L}_{\tilde\eta}, \mathcal{L}_{\tilde\eta'})$ limit as $\eta'\rightarrow\eta$ is $0$, implying that $\lim_{\eta'\rightarrow\eta} ||\rho(\eta) - \rho(\eta')||_1 = 0$, and therefore as a consequence $\rho(\eta)$ is weakly continuous.
\end{proof}

\begin{lemma}\label{lem:holevo_info_continuous}
    Given resource state $\rho$ with energy $E<\infty$, the thermal channel Holevo information $\chi_{tc}$ is continuous over space of circularly symmetric encodings $\mathscr{F}$ equipped with the Levy metric $d_{Le}$.
\end{lemma}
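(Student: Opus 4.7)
The plan is to reduce continuity of $\chi_{tc}$ on $(\mathscr{F}, d_{Le})$ to two facts: (i) the Levy metric on CDFs over the compact interval $[0,1]$ metrizes weak convergence of the associated probability measures, and (ii) both $\eta \mapsto \psi(\eta)$ and $\eta \mapsto S(\psi(\eta,0))$ are (uniformly) continuous and bounded, so standard weak-convergence integration carries the day, modulo controlling the von Neumann entropy of $\psi_{\text{ave}}[F]$, which is the only non-routine step.

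First, I would recall the standard fact that on the compact interval $[0,1]$, $d_{Le}(F_n, F) \to 0$ is equivalent to the weak convergence $\int f\, dF_n \to \int f\, dF$ for every bounded continuous $f : [0,1] \to \mathbb{R}$. Splitting the Holevo information as
\begin{equation*}
\chi_{tc}[F] \;=\; S(\psi_{\text{ave}}[F]) \;-\; \int dF(\eta)\, S(\psi(\eta,0)),
\end{equation*}
I would first handle the second term. By Lemma~\ref{lem:ring_state_continuous} (applied to the pure-state ensemble defining $\psi(\eta,0)$) the map $\eta \mapsto \psi(\eta,0)$ is trace-norm continuous, and since the codeword energy is uniformly bounded by $E + n_{\text{env}}$, continuity of the von Neumann entropy on the energy-constrained set (Shirokov-type uniform continuity, see \cite{holevo_2019quantum}) gives that $\eta \mapsto S(\psi(\eta,0))$ is continuous and bounded on $[0,1]$. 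Weak convergence of $F_n \to F$ then yields $\int dF_n(\eta)\, S(\psi(\eta,0)) \to \int dF(\eta)\, S(\psi(\eta,0))$.

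Second, to control $S(\psi_{\text{ave}}[F])$, I would show that $F \mapsto \psi_{\text{ave}}[F]$ is continuous from $(\mathscr{F}, d_{Le})$ into the set of states with trace norm. For any bounded operator $A$ with finite energy-matrix elements, the map $\eta \mapsto \operatorname{Tr}(A\,\psi(\eta))$ is continuous and bounded by Lemma~\ref{lem:ring_state_continuous}, so $\operatorname{Tr}(A\,\psi_{\text{ave}}[F_n]) \to \operatorname{Tr}(A\,\psi_{\text{ave}}[F])$; taking $A$ to be projectors and number-operator cutoffs shows weak convergence of $\psi_{\text{ave}}[F_n]$ to $\psi_{\text{ave}}[F]$. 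Because $\operatorname{Tr}(\hat{a}^\dagger \hat{a}\, \psi_{\text{ave}}[F]) \le E + n_{\text{env}}$ uniformly in $F$, these states lie in a compact (for trace norm) energy-bounded set, so weak convergence upgrades to trace-norm convergence of $\psi_{\text{ave}}[F_n] \to \psi_{\text{ave}}[F]$.

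Finally, since all the averaged states share a uniform energy bound, the Winter/Shirokov continuity estimate for the von Neumann entropy on energy-constrained states applies, giving $S(\psi_{\text{ave}}[F_n]) \to S(\psi_{\text{ave}}[F])$. Combining the two terms yields $\chi_{tc}[F_n] \to \chi_{tc}[F]$, proving continuity. The main obstacle in this plan is the entropy-continuity step: von Neumann entropy is only lower semi-continuous under weak convergence in general, and one must invoke the energy bound (which is automatic here because $\eta \in [0,1]$ and the resource has finite energy $E$) to promote weak convergence to trace-norm convergence and then to entropy convergence; everything else is a routine application of weak convergence of measures on $[0,1]$.
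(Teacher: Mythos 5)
Your proposal is correct and follows essentially the same route as the paper: split $\chi_{tc}$ into $S(\psi_{\mathrm{ave}})$ minus the integral of the (phase-independent) codeword entropies, use the fact that Levy convergence on $[0,1]$ metrizes weak convergence of measures together with continuity and boundedness of $\eta\mapsto\psi(\eta)$ from Lemma~\ref{lem:ring_state_continuous} to pass to the limit in both terms, and use the uniform energy bound to get convergence of the entropies. The only (harmless) difference is the final step: the paper invokes continuity of the von Neumann entropy under weak convergence on energy-bounded sets (\cite[Lemma 11.8]{holevo_2019quantum}) directly, whereas you first upgrade weak convergence of the averaged states to trace-norm convergence and then apply the Winter--Shirokov energy-constrained continuity estimate---both are valid.
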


We emphasize that this continuity condition partially rests on our assumption that the resource state has a finite energy $E_\mathrm{max}$ as indicated in the beginning of Section~\ref{sec_thermal_operation}.
Here we use the continuity of von Neumann entropy $S$ over $\mathscr{G}_E$, which is the set of states with energy at most $E_\mathrm{max}$, shown in \cite[Lemma 11.8]{holevo_2019quantum} to show the continuity of $\chi_{tc}$ over $\mathscr{F}$ in 4 steps:
\begin{enumerate}
    \item Convergence of a sequence of encodings $\{F_n\}_{n\in\N} \subseteq\mathscr{F}$ in the Levy metric to some $F\in\mathscr{F}$ implies that the corresponding sequence of average states $\{\rho_{F_n}\}_{n\in\N} \subseteq\mathscr{G}_E$ weakly converges to $\rho_F \in\mathscr{G}_E$.
    
    \item Weak convergence (for detailed discussions see \cite[Section 11.1]{holevo_2019quantum}) of $\{\rho_{F_n}\}_{n\in\N}$ to some $\rho_F$ implies that $\lim_{n\rightarrow\infty} S(\rho_{F_n}) = S(\rho_F)$ because $S$ is continuous (by \cite[Lemma 11.8]{holevo_2019quantum}).
    Therefore, $S(\rho_F)$ is continuous over $F\in\mathscr{F}$.
    
    \item The second term of $\chi_{tc}$ , $\int dF'(\eta) \rho(\eta)$ is continuous over $F'\in\mathscr{F}$, i.e.
    \begin{align*}
        \lim_{n\rightarrow\infty} \int dF_n(\eta) S(\rho(\eta)) = \int dF(\eta) S(\rho(\eta))
    \end{align*}
    if $\lim_{n\rightarrow\infty} F_n = F$ in the Levy metric.
    
    \item Putting the previous two steps together, the Holevo information
    \begin{align*}
        \chi_{tc}[F] = S(\rho_F) - \int dF(\eta) S(\rho(\eta))
    \end{align*}
    is continuous over $F\in\mathscr{F}$.
\end{enumerate}

\begin{proof}
    First we show that if a sequence of encodings $\{F_n\}_{n\in\N} \subseteq\mathscr{F}$ converging to some $F\in\mathscr{F}$ under the Levy metric, then sequence of states $\{\rho_{F_n}\}_{n\in\N}$ is weakly converging to some $\rho_F$, where $\rho_{F'} = \int dF'(\eta) \rho(\eta)$ with $F'\in\mathscr{F}$.
    More precisely, for all encoding sequence $\{F_n\}_{n\in\N} \subseteq\mathscr{F}$ such that $\lim_{n\rightarrow\infty} d_{Le}(F_n,F) = 0$ for some $F\in\mathscr{F}$, we have $\lim_{n\rightarrow\infty} \<\varphi'|\rho_{F_n}|\varphi\> = \<\varphi'|\rho_F|\varphi\>$ for all states $|\varphi'\>,|\varphi\>$.
    First, we note that for any $F'\in\mathscr{F}$
    \begin{align*}
        \<\varphi'|\rho_{F'}|\varphi\> = \int dF'(\eta) \<\varphi'|\rho(\eta)|\varphi\> \;.
    \end{align*}
    Given that a convergence in the Levy metric implies a weak convergence of distributions (defined as $\lim_{n\rightarrow\infty} \int dF_n(x) f(x) = \int dF(x) f(x)$ for continuous and bounded $f$) in compact metric space $\mathscr{F}$ (hence is separable), and since $\<\varphi'|\rho(\eta)|\varphi\>$ is continuous (by Lemma~\ref{lem:ring_state_continuous}) and bounded (as $\rho(\eta)$ is a density matrix hence is a bounded operator) over $\eta\in[0,1]$ for all $|\varphi'\>,|\varphi\>$, we have
    \begin{align*}
        \lim_{n\rightarrow\infty} \<\varphi'|\rho_{F_n}|\varphi\> = \lim_{n\rightarrow\infty} \int dF_n(\eta) \<\varphi'|\rho(\eta)|\varphi\> = \int dF(\eta) \<\varphi'|\rho(\eta)|\varphi\> = \<\varphi'|\rho_F|\varphi\> \;.
    \end{align*}
    Therefore, Levy-metric convergence $\lim_{n\rightarrow\infty} d_{Le}(F_n,F) = 0$ implies weak convergence of states (defined as $\lim_{n\rightarrow\infty} \<\varphi'|\rho_{F_n}|\varphi\> = \<\varphi'|\rho_F|\varphi\>$).
    
    By \cite[Lemma 11.8]{holevo_2019quantum}, weak convergence of a sequence of bounded-energy states $\{\rho_n\}_{n\in\N} \subseteq\mathscr{G}_E$ to some $\rho\in\mathscr{G}_E$ implies the convergence of their von Neumann entropy.
    More precisely, $\lim_{n\rightarrow\infty} \<\varphi'|\rho_n|\varphi\> = \<\varphi'|\rho|\varphi\>$ implies 
    \begin{align*}
        \limsup_{n\rightarrow\infty} S(\rho_n) \leq S(\rho) \leq \liminf_{n\rightarrow\infty} S(\rho_n) \;.
    \end{align*}
    
    Now, in order to show that the second term of $\chi_{tc}$ is continuous over $\mathscr{F}$, we first note that $S(\rho(\eta))$ is continuous and bounded over $\eta\in[0,1]$ (as the thermal channel does not increase energy, hence \cite[Lemma 11.8]{holevo_2019quantum} applies) and that the convergence of a sequence of encodings $\{F_n\}_{n\in\N} \subseteq\mathscr{F}$ in the Levy metric to $F\in\mathscr{F}$ implies weak convergence of distributions (again, because $\mathscr{F}$ is a compact metric space).
    Therefore, $\lim_{n\rightarrow\infty} F_n = F$ in the Levy metric implies that
    \begin{align*}
        \lim_{n\rightarrow\infty} \int dF_n(\eta) S(\rho(\eta)) = \int dF(\eta) S(\rho(\eta)) \;.
    \end{align*}
    Hence the limit of a sequence of thermal channel Holevo information with sequence encodings $\{F_n\}_{n\in\N}$ converging to $F$ in the Levy metric is
    \begin{align*}
        \lim_{n\rightarrow\infty} \chi_{tc}[F_n] = \lim_{n\rightarrow\infty} S(\rho_{F_n}) - \int dF_n(\eta) S(\rho(\eta)) = S(\rho_F) - \int dF(\eta) S(\rho(\eta)) = \chi_{tc}[F] \;.
    \end{align*}
    Which implies that $\chi_{tc}$ is continuous over $\mathscr{F}$.
\end{proof}

Now we have shown that $\chi_{tc}$ is strictly concave, weakly differentiable, and continuous (because von Neumann entropy is continuous) over $\mathscr{F}$, also that the space of circularly symmetric encodings $\mathscr{F}$ is convex and compact.
Therefore, $\chi_{tc}$ and $\mathscr{F}$ satisfy all conditions of the optimization theorem (Lemma~\ref{lem:optimization_theorem}) to guarantee the existence of a unique optimal encoding $F^*$, i.e $\chi_{tc}[F^*] = \sup_F \chi_{tc}[F]$ if and only if $\chi'_{F^*}[F]\leq 0 $ for all $F$. 
In other words, the Holevo function is non-increasing in any direction $F$ at $F^*$.
We will use this fact to show that $F^*$ is optimal iff \eqref{eq:9} and \eqref{eq:9.2} holds for $F^*$.

First we will show that if \eqref{eq:9} and \eqref{eq:9.2} holds for encoding $F^*$, then $F^*$ must satisfy $\chi_{F^*}'[F]\leq 0$ for all $F$.
This can be shown by simply noting that if \eqref{eq:9} holds, then for any encoding $F$ 
\begin{align}
    \int dF(\eta) i[\eta,F^*] \leq \int dF(\eta) \chi_{tc}[F^*] = \chi_{tc}[F^*] \;.
\end{align}
Hence $\chi_{F^*}'[F]\leq0$ for all $F$ by lemma \ref{lem_holevo_derivative}.
Now recall the set of points of increase $\mathcal{I}$ (see definition~\ref{def:POI}). 
Then by noting that $\int_{\mathcal{I}} dF^*(\eta) = 1$ we have
\begin{align}
    \int dF^*(\eta) i[\eta,F^*] = \int_\mathcal{I} dF^*(\eta) i[\eta,F^*] + \int_{[0,\infty)\backslash \mathcal{I}} dF^*(\eta) i[\eta,F^*] = \int_\mathcal{I} dF^*(\eta) \chi_{tc}[F^*] = \chi_{tc}[F^*] \;,
\end{align}
showing that equality $\chi_{F^*}'[F^*]=0$ is achieved for encoding $F^*$.

For the converse, we will show that both \eqref{eq:9} and \eqref{eq:9.2} must hold for any optimal encoding $F^*$ by showing that if either \eqref{eq:9} or \eqref{eq:9.2} does not hold, we will get a contradiction.
Suppose there exist $\eta^*\in[0,1]$ such that $i[\eta^*,F^*] > \chi_{tc}[F^*]$.
By defining an encoding 
\begin{align}
    G(\eta) = \begin{cases}
    1,\; \text{for $\eta\geq\eta^*$}\\ 
    0,\; \text{for $\eta<\eta^*$}\end{cases},
\end{align}
we can write $i[\eta^*,F^*] = \int dG(\eta) i[\eta,F^*] > \chi[F^*]$ which contradicts the necessary and sufficient weak derivative condition for optimal $F^*$ that $\chi'_{F^*}[F]\leq 0$ for all $F$.

Similarly for \eqref{eq:9.2}, suppose that there exist some points of increase $C \subset \mathcal{I}$ such that $\int_C dF^*(\eta)=\delta>0$ and $i[\eta,F^*] < \chi[F^*]$ for all $\eta\in C$.
For the remaining points of increase $\eta\in \mathcal{I}\backslash C$, we have $\int_{\mathcal{I}\backslash C} dF^*(\eta) = 1-\delta$ and $i[\eta,F^*] = \chi_{tc}[F^*]$.
Because $\chi'_{F^*}(F^*)=0$ and $F^*([0,\infty) \backslash \mathcal{I})=0$, we have
\begin{align}
    \chi[F^*] = \int dF^*(\eta) i[\eta,F^*] = \int_{\mathcal{I}\backslash C} dF^*(\eta) i[\eta,F^*] + \int_C dF^*(\eta) i[\eta,F^*] < (1-\delta)\chi[F^*] + \delta \chi[F^*] = \chi[F^*]
\end{align}
giving us a contradiction $\chi[F^*]<\chi[F^*]$, hence both \eqref{eq:9} and \eqref{eq:9.2} must hold for any optimal encoding $F^*$. 
Therefore, concluding the proof of Proposition \ref{prop_2}.

\section{Proof of finite points-of-increase for coherent state resource at temperature $T=0$}\label{appendix:coherent_t0_finite_POI}

In this appendix, we give a proof of proposition~\ref{prop:finite_POI_coht0}.
This closely follows the proof for the finiteness of the set of points-of-increase for optimal photon channel by Shamai~\cite{Shamai1990}, which uses the Bolzano-Weierstrass theorem and the identity theorem of analytic function to show that infinite points-of-increase implies that the marginal information density $i[\lambda,F^*]$ is equal to the mutual information $I(F^*)$ for all photon channel intensity $\lambda\geq0$.
Shamai then shows that this leads to a contradiction.
We adopt the same reasoning to show that infinite points-of-increase implies that $i[\eta,F^*] = \chi[F^*]$ for all positive reals $\eta$, which leads to a contradiction.
Because $\rho(\eta,0)$ is a pure state, which implies that its entropy is $0$, the marginal information density is
\begin{align}
    i[\eta,F^*] = -\sum_n P_n[\rho(\eta)] \log P_n[F^*] 
    \;.
\end{align}
So, to establish the contradiction if the points-of-increase is assumed to has infinite cardinality we show that $i[\cdot,F^*]$ is analytic on positive reals.

\begin{lemma}
    Marginal information density $i[\eta,F^*]$ for coherent thermal state resource state in zero-temperature environment is analytic for all $\eta\in(0,\infty)$.
\end{lemma}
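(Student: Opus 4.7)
The plan is to split $i[\eta,F^*] = -\sum_n P_n[\psi(\eta)]\log P_n[F^*] - S(\psi(\eta,0))$ and show each piece is analytic on $(0,\infty)$. For the displaced-thermal resource $\psi = D(\alpha)\rho_\th(n_\res)D^\dag(\alpha)$ with $n_\res>0$ in a $T=0$ environment, the codeword is again displaced thermal: $\psi(\eta,0) = D(\sqrt{\eta}\alpha)\rho_\th(\eta n_\res)D^\dag(\sqrt{\eta}\alpha)$. Hence $S(\psi(\eta,0)) = g(\eta n_\res)$, with $g(x) = (x+1)\log(x+1)-x\log x$, and this is manifestly real-analytic on $\eta>0$. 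The substance therefore lies in proving the series term is analytic; the entropy term may not simply be discarded, so I would treat it explicitly and keep it in the final combination.

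Because photon-number projectors are phase-invariant, $P_n[\psi(\eta)]$ coincides with the Fock distribution of the displaced thermal state with parameters $(\sqrt{\eta}|\alpha|,\eta n_\res)$; explicitly,
\[
P_n[\psi(\eta)] = \frac{(\eta n_\res)^n}{(\eta n_\res+1)^{n+1}}\exp\!\left(-\frac{\eta|\alpha|^2}{\eta n_\res+1}\right) L_n\!\left(-\frac{\eta|\alpha|^2}{\eta n_\res(\eta n_\res+1)}\right).
\]
Each term extends to a holomorphic function of $\eta$ on a complex neighborhood $\mathcal{U}$ of $(0,\infty)$ avoiding the pole at $\eta n_\res=-1$, and the coefficients $\log P_n[F^*]$ are $\eta$-independent constants. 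By Weierstrass' theorem on uniform limits of analytic functions, it will suffice to show the partial sums $\sum_{n=0}^{N} P_n[\psi(\eta)]\log P_n[F^*]$ converge uniformly on every compact $K\subset\mathcal{U}$.

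I would obtain the required uniform summability from two ingredients. First, on a thin enough complex strip around $(0,\infty)\cap K$, the ratio $|\eta n_\res/(\eta n_\res+1)|$ stays strictly below some $r_K<1$; combining this geometric factor with classical bounds on Laguerre polynomials in the complex plane (derivable from the generating function $\sum_n L_n(z)t^n=(1-t)^{-1}\exp(-zt/(1-t))$ and Cauchy's formula, which give $|L_n(z)|\leq C(z)e^{c\sqrt{n}}$ on compact $z$-sets) yields $|P_n[\psi(\eta)]|\leq C_K\, r_K^n e^{c\sqrt{n}}$. Second, to control $|\log P_n[F^*]|$ I would use Proposition~\ref{prop_2}: the optimal $F^*$ has at least one point of increase $\eta_0\in(0,1]$, so $P_n[F^*]\geq c_0\,P_n[\psi(\eta_0)]$ for some $c_0>0$; the closed form then implies $P_n[\psi(\eta_0)]$ decays no faster than a fixed geometric rate times a polynomial in $n$, so $|\log P_n[F^*]|=O(n)$.

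The main obstacle is estimating $L_n$ off the real axis tightly enough that the polynomial growth of $|\log P_n[F^*]|$ together with the sub-exponential $e^{c\sqrt{n}}$ from the Laguerre factor is dominated by the geometric decay $r_K^n$. Choosing $\mathcal{U}$ as a sufficiently narrow complex strip around $(0,\infty)$ achieves this uniformly on compact subsets. With uniform summability established, Weierstrass' theorem produces an analytic $\sum_n P_n[\psi(\eta)]\log P_n[F^*]$ on $\mathcal{U}$, and adding the analytic entropy contribution $-g(\eta n_\res)$ completes the proof that $i[\eta,F^*]$ is analytic on $(0,\infty)$.
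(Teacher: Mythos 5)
Your proposal proves a different statement from the one the lemma actually makes. The garbled phrase ``coherent thermal state resource state'' refers, per the appendix this lemma sits in (the proof of Proposition~\ref{prop:finite_POI_coht0}), to a \emph{coherent-state} resource $\psi=\ketbra{\alpha}{\alpha}$ in a $T=0$ environment: the codewords $\psi(\eta,0)$ are pure, so $S(\psi(\eta,0))=0$, and $P_n[\psi(\eta)]=e^{-\eta|\alpha|^2}(\eta|\alpha|^2)^n/n!$ is Poisson. You instead take $\psi=D(\alpha)\rho_\th(n_\res)D^\dag(\alpha)$ with $n_\res>0$, and both of your key ingredients break down in the case the lemma concerns: the Laguerre representation has argument $-\eta|\alpha|^2/(\eta n_\res(\eta n_\res+1))$, which leaves every compact set as $n_\res\to0$ (so the compact-set bound $|L_n(z)|\leq C(z)e^{c\sqrt{n}}$ no longer applies), and the geometric factor $|\eta n_\res/(\eta n_\res+1)|\leq r_K<1$ degenerates to $0$ while the Laguerre factor blows up, a $0\times\infty$ limit. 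Your remark that ``the entropy term may not simply be discarded'' is also wrong in the intended setting: it is identically zero there, and discarding it is precisely the paper's first step. So, as written, the proposal does not prove the lemma.

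That said, your overall skeleton --- complexify $\eta$ to a thin strip around $(0,\infty)$, bound $|\log P_n[F^*]|$ polynomially in $n$ using a point of increase of $F^*$, establish uniform convergence on compact disks, and invoke Weierstrass' theorem --- is sound and is essentially the same mechanism as the paper's, which works on $\C_\delta=\{z : \mathrm{Re}(z)>\delta,\ |\mathrm{Im}(z)|<\delta\}$ and applies the M-test; the difference is that the paper splits $i[\eta,F^*]=i_{dd}[\eta|\alpha|^2,F^*]-(\text{Poisson entropy})$, cites Shamai for analyticity of the direct-detection density $i_{dd}$, and only proves analyticity of $\mathbb{E}_{\mathrm{Poi}(\eta|\alpha|^2)}[\log N!]$ directly via the Evans--Boersma formula, whereas you attack $\sum_n P_n[\psi(\eta)]\log P_n[F^*]$ head-on, which is more self-contained. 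If you rerun your estimates with Poisson statistics ($n_\res=0$), the argument in fact becomes easier: $|P_n[\psi(\eta)]|\leq e^{-x|\alpha|^2}(|\eta||\alpha|^2)^n/n!$ decays factorially, and a point of increase $\eta_0>0$ gives $-\log P_n[F^*]=O(n\log n)$, which factorial decay dominates. Two further caveats: your bound $P_n[F^*]\geq c_0 P_n[\psi(\eta_0)]$ silently assumes $F^*$ has an atom at $\eta_0$ (a point of increase need not be one); this is fixable by integrating over an interval $(\eta_0-\epsilon,\eta_0]$ of positive $F^*$-measure and lower-bounding $P_n[\psi(\eta')]$ there by $c\,r^n$. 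And note that the displaced-thermal case you actually address is one the paper supports only numerically (Section~\ref{sec:zero_temp_mixed_state}), so a completed version of your argument would be a genuine extension rather than a proof of this lemma.
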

\begin{proof}
    Marginal information density in the case of direct detection channel
    \begin{align*}
        i_{dd}[x,F] = -\sum_n e^{-x}\frac{x^n}{n!} \log \frac{\int dF(\tilde{x}) e^{-\tilde{x}} \frac{\tilde{x}^n}{n!}}{e^{-x} \frac{x^n}{n!}}
    \end{align*}
    is analytic for all real $x>0$ as argued in \cite{Shamai1990}.
    Setting $x=\eta|\alpha|^2$ we can write
    \begin{align*}
        i[\eta,F^*] = i_{dd}[\eta|\alpha|^2,F^*] - \underbrace{ \sum_n P_n[\rho(\eta)] \log P_n[\rho(\eta)] }_{(1)}
    \end{align*}
    where $P_n[\rho(\eta)] = \textup{Poi}_n(\eta|\alpha|^2) = e^{-\eta|\alpha|^2} \frac{(\eta|\alpha|^2)^n}{n!}$ therefore term (1) is $-H(\textup{Poi}(\eta|\alpha|^2))$, the negative of Shannon entropy of Poisson random variable with mean $\eta|\alpha|^2$.
    It is shown in \cite{evans_boersma_1988entropy} that 
    \begin{align}\label{eqn:coherent_0_temp_prob_upper_bound}
    \begin{split}
        H(\textup{Poi}(\eta|\alpha|^2)) &= \eta|\alpha|^2(1-\log \eta|\alpha|^2) + \sum_n P_n[\rho(\eta)] \log n! \\
        &= \eta|\alpha|^2(1-\log \eta|\alpha|^2) + \mathbb{E}_{\mathrm{Poi(\eta|\alpha|^2)}}[\log N!]
    \end{split}
    \end{align}
    where $N$ is a Poisson random variable with mean $\eta|\alpha|^2$.
    The first term $\eta|\alpha|^2(1-\log \eta|\alpha|^2)$ is clearly analytic, therefore we only need to show that $\mathbb{E}_{\mathrm{Poi(\eta|\alpha|^2)}}[\log N!]$ is analytic for all $\eta>0$.
    
    Now for the rest of the proof we will show that the second term of \eqref{eqn:coherent_0_temp_prob_upper_bound} is analytic in the set $\C_\delta = \{z\in\C : \mathrm{Re}(z)>\delta \;\wedge\; |\mathrm{Im}(z)|<\delta\}$ for all $\delta>0$ by showing that the sum is uniformly convergent in any closed disk in $\C_\delta$.
    Its analyticity over $\R_{>0}$ then follows because it is analytic over all $\C_\delta$ for $\delta>0$ and because $\R_{>0} \subseteq \C_\delta$.
    
    Let us now extend $P_n[\rho(\eta)]$ to all $\eta\in\C_\delta$ for some small $\delta>0$, which is an open subset of $\C$ containing all real numbers in $(\delta,\infty)$.
    Now let $\eta = x+iy$ and consider
    \begin{align}
        |P_n[\rho(\eta)]| &= \Big| e^{-\eta|\alpha|^2} \frac{(\eta|\alpha|^2)^n}{n!} \Big| \\
        &= e^{-x|\alpha|^2} \frac{(\sqrt{x^2+y^2}|\alpha|^2)^n}{n!} \\
        &\leq e^{-x|\alpha|^2} \frac{(x|\alpha|^2)^n + (|y||\alpha|^2)^n}{n!} = P_n[\rho(x)] + e^{-x|\alpha|^2} \frac{(|y||\alpha|^2)^n}{n!} \label{eqn:coh_st_zero_temp_marg_info_complex_bound}\;.
    \end{align}
    So we can now upper bound the modulus of the expected value of $\log N!$ for the extended Poisson distribution function $P_n[\rho(\eta)]$
    \begin{align*}
        |\mathbb{E}_{\mathrm{Poi}(\eta|\alpha|^2)}[\log N!]| &= \Big| \sum_n P_n[\rho(\eta)]\log n! \Big| \\
        &\leq \sum_n |P_n[\rho(\eta)]| \log n! \\
        &\leq \mathbb{E}_{\mathrm{Poi}(x|\alpha|^2)}[\log N!] + e^{-x|\alpha|^2} \sum_n \frac{(|y||\alpha|^2)^n}{n!} \log n! \\
        &< 2 \mathbb{E}_{\mathrm{Poi}(x|\alpha|^2)}[\log N!]
    \end{align*}
    where the last inequality is because $|y|<\delta<x$ for any $\eta=x+iy\in\C_\delta$.
    
    Now
    we have $\log n! \leq \log n^n = n\log n \leq n(n-1)$, and therefore by considering a closed disk $D_r(z_0) = \{z\in\C_\delta : |z-z_0|\leq r\}$ with $z_0\in\C_\delta$ and $r>0$ and $\xi$ the maximum of the real part of any $z\in D_r(z_0)$ we get upper bound
    \begin{align*}
        \mathbb{E}_{\mathrm{Poi}(x|\alpha|^2)}[\log N!] &\leq \mathbb{E}_{\mathrm{Poi}(x|\alpha|^2)}[N\log N] \leq \mathbb{E}_{\mathrm{Poi}(x|\alpha|^2)}[N(N-1)] \leq (x|\alpha|^2)^2 \leq (\xi|\alpha|^2)^2 \;.
    \end{align*}
    
    Now we will show that $\mathbb{E}_{\mathrm{Poi}(\eta|\alpha|^2)}[\log N!]$ uniformly converges in any closed disk $D_r(z_0) = \{z\in\C_\delta : |z-z_0|\leq r\}$ with $z_0\in\C_\delta$ and $r>0$ such that $D_r(z_0)\subseteq\C_\delta$.
    Using a standard theorem of complex analysis (see e.g. \cite[Theorem 3.1.8]{marsden_etal_1999basic}) this implies that $\mathbb{E}_{\mathrm{Poi}(\eta|\alpha|^2)}[\log N!]$ is analytic in $\C_\delta$ because each of the term in the sum is analytic.
    
    To show uniform convergence of $\mathbb{E}_{\mathrm{Poi}(\eta|\alpha|^2)}[\log N!]$ in $D_r(z_0)$ we will use the Weierstrass M-test (see e.g. \cite[Theorem 3.1.7]{marsden_etal_1999basic}) which states that $f_n$ uniformly converges to $f$ in some set $A$ if there is some real numbers $\{M_n\}_n$ such that $\sum_n M_n <\infty$ and $|f_n(z)|\leq M_n$ for all $n$ and all $z\in A$.
    Let $\xi = \max\{\mathrm{Re}(z) : z\in D_r(z_0)\}$ (i.e. the largest real part of complex numbers in disk $D_r(z_0)$).
    By \eqref{eqn:coh_st_zero_temp_marg_info_complex_bound} we know that $|P_n[\rho(\eta)]|\log n! \leq 2P_n[\rho(x)]\log n!$ and for sufficiently large $N$ we have $P_n[\rho(x)] \leq P_n[\rho(\xi)]$ for all $n>N$.
    So, let 
    \begin{align*}
        M_n =
        \begin{cases}
        2P_n[\rho(\xi)]\log n! \quad&\textup{if $n>N$}\\
        (\xi|\alpha|^2)^2 \quad&\textup{if $n\leq N$}
        \end{cases} \;.
    \end{align*}
    Given that $\mathbb{E}_{\mathrm{Poi}(x|\alpha|^2)}[\log N!] \leq (\xi|\alpha|^2)^2$ implies that $\sum_{n>N} M_n \leq 2(\xi|\alpha|^2)^2$ and $\sum_{n\leq N} M_n = N(\xi|\alpha|^2)^2$ we have $\sum_n M_n \leq (N+1)(\xi|\alpha|^2)^2 < \infty$ with $|P_n[\rho(\eta)]| \leq M_n$ for all $n$.
    
    Hence, since the second term of \eqref{eqn:coherent_0_temp_prob_upper_bound} uniformly converges in any closed disk $D_r(z_0)\subseteq\C_\delta$ and each of its term $P_n[\rho(\eta)]\log n!$ is analytic in $\C$, it is analytic in $\C_\delta$.
    Consequently the restriction of $i[\cdot,F^*]$ over $(\delta,\infty)$ is analytic for any $\delta>0$, concluding the proof.
\end{proof}


Note that the transmissivity can only physically takes value in $[0,1]$, nevertheless because $i[\eta,F^*]$ is analytic on $(0,\infty)$, a contradiction will follow from assuming that there are infinite POIs.
First we consider a lower bound
\begin{align}
    i[\eta,F^*] &= -\sum_n P_n[\rho(\eta)] \log \Big(\int dF^*(\tilde\eta) e^{-\tilde\eta|\alpha|^2} \frac{(\tilde\eta|\alpha|^2)^n}{n!} \Big) \nonumber\\
    &\stackrel{(1)}{\geq} \sum_n e^{-\eta|\alpha|^2} \frac{(\eta|\alpha|^2)^n}{n!} \Big( n \log\frac{1}{|\alpha|^2} + \log n! \Big) \nonumber\\
    &\stackrel{(2)}{=} \eta|\alpha|^2 \log\frac{1}{|\alpha|^2} + \sum_n e^{-\eta|\alpha|^2} \frac{(\eta|\alpha|^2)^n}{n!} \log n! \;, \label{eqn:prop3_proof:lowerbound_marginal_info_1}
\end{align}
where (1) is because $\log \int dF^*(\tilde\eta) e^{-\tilde\eta|\alpha|^2} \frac{(\tilde\eta|\alpha|^2)^n}{n!} \leq n \log |\alpha|^2 - \log n!$ as the integral is only over $[0,1]$, and (2) is because $\sum_n \frac{(\eta|\alpha|^2)^n}{n!} n = \eta|\alpha|^2 e^{\eta|\alpha|^2}$. 

The second term in \eqref{eqn:prop3_proof:lowerbound_marginal_info_1} is an expectation of $\log N!$ where $N$ is a Poisson random variable with mean $\eta|\alpha|^2$
\begin{align}
    \mathbb{E}_{\mathrm{Poi(\eta|\alpha|^2)}}[\log N!] = \sum_n e^{-\eta|\alpha|^2} \frac{(\eta|\alpha|^2)^n}{n!} \log n! \;.
\end{align}
By Stirling's approximation for log of factorials we can use big omega notation to obtain lower bound $\log n! \geq \Omega(n \log n)$, so 
\begin{align}
    \mathbb{E}_{\mathrm{Poi(\eta|\alpha|^2)}}[\log N!] &\geq \Omega\Big( \mathbb{E}_{\mathrm{Poi(\eta|\alpha|^2)}}[N\log N] \Big) \\
    &\geq \Omega\Big( \mathbb{E}_{\mathrm{Poi(\eta|\alpha|^2)}}[N] \log(\mathbb{E}_{\mathrm{Poi(\eta|\alpha|^2)}}[N]) \Big) = \Omega(\eta|\alpha|^2\log\eta|\alpha|^2)
\end{align}
as the Poisson-distributed expectation $\mathbb{E}_{\mathrm{Poi(z)}}[N] = z$.
Hence applying this lower bound to \eqref{eqn:prop3_proof:lowerbound_marginal_info_1} gives
\begin{align*}
    i[\eta,F^*] \geq \eta|\alpha|^2 \log\frac{1}{|\alpha|^2} + \mathbb{E}_{\mathrm{Poi}(\eta|\alpha|^2)}[\log N!] \geq \Omega\Big( \eta|\alpha|^2 \log \eta \Big)
\end{align*}
which implies that $i[\eta,F^*]$ grows arbitrarily large as $\eta\rightarrow\infty$.
Hence for any resource energy $|\alpha|^2$ and any optimal thermal encoding $F^*$, an infinite points-of-increase leads to a contradiction as Holevo information $\chi_{tc}[F^*]=i[\eta,F^*]$ goes to infinity as $\eta\rightarrow\infty$, so the cardinality of the points-of-increase must be finite.

\section{Threshold when one ring is no longer optimal}\label{sec:alpStar}

Here for ring state at amplitude $\alpha_\mathrm{max}$, $\rho_\alpha = \int_{-\pi}^\pi \frac{d\theta}{2\pi} R_\theta \ketbra{\alpha}{\alpha} R_\theta^\dag$ and vacuum state $\rho_0=\ketbra{0}{0}$ we will solve for $\alpha$ that satisfies
\begin{align}
  \label{eq:22}
  \frac{\partial}{ \partial \epsilon} S(\epsilon \rho_0 +
  (1-\epsilon)\rho_\alpha) \Big|_{\epsilon=0} &=0
\end{align}  
which can be written more explicitly in terms of limit
\begin{align}
  \lim_{\epsilon \to0} \frac{S(\rho_\alpha + \epsilon(\rho_0-\rho_\alpha))-S(\rho_\alpha)}{\epsilon}&=0 \;.
\end{align}
Using the Taylor expansion of $S(A+\epsilon B)$ for two commuting matrices $A$ and $B$,
\begin{align}
  \label{eq:23}
  S(A+\epsilon B) &= \frac{1}{\ln 2}\text{Tr}\left[ (A+\epsilon B) \ln
                    (A+\epsilon B)\right]\\
  &= \frac{1}{\ln 2}\text{Tr}\left[ (A+\epsilon B) \left(\ln A +
    \frac{\epsilon B}{A}\right)\right]\\
  &=S(A) + \epsilon \text{Tr} B \log A + \frac{\epsilon}{\ln 2}\text{Tr}B\;,
\end{align}
we get
\begin{align}
  \label{eq:24}
  \text{Tr}\left[(\rho_0-\rho_\alpha)\log \rho_\alpha \right]
  +\frac{1}{\ln 2} \text{Tr}\left[\rho_0-\rho_\alpha \right] &=0\\
  \text{Tr}\left[\rho_0 \log \rho_\alpha\right] + S(\rho_\alpha) &=0\\
  \log e^{-|\alpha|^2}+S(\rho_\alpha)&=0\\
  \frac{|\alpha|^2}{\ln 2}&= S(\rho_\alpha)
\end{align}
as claimed. The solution to this is $|\alpha|=1.25034$.

\section{Bounds on the capacity for coherent state resource at $T=0$ and average energy constraint}\label{app:vacBounds}

A lot of work have been done to derive bounds for capacities of classical channels.
For example, see~\cite{Rassouli2016,McKellips2004,Thangaraj2017} for bounds for the
AWGN channels and~\cite{Thangaraj2017} for quadrature modulations and
measurements.  Lapidoth derives bounds for a discrete time Poissonian
channel~\cite{Lapidoth2009} and intensity modulation with additive
Gaussian noise~\cite{Lapidoth2009_2}. There are also bounds on the
number of rings for the $n$-dimensional AWGN channel by
Dytso~\cite{Dytso2019,Yagli2019}.
In this section, we will present the upper and lower bounds for the Holevo information $\chi$ when we have a coherent state $\ket{\alpha_\mathrm{max}}$ resource with environment temperature $T=0$.

\subsection{Upper bounds}
\label{app:upper-bound}
If we relax the channel by allowing all encoding $F$ such that the average energy (i.e: energy of $\rho_\mathrm{ave}$) to be bounded by $E_\mathrm{max}=|\alpha_\mathrm{max}|^2$ instead bounding the energy of individual codewords $\rho(\eta,\theta)$, we may use the result by Giovannetti, et al.~\cite{Giovannetti2014} to get an upper bound to Holevo information $\chi$.
For the average-energy constrained channel, $\chi$ is attained by a Gaussian distribution encoding and is given in the supplemental material of \cite{Giovannetti2014}
\begin{align}\label{eq:10}
    \chi_\text{GDE}(\alpha_\mathrm{max}) &= (E_\mathrm{max}+1)\log(E_\mathrm{max}+1) - E_\mathrm{max}\log E_\mathrm{max} \nonumber\\ 
    &= -E_\mathrm{max}\log_2\frac{E_\mathrm{max}}{1+E_\mathrm{max}}+\log_2(1+E_\mathrm{max}) \;.
\end{align}

\subsection{Lower bounds}
\label{sec:lower-bounds}

While the classical mutual information gives a lower bound to its quantum analogue $\chi$, this lower bound can be obtained by a single-ring encoding and an encoding correspond to a "flat" distribution.

\subsubsection{Single-ring encoding}\label{sec:lower-bounds_one_ring}
We can obtain a lower bound by considering a distribution with just
one ring at $\alpha_\mathrm{max}$. 
For $E_\mathrm{max}=|\alpha_\mathrm{max}|^2$, this is given by the von Neumann entropy of the ring state at amplitude $\alpha_\mathrm{max}$, which is a uniformly random phase-shifted mixture of $\ket{\alpha_\mathrm{max}}$
\begin{align}
  \label{eq:11}
    L_\text{1ring}(\alpha_\mathrm{max}) &= \chi[F_\mathrm{1ring}] \\
    &= S\left( \int_{-\pi}^\pi \frac{d\theta}{2\pi} R_\theta \ketbra{\alpha_\mathrm{max}}{\alpha_\mathrm{max}} R_\theta^\dag \right) \\
    &= -\sum_{n=0}^\infty  \frac{e^{-E_\mathrm{max}} E_\mathrm{max}^n}{n!} \log_2\frac{e^{-E_\mathrm{max}}E_\mathrm{max}^n}{n!} \;.
\end{align}

\subsubsection{Flat encoding}\label{app:flat_dist}

Consider an encoding that uniformly distributes the attenuation, i.e. $F(\eta)=\eta$. 
This distribution will (by construction) has a flat Wigner function in the middle and drops off at the edge (Figure~\ref{fig:bigAlpHat}). 
Curiously, the photon number distribution is also rather flat at low photons numbers when energy $E_\mathrm{max}$ is large enough (see Figure~\ref{fig:bigAlpHat}). 
This encoding has a Fock state distribution
\begin{align}
  \label{eq:12}
  P_n[F] &= \int dF(\eta) \,P_n[\alpha_\mathrm{max}(\eta)]\\
         &= \int d \eta  \,  e^{-\eta E_\mathrm{max}}
           \frac{(\eta E_\text{max} )^{n}}{n!}\\
         &=\frac{1}{E_\mathrm{max}}-\frac{1}{n!E_\mathrm{max}}\Gamma(1+n,E_\mathrm{max})\;.
\end{align}
The lower bound (which we denote as $L_\text{flat}$) can then be computed by taking the entropy of this distribution. 
This bound becomes more informative than $L_\text{1ring}$ as $E_\mathrm{max}$ increases.
These bounds and the numerically computed capacity are plotted in Figure~\ref{fig:1}(b).
The information per unit energy diverges as the input $E_\mathrm{max}$ tends to zero.

Now consider the Poisson distribution parameter of $P_n[\alpha_\mathrm{max}(\eta)]$, which is $\eta E_\mathrm{max}$ and denote a random variable $\mathcal{X}$ for such parameter.
Such random variables have been studied in the classical context of discrete-time Poisson channels where the communication channel consists of a coherent state intensity modulation followed by direct detection at the receiver~\cite{Lapidoth2009}.
Proposition 11 of~\cite{Lapidoth2009} gives the lower bound on the entropy of such channel
\begin{align}
    \label{eq:15}
    H \geq h[\mathcal{X}] + \left(1+\mathbb{E}[\mathcal{X}]\right) \log \left( 1+\frac{1}{\mathbb{E}[\mathcal{X}]}\right) -\frac{1}{\ln 2}\;,
\end{align}
where $h[\mathcal{X}]$ is the differential entropy of $\mathcal{X}$ and $\mathbb{E}[\mathcal{X}]$ is the mean of $\mathcal{X}$.
Using this result on the flat distribution $F(\eta)=\eta$ for $0\leq \eta \leq 1$, we could have a uniform encoding for random variable $\mathcal{X}$, $F(x)=x/E_\text{max}$ so that $h[\mathcal{X}]=\log E_\text{max}$ and $\mathbb{E}[\mathcal{X}]=E_\text{max}/2$. This gives a lower bound for the capacity $L_\mathrm{flat}$ for flat encoding
\begin{align}
  \label{eq:16}
  L_\mathrm{flat} \geq  \log E_\mathrm{max} + \left(1+\frac{E_\mathrm{max}}{2}
  \right)\log\left( 1+\frac{2}{E_\mathrm{max}}\right)- \frac{1}{ \ln
  2} \;.
\end{align}

\section{Proof of finite points-of-increase for thermal state resource}\label{appendix:finite_POI_thermal_res}

In this appendix we give a proof of proposition~\ref{prop:finite_POI_thermal}.
Here we employ similar argument to the case of coherent state input in Appendix \ref{appendix:coherent_t0_finite_POI} where we show that the marginal information density function for the thermal state is analytic for all positive reals then use the identity theorem of analytic functions along with the Bolzano-Weierstrass theorem to obtain $i[\eta,F^*] = \chi_{tc}[F^*]$ for all positive real $\eta$ from assuming that the points-of-increase is infinite.
We again show that this leads to a contradiction to conclude that the assumption of the cardinality of the set of points-of-increase being infinite cannot be true.
Here we assume that both mean photon number of the input state $n_\res$ and mean photon number of the environment $n_\mathrm{env}$ are strictly larger than zero.
We also assume that $n_\res \neq n_\mathrm{env}$ because otherwise we have $\rho(\eta)=\rho_\mathrm{ave}$ for all $\eta$ hence we cannot encode any information as reflected by $\chi_{tc}[F] = \int dF(\eta) S(\rho(\eta)||\rho_\mathrm{ave}) = 0$ for any choice of encoding $F$.
For simplicity, the above assumptions will be taken into account in the proofs below, but we will give an argument later that the same technique can still be used in the case that either $n_\res$ or $n_\mathrm{env}$ is zero (i.e. vacuum).

Noting that the $S(\rho(\eta,0))$ term in the marginal information density \eqref{eq:17} is the entropy of thermal state with mean photon number $n_\eta = \eta n_\res + (1-\eta) n_\mathrm{env}$, we can rewrite it as
\begin{align}\label{eqn:thermal_input:marginal_info_density1}
    i[\eta,F^*] = \sum_k P_k[\rho(\eta)] \log\frac{1}{P_k[F^*]} - \Big( (n_\eta+1)\log(n_\eta+1) - n_\eta\log n_\eta \Big)
\end{align}
where $P_k[F^*] = \int dF^*(\eta') P_k[\rho(\eta')]$ and
\begin{align}
    P_k[\rho(\eta)] = \frac{n_\eta^k}{(n_\eta+1)^{k+1}} \;.
\end{align}
Now we will show that $i[\cdot,F^*]$ is analytic over all positive reals.

\begin{lemma}\label{lem:thermal_input:marginal_info_density_analytic}
    Marginal information density $i[\eta,F^*]$ for thermal state input is analytic for all $\eta\in(0,\infty)$.
\end{lemma}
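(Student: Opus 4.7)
The plan is to mirror the coherent-state strategy of Appendix~\ref{appendix:coherent_t0_finite_POI}. I would decompose the marginal information density \eqref{eqn:thermal_input:marginal_info_density1} into the elementary part $-[(n_\eta+1)\log(n_\eta+1)-n_\eta\log n_\eta]$ and the series $S(\eta):=\sum_{k\ge 0}P_k[\psi(\eta)]\log(1/P_k[F^*])$. The elementary piece is the composition of the affine map $\eta\mapsto n_\eta$ with $x\mapsto(x+1)\log(x+1)-x\log x$, which is analytic on $\C\setminus\{0,-1\}$ under a suitable branch of $\log$; since $n_{\mathrm{res}},n_{\mathrm{env}}>0$ force $n_\eta>0$ throughout a complex neighbourhood of every relevant real $\eta$, this part of $i[\eta,F^*]$ is analytic there. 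It therefore suffices to prove analyticity of $S(\eta)$.

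For this I would apply the Weierstrass $M$-test on closed disks $D$ contained in an open complex neighbourhood $\mathcal U$ of $(0,\infty)$ on which $\mathrm{Re}(n_\eta)\ge\delta>0$. Writing $r_\eta:=n_\eta/(n_\eta+1)$ so that $P_k[\psi(\eta)]=(1-r_\eta)r_\eta^k$, the inequality $\mathrm{Re}(n_\eta)>-\tfrac12$ guarantees $|r_\eta|<1$; compactness of $D$ then gives $|r_\eta|\le R_D<1$ and $|1-r_\eta|\le M_D$ uniformly, hence $|P_k[\psi(\eta)]|\le M_D R_D^k$ on $D$. Separately, since $F^*$ is a genuine probability distribution on $[0,1]$ it has at least one point of increase $\eta_0$; by Definition~\ref{def:POI}, every open neighbourhood of $\eta_0$ in $[0,1]$ carries positive $F^*$-mass, and shrinking it so that $r_{\eta'}\ge \tfrac12 r_{\eta_0}$ throughout, integration yields $P_k[F^*]\ge c\rho^k$ with $c>0$ and $\rho\in(0,1)$ independent of $k$ and $\eta$. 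Consequently $\log(1/P_k[F^*])\le A+Bk$ and
\[
\bigl|P_k[\psi(\eta)]\log(1/P_k[F^*])\bigr|\le M_D(A+Bk)R_D^k,
\]
which is summable. Each summand being analytic in $\eta$ on $\mathcal U$, the $M$-test yields uniform convergence on $D$, and Weierstrass's theorem delivers analyticity of $S$ on $\mathcal U$, hence on $(0,\infty)$.

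The main obstacle is the lower bound $P_k[F^*]\ge c\rho^k$: the constants depend on the a priori unknown optimal $F^*$, but the decisive point is that they depend on neither $k$ nor $\eta$, which is precisely what the $M$-test requires. A minor caveat is that ``$\eta\in(0,\infty)$'' must be read as the largest open subinterval on which $n_\eta>0$ (equal to $(0,\infty)$ when $n_{\mathrm{res}}\ge n_{\mathrm{env}}$, and to $(0,n_{\mathrm{env}}/(n_{\mathrm{env}}-n_{\mathrm{res}}))$ otherwise); in either case this interval contains the physical range $[0,1]$ and is connected, which is what the subsequent identity-theorem argument invoked in the proof of Proposition~\ref{prop:finite_POI_thermal} will require.
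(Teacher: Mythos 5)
Your proposal is correct and follows essentially the same strategy as the paper's proof: the same split of \eqref{eqn:thermal_input:marginal_info_density1} into the elementary thermal-entropy term and the series $\sum_k P_k[\psi(\eta)]\log(1/P_k[F^*])$, followed by uniform convergence on closed disks in a complex strip around the positive reals and Weierstrass's theorem. The one place where your details differ is the linear-in-$k$ bound on $\log(1/P_k[F^*])$: the paper obtains it from Lemma~\ref{lem:thermal_input:log_avg_proba_upper_bound}, using monotonicity of $Q_k(z)=z^k/(z+1)^{k+1}$ on $z<k$ to get $P_k[F^*]\ge Q_k(n_{\min})$ for $k>n_{\max}$, whereas you extract $P_k[F^*]\ge c\rho^k$ from the mass near a point of increase; both yield the same summable majorant, and in fact under the standing assumption $n_{\res},n_{\env}>0$ your argument can be shortened, since $n_{\eta'}\in[n_{\min},n_{\max}]$ for all $\eta'\in[0,1]$ already gives $Q_k(n_{\eta'})\ge (n_{\max}+1)^{-1}\bigl(n_{\min}/(n_{\min}+1)\bigr)^k$ without invoking the POI. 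Your domain caveat is a genuine (and welcome) sharpening rather than a defect: the geometric bounds require $|n_\eta|<|n_\eta+1|$, i.e.\ $\Real(n_\eta)>-1/2$, which fails for large real $\eta$ when $n_{\res}<n_{\env}$, so analyticity really only holds on the interval where $n_\eta$ stays positive (which, as you note, still contains $[0,1]$ and is connected, as the identity-theorem step in Appendix~\ref{appendix:finite_POI_thermal_res} needs); the paper's statement of the lemma on all of $(0,\infty)$ glosses over this, and its subsequent contradiction argument for the case $n_{\res}<n_{\env}$ leans on evaluating $i[\eta,F^*]$ at large $\eta$, a step your restricted domain would force one to rework.
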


Before proving Lemma~\ref{lem:thermal_input:marginal_info_density_analytic}, we first need the following technical lemma.

\begin{lemma}\label{lem:thermal_input:log_avg_proba_upper_bound}
    Let $Q_k(z) = \frac{z^k}{(z+1)^{k+1}}$ and $n_\mathrm{max} = \max\{n_\res,n_\mathrm{env}\}$ and $n_\mathrm{min} = \min\{n_\res,n_\mathrm{env}\}$.
    The following inequality holds for all circularly symmetric encoding $F$ and all $k>n_\mathrm{max}$
    \begin{align*}
        \log\frac{1}{P_k[F]} \leq \log\frac{1}{Q_k(n_\mathrm{min})} = \log(n_\mathrm{min}+1) + k \log\frac{n_\mathrm{min}+1}{n_\mathrm{min}} \;.
    \end{align*}
\end{lemma}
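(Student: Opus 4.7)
The plan is to exploit the fact that each codeword is itself a thermal state, so $P_k[F]$ reduces to averaging a single explicit one‑variable function $Q_k(n_\eta)$ against the distribution induced by $F$ on the mean photon number $n_\eta = \eta n_\res + (1-\eta)n_\env$. Since $\eta \in [0,1]$ and $n_\eta$ is a convex combination of $n_\res$ and $n_\env$, one has $n_\eta \in [n_\mathrm{min}, n_\mathrm{max}]$, and hence $P_k[F] = \int dF(\eta)\, Q_k(n_\eta)$ is at least $\inf_{z \in [n_\mathrm{min}, n_\mathrm{max}]} Q_k(z)$. So the entire lemma reduces to locating this infimum.

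First I would carry out the one‑variable calculus for $Q_k(z) = z^k/(z+1)^{k+1}$. Differentiating,
\begin{equation*}
Q_k'(z) \;=\; \frac{z^{k-1}(k-z)}{(z+1)^{k+2}},
\end{equation*}
so $Q_k$ is strictly increasing on $[0,k]$ and strictly decreasing on $[k,\infty)$. Because the hypothesis is $k > n_\mathrm{max}$, the whole interval $[n_\mathrm{min}, n_\mathrm{max}]$ lies inside the increasing region $[0,k]$, and therefore $Q_k(z) \geq Q_k(n_\mathrm{min})$ for every $z \in [n_\mathrm{min}, n_\mathrm{max}]$.

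Integrating this pointwise bound against $dF(\eta)$ gives $P_k[F] \geq Q_k(n_\mathrm{min})$, and monotonicity of $\log(1/\cdot)$ inverts the inequality to $\log(1/P_k[F]) \leq \log(1/Q_k(n_\mathrm{min}))$. Finally, the right‑hand side is just algebra:
\begin{equation*}
\log \frac{1}{Q_k(n_\mathrm{min})} \;=\; (k+1)\log(n_\mathrm{min}+1) - k \log n_\mathrm{min} \;=\; \log(n_\mathrm{min}+1) + k \log\frac{n_\mathrm{min}+1}{n_\mathrm{min}},
\end{equation*}
which matches the stated expression. There is no real obstacle here: the only subtlety is verifying that $Q_k$ is monotone on the relevant sub‑interval, which is precisely why the hypothesis $k > n_\mathrm{max}$ appears. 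The assumption $n_\mathrm{min} > 0$ (justified earlier in the section's discussion of the thermal case) ensures that $\log(1/n_\mathrm{min})$ is finite, so the bound is non‑vacuous.
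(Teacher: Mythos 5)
Your proof is correct and follows essentially the same route as the paper's: compute $Q_k'(z) = z^{k-1}(k-z)(z+1)^{-(k+2)}$, use monotonicity of $Q_k$ on $[0,k]$ together with $n_\eta \in [n_\mathrm{min}, n_\mathrm{max}] \subset [0,k]$ for $k > n_\mathrm{max}$ to get $P_k[F] \geq Q_k(n_\mathrm{min})$, and take logarithms. Your explicit remark that $n_\eta$ is a convex combination confined to $[n_\mathrm{min}, n_\mathrm{max}]$ is a small clarity improvement over the paper's phrasing, but the argument is the same.
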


\begin{proof}
    We can write $P_k[F] = \int_0^1 dF(\eta) Q_k(n_\eta)$.
    Note that the derivative of $Q_k$ is
    \begin{align*}
        \frac{d}{dz} Q_k(z) = (k-z) z^{k-1} (z+1)^{-k-2} \;.
    \end{align*}
    So by solving $\frac{d}{dz} Q_k(z) = 0$ for $z$ we can find that $Q_k$ achieves its maximum at $z=k$.
    Moreover, one can observe from the $(k-z)$ term that $Q_k$ is monotonically increasing for $z<k$ and monotonically decreasing for $z>k$ (i.e. $\forall z'<k, z'>z \Rightarrow Q_k(z') > Q_k(z)$ and $\forall z>k, z'>z \Rightarrow Q_k(z') < Q_k(z)$ ).
    Therefore
    \begin{align*}
        P_k[F] = \int_0^1 dF(\eta) Q_k(n_\eta)
        \geq Q_k(n_\mathrm{min})
    \end{align*}
    if $k > n_\mathrm{max}$.
    This implies that for all $k > n_\mathrm{max}$ we have $\frac{1}{P_k[F]} \leq \frac{1}{Q_k(n_\mathrm{min})}$, and thus
    \begin{align*}
        \log\frac{1}{P_k[F]} \leq \log\frac{1}{Q_k(n_\mathrm{min})} = \log\frac{(n_\mathrm{min}+1)^{k+1}}{n_\mathrm{min}^k}
        = \log(n_\mathrm{min}+1) + k \log\frac{n_\mathrm{min}+1}{n_\mathrm{min}}
    \end{align*}
    because $\log$ is a monotonically increasing function. 
\end{proof}

Now we are set to show the analiticity of the marginal information density.

\begin{proof}[proof of Lemma~\ref{lem:thermal_input:marginal_info_density_analytic}]
    First, let $h[\eta,F^*] = \sum_{k=0}^\infty P_k[\rho(\eta)] \log\frac{1}{P_k[F^*]}$ so that
    \begin{align}
        i[\eta,F^*] = h[\eta,F^*] - \Big( (n_\eta+1)\log(n_\eta+1) - n_\eta\log n_\eta \Big) \;.
    \end{align}
    The second term is clearly analytic as it is a finitely many compositions of log functions, multiplications, and additions over $z\in\R_{>0}$, hence it remains to show that $h[\cdot,F^*]$ is analytic over $\R_{>0}$.
    Again by considering $\C_\delta = \{z\in\C : \mathrm{Re}(z)>\delta \;\wedge\; |\mathrm{Im}(z)|<\delta\}$, we will show that $h[\cdot,F^*]$ is analytic over $\C_\delta$ for any $\delta>0$ by showing that $h[\cdot,F^*]$ converges uniformly in any closed disk $D_r(z_0) = \{z\in\C_\delta : |z-z_0|\leq r\} \subseteq\C_\delta$.
    Denoting $h_N[\eta,F^*] = \sum_{k>N} P_k[\rho(\eta)] \log\frac{1}{P_k[F^*]}$, it suffices to show that for any $\varepsilon>0$ there exists $M\in\N$ such that for all $N\geq M$, $|h_N[\eta,F^*]|<\varepsilon$ for all $\eta\in D_r(z_0)$ because the partial sum $h[\eta,F^*]-h_N[\eta,F^*]$ is a composition of additions, multiplications, and log functions, hence is analytic.
    
    By using Lemma~\ref{lem:thermal_input:log_avg_proba_upper_bound} we have an upper bound
    \begin{align}\label{eqn:thermal_input:hnmax_bound}
    \begin{split}
        |h_{n_\mathrm{max}}[\eta,F^*]| &= \Big| \sum_{k>n_\mathrm{max}} P_k[\rho(\eta)] \log\frac{1}{P_k[F^*]} \Big| \\
        &\leq \sum_{k>n_\mathrm{max}} |P_k[\rho(\eta)]| \Big( \log(n_\mathrm{min}+1) + k\log\frac{n_\mathrm{min}+1}{n_\mathrm{min}} \Big) \\
        &= \frac{1}{|n_\eta+1|} \sum_{k>n_\mathrm{max}} \frac{|n_\eta|^k}{|n_\eta+1|^k} \Big( \log(n_\mathrm{min}+1) + k\log\frac{n_\mathrm{min}+1}{n_\mathrm{min}} \Big) \\
        &= \frac{\log(n_\mathrm{min}+1)}{|n_\eta+1|} \sum_{j>n_\mathrm{max}} \frac{|n_\eta|^j}{|n_\eta+1|^j} + \frac{1}{|n_\eta+1|} \log\frac{n_\mathrm{min}+1}{n_\mathrm{min}} \sum_{k>n_\mathrm{max}} \frac{|n_\eta|^k}{|n_\eta+1|^k} \; k \;.
    \end{split}
    \end{align}
    From the series identity $\sum_{k=0}^\infty k c^k = \frac{c}{(c-1)^2}$ we have
    \begin{align*}
        \sum_{k=0}^\infty \frac{|n_\eta|^k}{|n_\eta+1|^k} \; k = \frac{|n_\eta||n_\eta+1|}{(|n_\eta|-|n_\eta+1|)^2} \;,
    \end{align*}
    and for any $N\in\N$, the partial series identity $\sum_{k=0}^N k c^k = \frac{(cN-N-1)c^{N+1}+c}{(1-c)^2}$ gives
    \begin{align*}
        \sum_{k=0}^N \frac{|n_\eta|^k}{|n_\eta+1|^k} \; k &= \frac{|n_\eta+1|^2}{(|n_\eta+1|-|n_\eta|)^2} \bigg( \Big( \frac{|n_\eta|}{|n_\eta+1|}N-N-1 \Big) \Big(\frac{|n_\eta|}{|n_\eta+1|}\Big)^{N+1} + \frac{|n_\eta|}{|n_\eta+1|} \bigg) \\
        &= \frac{|n_\eta+1||n_\eta|}{(|n_\eta+1|-|n_\eta|)^2} \Big(\frac{|n_\eta|}{|n_\eta+1|}\Big)^N \Big(\frac{|n_\eta|}{|n_\eta+1|}N-N-1\Big) + \frac{|n_\eta+1||n_\eta|}{(|n_\eta+1|-|n_\eta|)^2} \;.
    \end{align*}
    Hence we get
    \begin{align}
        \sum_{k>N} \frac{|n_\eta|^k}{|n_\eta+1|^k} \; k = \frac{|n_\eta+1||n_\eta|}{(|n_\eta+1|-|n_\eta|)^2} \Big(\frac{|n_\eta|}{|n_\eta+1|}\Big)^N \Big(N+1-\frac{|n_\eta|}{|n_\eta+1|}N\Big) \;,
    \end{align}
    which goes to $0$ as $N\rightarrow\infty$ because $\frac{|n_\eta|}{|n_\eta+1|}<1$.
    Clearly $\sum_{j>N} \frac{|n_\eta|^j}{|n_\eta+1|^j}$ also goes to $0$ as $N\rightarrow\infty$ because $(\frac{|n_\eta|}{|n_\eta+1|})^k \leq k(\frac{|n_\eta|}{|n_\eta+1|})^k$ for all $k\in\N$.
    So by setting $N=n_\mathrm{max}$ and then plugging it into \eqref{eqn:thermal_input:hnmax_bound} we get the bound
    \begin{align}
        \Big| h_{n_\mathrm{max}}[\eta,F^*] \Big| &\leq \frac{2\log(n_\mathrm{min}+1) - \log n_\mathrm{min}}{|n_\eta+1|} \sum_{k>n_\mathrm{max}} \frac{|n_\eta|^k}{|n_\eta+1|^k} \; k \;.
    \end{align}
    
    Now consider a disk $D_r(z_0)$ and a complex number in it $\xi = \arg\max_{\eta\in D_r(z_0)} \sum_{k>N} \frac{|n_\xi|^k}{|n_\xi+1|^{k+1}}$.
    This can be obtained by setting $N$ sufficiently large $N > \max\{n_{\Re(\eta)+\delta}+1 : \eta\in D_r(z_0)\}$.
    So for all $\eta\in D_r(z_0)$ we can find an integer $M$ larger than such $N$ and larger than $n_\mathrm{max}$ to obtain
    \begin{align}
        \Big| h_M[\eta,F^*] \Big| &\leq \frac{2\log(n_\mathrm{min}+1) - \log n_\mathrm{min}}{|n_\xi+1|} \sum_{k>M} \frac{|n_\xi|^k}{|n_\xi+1|^k} \; k \;.
    \end{align}
    Because the right hand side goes to $0$ as $M\rightarrow\infty$, for any $\varepsilon>0$ we can find sufficiently large $M$ such that for all $l>M$ we have $|h_l[\eta,F^*]|<\varepsilon$ for all $\eta\in D_r(z_0)$.
    As this holds for any closed disk in $\C_\delta$ for any $\delta>0$, therefore $h[\cdot,F^*]$ is analytic over all positive reals $\R_{>0}$.
\end{proof}


Note that the analiticity of the marginal information density function still holds even if one of $n_\res,n_\mathrm{env}$ is zero by the following argument.
Without loss of generality assume that $n_\mathrm{env}=0$ (i.e. vacuum environment) and suppose that the optimal encoding is $F^*$, hence $n_\eta$ can only takes up value in $[0,n_\res]$.
In the above's proof this leads to a pathological case where $n_\mathrm{min}=n_\mathrm{env}=0$.
However we can still find a $z\in(0,n_\res)$ such that $Q_k(z)\leq \int_0^1 dF^*(\eta) Q_k(n_\eta) = P_k[F^*]$ for all $k>n_\res$ because $Q_k$ is monotonically increasing and it cannot be the case that $dF^*(\eta)=1$ for $\eta=0$ (or any other $\eta\in[0,1]$ for that matter), namely where the only codeword is the vacuum state of the environment as this leads to zero capacity.
One can see this by noting that $dF(0)=1$ implies that $\chi = S(\rho_\mathrm{ave}) - \int dF(\eta) S(\rho(\eta)) = S(\rho(0)) - S(\rho(0))=0$ because phase shift operation by the channel leaves the output state invariant for a thermal state input.
This means that $P_k[F^*]$ must take value in the interval $(0, Q_k(n_\res))$ and therefore for any $F^*$ we can always find some $z>0$ such that $Q_k(z) < P_k[F^*]$ for all $k>n_\res$.
Hence, we can still use the technique in the proof of Lemma~\ref{lem:thermal_input:marginal_info_density_analytic} and Lemma~\ref{lem:thermal_input:log_avg_proba_upper_bound} to upper bound the $h_N$ function to obtain the same analiticity result.

Now, as $i[\cdot,F^*]$ is analytic over positive reals by Lemma~\ref{lem:thermal_input:marginal_info_density_analytic}, by the identity theorem of analytic functions and the Bolzano-Weierstrass theorem, infinite points of increase over $[0,1]$ implies that $i[\eta,F^*] = \chi[F^*]$ for all positive real $\eta$.
Now we show that this leads to a contradiction, then conclude that it is not possible to have an infinite points-of-increase.

First, consider the case of $n_\res<n_\mathrm{env}$.
As we can write $n_\eta = n_\mathrm{env} + \eta (n_\res - n_\mathrm{env})$, then we have a negative number inside the $\log$ function in the $-(n_\eta+1)\log(n_\eta+1) + n_\eta\log n_\eta$ term of $i[\eta,F^*]$ as written in \eqref{eqn:thermal_input:marginal_info_density1} for sufficiently large $\eta$.
This implies that $i[\eta,F^*]=\chi[F^*]$ is not real for sufficiently large $\eta$, which is not possible.
On the other hand, for the case of $n_\res>n_\mathrm{env}$ we have $\lim_{\eta\rightarrow\infty} i[\eta,F^*]<0$ because
\begin{align*}
    \lim_{\eta\rightarrow\infty} -(1+n_\eta)\log(1+n_\eta) + n_\eta\log n_\eta = \lim_{\eta\rightarrow\infty} -\log(1+n_\eta) - n_\eta\log\frac{1+n_\eta}{n_\eta} <0
\end{align*}
because $\lim_{\eta\rightarrow\infty} n_\eta = \infty$ and $\lim_{x\rightarrow\infty} x\log\frac{1+x}{x}=1$, and because $\lim_{\eta\rightarrow\infty} P_n[\rho(\eta)] = 0$ for each $n\in\N$, implying that $-\sum_n P_n[\rho(\eta)]\log P_n[F^*] = 0$.
As an infinite points-of-increase over $[0,1]$ leads to contradictions, therefore there can only be finitely many points-of-increase.


\section{Capacity of two-codeword encoding with a vacuum resource at $T>0$.} \label{app:vacThres}

Here we will derive equation~\eqref{eq:27}, which is the Holevo information over two-codeword encodings when the resource state is a vacuum state and the environment temperature is $T>0$ (i.e. $n_\text{env}>0$).
For convenience, we restate it below
\begin{align}
 \chi =
  \log\left( 1+\frac{n_\text{env}}{(1+n_\text{env})^\frac{1+n_\text{env}}{n_\text{env}}} \right) \;
\end{align}
where thermal photon number $n_\text{env}$ is equal to energy $E_0$, which is achieved when encoding $F(\eta)$ has two points of increase at $\eta=0$ and $\eta=1$. 
We denote the prior probabilities of sending the thermal state as $q_0$ (namely when transmissivity $\eta=0$) and of sending the vacuum state as $q_1$ (when transmissivity $\eta=1$) with $q_0+q_1=1$.
So, we define an encoding
\begin{align}
  \label{eq:29}
  F_{q_0}(\eta) =\begin{cases}
    q_0,& \text{for } 0 \leq \eta< 1\;,\\
    1,&\text{for } \eta=1
  \end{cases} \;.
\end{align}
We want to find $q_0$ and $q_1$ that maximise the Holevo quantity for such an encoding,
\begin{align}
  \label{eq:30}
  \chi[F_{q_0}] = S(\rho_\text{ave}) + q_0 S(\rho(0)) + q_1 \underbrace{S(\rho(1))}_{0}\;
\end{align}
where $S(\rho(1))=0$ because $\rho(1)$ is a vacuum state, which is a pure state.
The averaged state is diagonal in the Fock basis with
\begin{align}
  \label{eq:31}
  P_n[\rho_\text{ave}] =\begin{cases}
    q_0 P_0[\rho(0)] + q_1,& \text{for } n=0\;,\\
    q_0 P_n[\rho(0)], &\text{for } n\geq 1\;.
    \end{cases}
\end{align}
By direct computation, we have
\begin{multline}
  \label{eq:33}
  S(\rho_\text{ave})=q_0 S(\rho(0))-\big(q_0 P_0[\rho(0)]+q_1\big)\log\big(q_0
  P_0[\rho(0)]+q_1\big)
  +\big(q_0 P_0[\rho(0)]\big)\log \big(q_0 P_0[\rho(0)]\big) - q_0 \log q_0\;,
\end{multline}
which gives
\begin{multline}
  \label{eq:chi0}
  \chi[F_{q_0}]=-\big(q_0 P_0[\rho(0)]+q_1\big)\log\big(q_0
  P_0[\rho(0)]+q_1\big)
  +\big(q_0 P_0[\rho(0)]\big)\log \big(q_0 P_0[\rho(0)]\big) - q_0 \log q_0\;.
\end{multline}
To find the maximum value of $\chi$, we differentiate it with respect
to $q_0$ and solve for $q_0$ in
\begin{align}
  \label{eq:34}
  \frac{d\chi}{d q_0}&=0\;.
\end{align}
The value of $q_0$ that solves this equation is
\begin{align}
  \label{eq:35}
  q_0 &=\frac{1}{1-P_0[\rho(0)] +P_0[\rho(0)]^{\left(\frac{P_0[\rho(0)]}{P_0[\rho(0)]-1}\right)}} \\
      &=\frac{1+n_\text{env}}{n_\text{env}+(1+n_\text{env})^\frac{1+n_\text{env}}{n_\text{env}}} \;,
\end{align}
where we used $P_0[\rho(0)]=\frac{1}{1+n_\text{env}}$ to arrive at the last line. 
Substituting this into \eqref{eq:chi0} gives the desired result. 
As $n_\text{env}\rightarrow\infty$, the optimal $q_0$ tends to $0.5$ and $\chi\rightarrow 1$.

\section{Lossy channel with low-energy coherent state resource}\label{app:low_E_limit}

Here, we consider a coherent state resource $\ket{\alpha}$ which passes through a lossy channel, i.e. with an environment mean photon number $n_\mathrm{env}>0$ and a fixed attenuation $\eta_\mathrm{ch}$. 
We will derive an analytic expression of an approximation of the Holevo information \eqref{eq:1} in terms of attenuation $\eta_\mathrm{ch}$ and thermal photon number $n_\mathrm{ch} = (1-\eta_\mathrm{ch})n_\mathrm{env}$ for a single-ring circularly symmetric encoding given a small energy $E=|\alpha|^2\ll1$.

\mg{What does `amplitude' of a thermal state mean? Is this defined somewhere?}\at{OK, expained this below}
Note that this is equivalent to a scenario where we have a thermal state with mean photon number $n_\mathrm{ch}$ displaced by $\alpha$ as a resource state and an environment in thermal state with the same mean photon number $n_\mathrm{ch}$.
We can disregard the phase $\theta$ and simply write \eqref{eqn:codeword_thermal_photon_number} as
\begin{align}
    \rho(\eta_\mathrm{ch}) = D(\sqrt{\eta_\mathrm{ch}}\alpha) \rho_\th(n_\mathrm{ch}) D^\dagger(\sqrt{\eta_\mathrm{ch}}\alpha) \;.
\end{align}
because the entropy of $\rho(\eta_\mathrm{ch},\theta)$ is equal to the entropy of $\rho_\th(n_\mathrm{ch})$.
This is due to identical photon number distribution between codeword $\rho(\eta_\mathrm{ch},\theta)$ and thermal state $\rho_\th(n_\mathrm{ch})$, which is given by
\begin{align}
  \label{eq:39}
  P_n[\rho(\eta_\mathrm{ch})] = \frac{1}{1+n_\mathrm{ch}} \left( \frac{n_\mathrm{ch}}{1+n_\mathrm{ch}} \right)^n \;,
\end{align}
and the averaged state $\rho_\textup{ave}$ photon number distribution given by
\begin{align}
  \label{eq:40}
  P_n[\rho_\text{ave}] =\bra{n} D(\sqrt{\eta_\mathrm{ch}}\alpha) \rho_\th(n_\mathrm{ch})
  D^\dagger(\sqrt{\eta_\mathrm{ch}}\alpha) \ket{n} \;.
\end{align}
Hence we may write the Holevo information of the lossy channel as
\begin{align}
  \label{eq:38}
  \chi[E] = S(\rho_\text{ave}) - S(\rho(\eta_\mathrm{ch})) \;.
\end{align}

When $|\alpha|$ is small, we can approximate the displacement operator as
\begin{align}
  \label{eq:28}
  D(\alpha) \approx 1+\alpha(a^\dagger-a)+\frac{|\alpha|^2}{2}\left(a^\dagger-a\right)^2 \;,
\end{align}
allowing us to obtain the approximate probability of the average output
\begin{equation}
\begin{split}
  \label{eq:32}
  P_n[\rho_\text{ave}] 
  &\approx P_n + \eta_\mathrm{ch} |\alpha|^2 n P_{n-1} + \eta_\mathrm{ch} |\alpha|^2 (n+1)^2 P_{n+1} - \eta_\mathrm{ch} |\alpha|^2 (1+2n) P_{n} \;,
\end{split}
\end{equation}
where $P_n$ is short for $P_n[\rho(\eta_\mathrm{ch})]$.
Substituting this into~\eqref{eq:38}, we get
\begin{align}
  \label{eq:36}
  \chi[E] \approx \eta_\mathrm{ch} |\alpha|^2 \sum_n
  \left(nP_{n-1}+(n+1)^2P_{n+1}-(1+2n)P_n \right) \log P_n \;.
\end{align} 
Finally, substituting Eq.~(\ref{eq:39}) for $P_n$ to the right-hand side, and after some simplification, we arrive at
\begin{align}
  \label{eq:37}
  \chi[E] \approx \eta_\mathrm{ch} |\alpha|^2 \log\frac{1+n_\mathrm{ch}}{n_\mathrm{ch}} \;,
\end{align}
which is precisely the approximation $\Tilde{\chi}_\mathrm{tc}[E]$ in~\eqref{eq:1}.

The same result is obtained for the optimal resource state~(\ref{eq:optngs}), restated below for convenience 
\begin{align}
    \ket{\phi} = \frac{1}{\sqrt{E+1}}\sum_{n=0}^\infty \ket{n} \sqrt{\frac{E^n}{(E+1)^n}} \;.
\end{align}
One way to see this is by considering the
displaced squeezed state $D(\alpha)S(r)\ket{0}$ where $D$ is the
displacement operator and $S$ is the squeezing operator with
$\alpha=\sqrt{E}$ and $r=-(\sqrt{2}-1)E$, hence
\begin{align}
  \label{eq:4}
  D(\alpha)S(r) \ket{0} &= \ket{0} +\ket{1}\alpha + \ket{2}\frac{|\alpha|^2-r}{2}\\
  &=\ket{0} +\ket{1}\sqrt{E} + \ket{2} E\;,
\end{align}
which is a good approximation for the optimal transmitter codeword given a small $E$. Transmitting this (now Gaussian state) through the noisy channel, we have a squeezed Gaussian state at the output with amplitude $\sqrt{\eta_\mathrm{ch} E}$, thermal photon expectation $n_\mathrm{ch}$, and squeezing factor $-\frac{\eta_\mathrm{ch}(\sqrt{2}-1)E}{1+2n_\mathrm{ch}}$. 
The entropy of each codeword is the same as the entropy of a thermal state with $n_\mathrm{ch}$ photons. 
We also find that the photon number distribution of the averaged state is given by~(\ref{eq:32}) with $\alpha$ replaced by $\sqrt{E}$, so that we arrive at same result~(\ref{eq:37}) for the Holevo's quantity. We leave the derivations as an exercise for the interested reader.
Note that although the state $|\phi\>$ is a resource state that maximizes $\chi$ when $T=0$ (see Section~\ref{appendix:optimal-state}), it is not known whether it maximizes $\chi$ at $T>0$ environment nor when the channel is lossy.
The fact we established above that $\chi$ for both resource states $|\phi\>$ and $|\alpha\>$ coincide at $E\ll 1$, could however be a good starting point for future work on studying the thermal channel capacity at $T>0$.

\subsection{Upper bound on the lossy channel capacity}

Here we will proceed to derive an upper bound for $\chi$ for the lossy channel scenario which coincides with \eqref{eq:1}.
If each codeword is a Gaussian distribution of coherent states with average energy constraint $E$, this Gaussian distribution is optimal for a channel with no peak power constraint~\cite{Giovannetti2014}.
At the output of a thermal channel, the averaged state is also a Gaussian state with mean photon number $n_\text{ave}=\eta_\mathrm{ch} E+ n_\mathrm{ch}$ and codewords with mean photon number $n_\mathrm{ch}$. 
If we define $\delta_n=P_n[\rho_\text{ave}]-P_n[\rho(\eta_\mathrm{ch})]$
which is small when $E$ is small, we can approximate
$S(\rho_\text{ave})$ as
\begin{align}
  \label{eq:41}
  S(\rho_\text{ave}) &= -\sum_n P_n[\rho_\text{ave}] \log P_n[
                       \rho_\text{ave}]\\
  &= -\sum_n \left(P_n[\rho(\eta_\mathrm{ch})]+\delta_n\right) \log\left( P_n[
    \rho(\eta_\mathrm{ch})] + \delta_n \right)\\
  &=  -\sum_n \left(P_n[\rho(\eta_\mathrm{ch})]+\delta_n\right)\left( \log  P_n[
    \rho(\eta_\mathrm{ch})] + \frac{\delta_n}{ P_n[\rho(\eta_\mathrm{ch})]\ln 2} \right)\\
  &=S(\rho(\eta_\mathrm{ch})) - \sum_n \delta_n \log P_n[\rho(\eta_\mathrm{ch})]\;.
\end{align}
Using \eqref{eq:38} and the expression for $P_n[\rho(\eta_\mathrm{ch})]$ from~(\ref{eq:39}), the Holevo information is
\begin{align}\label{eq:42}
    S(\rho_\text{ave}) - S(\rho(\eta_\mathrm{ch})) &= -\sum_n \delta_n \left( n \log\frac{n_\mathrm{ch}}{1+n_\mathrm{ch}}-\log(1+n_\mathrm{ch})\right)\\
    &=\sum_n n \delta_n \log \frac{1+n_\mathrm{ch}}{n_\mathrm{ch}} \\
    &= \left(n_\text{ave}-n_\mathrm{ch}\right)\log \frac{1+n_\mathrm{ch}}{n_\mathrm{ch}}\\
    &= \eta_\mathrm{ch} E \log \frac{1+n_\mathrm{ch}}{n_\mathrm{ch}} \;,
\end{align}
which gives a tight upper bound for the lossy thermal channel capacity given resource state energy $E$.

\section{Wigner functions and photon number distribution plots}\label{app:Wigner_fun}

\begin{figure*}[!h]
  \centering
\pgfplotsset{compat=newest,height=7cm,width=7cm}
\pgfplotsset{
  table/search path={figures/wigPlot}}
\begin{tikzpicture}
  \begin{axis}
    [mywigaxis,legend pos=south west,
    y tick label style={/pgf/number format/.cd,
            fixed,fixed zerofill,precision=1,/tikz/.cd} ]
    \addplot+[mark=none,line width=1pt] table [col sep=comma,x index=0, y index=2]{wigOpt_e3_1d.csv};
    \addlegendentry{optimal state};
    \addplot+[mark=none,line width=1pt] table [col sep=comma,x index=0, y index=1]{wigOpt_e3_1d.csv};
    \addlegendentry{squeezed state};
  \end{axis}
\end{tikzpicture}
\pgfplotsset{
  table/search path={figures/wigPlot}}
\begin{tikzpicture}
  \begin{axis}[mywigaxis,ybar,restrict x to domain=0:16,
    bar width=3pt, 
    ybar=0pt, 
         y tick label style={/pgf/number format/.cd,
          fixed,fixed zerofill,/tikz/.cd},
            enlarge y limits  = 0.02,enlarge x limits  = 0.05, tickwidth=0pt,
    xtick={0,2,...,20}]
	\addplot table [col sep=comma,x index=0,y index=1]{pnOpt_e3.csv};
        \addlegendentry{optimal state};
	\addplot table [col sep=comma,x index=0,y index=2]{pnOpt_e3.csv};
    \addlegendentry{squeezed state};
	\end{axis}
\end{tikzpicture}
\caption{\label{fig:wigOpt}(left) Wigner function of the averaged
  state with an optimal state in Eq.~(\ref{eq:optngs}) and a displaced
  squeezed state resource at $E=3$. Each displaced squeezed state
  codeword has $5.40$ dB of squeezing and an amplitude of
  $1.60$. (right) Photon number distribution for the average
  state. The optimal state's Fock state population follows a geometric
  distribution by construction. The fidelity between the optimal
  resource and the displaced squeezed state is 0.996.}
\end{figure*}
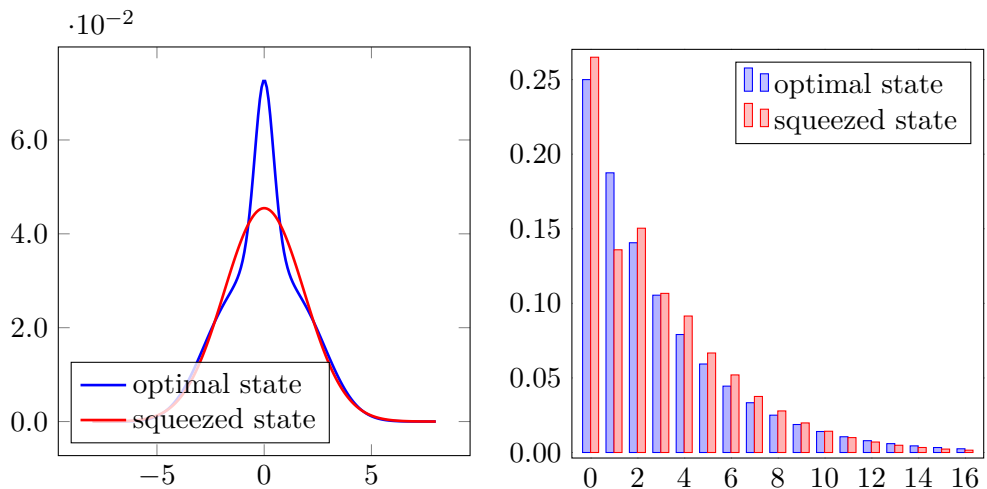

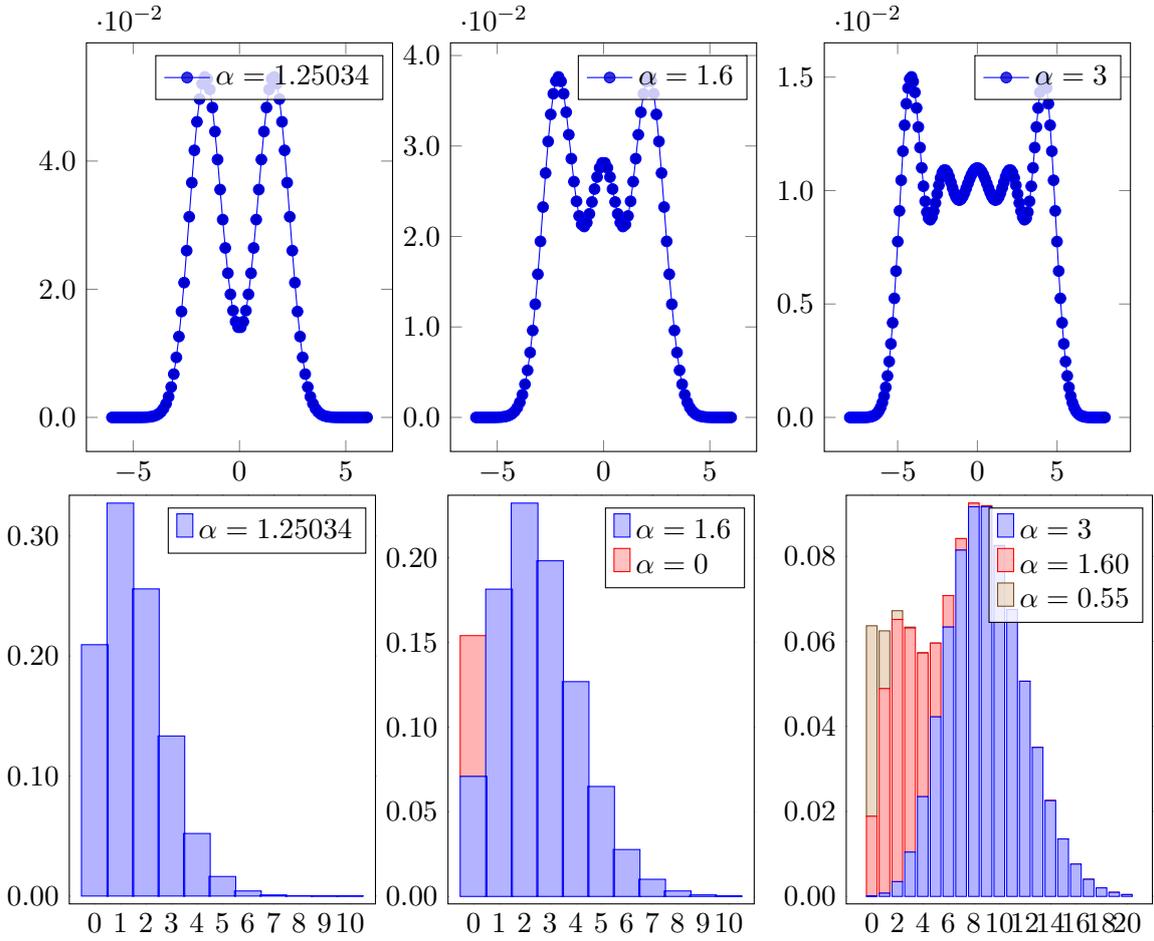
\begin{figure}[!h]
    \centering
    \pgfplotsset{compat=newest,height=7cm,width=7cm}
    \pgfplotsset{
  table/search path={figures/wigPlot}}
\begin{tikzpicture}
  \begin{axis}
        [mywigaxis,width=0.33\columnwidth,y tick label style={/pgf/number format/.cd,
            fixed,fixed zerofill,precision=1,/tikz/.cd} ]
    \addplot+[] table [col sep=comma]{wig125_1d.csv};
    \addlegendentry{$\alpha=1.25034$};
  \end{axis}
\end{tikzpicture}
\begin{tikzpicture}
  \begin{axis}
        [mywigaxis,width=0.33\columnwidth,y tick label style={/pgf/number format/.cd,
            fixed,fixed zerofill,precision=1,/tikz/.cd} ]
    \addplot+[] table [col sep=comma]{wig16_1d.csv};
    \addlegendentry{$\alpha=1.6$};
  \end{axis}
\end{tikzpicture}
\begin{tikzpicture}
  \begin{axis}
    [mywigaxis,width=0.33\columnwidth,y tick label style={/pgf/number format/.cd,
            fixed,fixed zerofill,precision=1,/tikz/.cd} ]
    \addplot+[] table [col sep=comma]{wig30_1d.csv};
        \addlegendentry{$\alpha=3$};
  \end{axis}
\end{tikzpicture}
    \pgfplotsset{
  table/search path={figures/wigPlot}}
\begin{tikzpicture}
  \begin{axis}[mywigaxis,width=0.33\columnwidth,ybar stacked,restrict x to domain=0:10,
        y tick label style={/pgf/number format/.cd,
          fixed,fixed zerofill,/tikz/.cd},
    enlarge y limits  = 0.02,enlarge x limits  = 0.1, tickwidth=0pt,
    xtick={0,...,10}]
    \addplot table [col sep=comma,x index=0,y index=1]{pn125.csv};
    \addlegendentry{$\alpha=1.25034$};
	\end{axis}
\end{tikzpicture}
\begin{tikzpicture}
  \begin{axis}[mywigaxis,width=0.33\columnwidth,ybar stacked,restrict x to domain=0:10,
    y tick label style={/pgf/number format/.cd,
      fixed,fixed zerofill,/tikz/.cd},
        enlarge y limits  = 0.02,enlarge x limits  = 0.1, tickwidth=0pt,
    xtick={0,...,10}]
    \addplot table [col sep=comma,x index=0,y index=1]{pn16.csv};
    \addlegendentry{$\alpha=1.6$};
    \addplot table [col sep=comma,x index=0,y index=2]{pn16.csv};
    \addlegendentry{$\alpha=0$};
	\end{axis}
\end{tikzpicture}
\pgfplotsset{scaled y ticks=false}
\begin{tikzpicture}
  \begin{axis}[mywigaxis,width=0.33\columnwidth,ybar stacked,restrict x to domain=0:20,
    bar width=4pt,
        y tick label style={/pgf/number format/.cd,
          fixed,fixed zerofill,/tikz/.cd},
            enlarge y limits  = 0.02,enlarge x limits  = 0.1, tickwidth=0pt,
    xtick={0,2,...,20}]
	\addplot table [col sep=comma,x index=0,y index=1]{pn30.csv};
    \addlegendentry{$\alpha=3$};
	\addplot table [col sep=comma,x index=0,y index=2]{pn30.csv};
    \addlegendentry{$\alpha=1.60$};
	\addplot table [col sep=comma,x index=0,y index=3]{pn30.csv};
    \addlegendentry{$\alpha=0.55$};
	\end{axis}
\end{tikzpicture}
    \caption{(top) Wigner function for the average state of optimal encoding at $T=0$. (bottom) Photon number distribution for the average state of optimal encoding. \label{fig:wigPlot}}
\end{figure}
We plot out the cross-section (at $p=0$) of the Wigner function
$W(x,p)$ for some of the average states in the main text in
Fig~\ref{fig:wigPlot} (top).
When $|\alpha_\mathrm{max}|$ is less than 1.25034 the optimal encoding
is to use one ring at $\eta=1$. When $|\alpha_\mathrm{max}|$ is larger
than 1.25034 (but only up to a certain point), the optimal encoding now includes
adding a little bit of the vacuum state. As $|\alpha_\mathrm{max}|$
increases further, the optimal encoding will include more rings and
tends to become flatter. Figure~\ref{fig:wigPlot} (bottom) shows the
corresponding photon number distribution of the average
state. However, the flat distribution (Figure~\ref{fig:bigAlpHat}) is not optimal. It performs
worse than the 3-ring distribution.

\begin{figure*}[!ht]
  \centering
\pgfplotsset{compat=newest,height=7cm,width=7cm}
\pgfplotsset{
  table/search path={figures/wigPlot}}
\begin{tikzpicture}
  \begin{axis}
        [mywigaxis,y tick label style={/pgf/number format/.cd,
            fixed,fixed zerofill,precision=1,/tikz/.cd} ]
    \addplot+[] table [col sep=comma]{wig30Flat_1d.csv};
    \addlegendentry{$\alpha=3$};
  \end{axis}
\end{tikzpicture}
\pgfplotsset{
  table/search path={figures/wigPlot}}
\begin{tikzpicture}
  \begin{axis}[mywigaxis,ybar stacked,restrict x to domain=0:20,
    bar width=4pt,
        y tick label style={/pgf/number format/.cd,
          fixed,fixed zerofill,/tikz/.cd},
            enlarge y limits  = 0.02,enlarge x limits  = 0.1, tickwidth=0pt,
    xtick={0,2,...,20}]
	\addplot table [col sep=comma,x index=0,y index=1]{pn30Flat.csv};
    \addlegendentry{$\alpha=3$};
	\end{axis}
\end{tikzpicture}
\caption{Wigner function (left) and photon number distribution of a flat distribution (with the resource state as the coherent state $\ket{3}$). But this is not optimal and does not perform as well as the 3-ring distribution in Figure~\ref{fig:wigPlot}. \label{fig:bigAlpHat}}
\end{figure*}
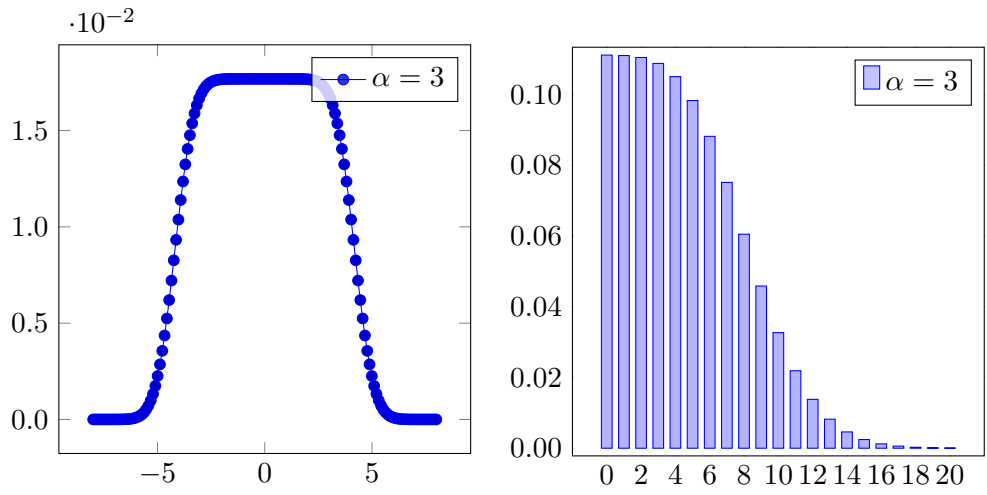

\end{document}